\newcommand{\blind}{1}
\definecolor {processblue}{cmyk}{0.96,0,0,0}
\definecolor{LightCyan}{rgb}{0.88,1,1}
\newtheorem{theorem}{Theorem}
\newtheorem{lemma}{Lemma}
\newtheorem{proposition}{Proposition}
\newtheorem{corollary}{Corollary}
\theoremstyle{remark}
\newtheorem{definition}{Definition}
\newtheorem{example}{Example}
\newtheorem{assumption}{Assumption}
\newtheorem{remark}{Remark}
\newcommand*{\addFileDependency}[1]{
	\typeout{(#1)}
	\@addtofilelist{#1}
	\IfFileExists{#1}{}{\typeout{No file #1.}}
}
\newcommand{\var}{\text{var}}
\newcommand{\Pb}{\mathbb{P}}
\newcommand{\Pn}{\mathbb{P}_n}
\newcommand{\Un}{\mathbb{U}_n}
\newcommand{\E}{\mathbb{E}}
\newcommand{\R}{\mathbb{R}}
\def\expit{\text{expit}}
\newcommand{\one}{\mathbbm{1}}
\begin{document}

	\def\spacingset#1{\renewcommand{\baselinestretch}%
		{#1}\small\normalsize} \spacingset{1}

	
	\if1\blind
	{
		\title{\bf Minimax optimal subgroup identification}
		\author{ Matteo Bonvini\thanks{Department of Statistics \& Data Science, Carnegie Mellon University, 5000 Forbes Avenue, Pittsburgh, PA 15213. Email: matteobonvini@gmail.com.} \and Edward H. Kennedy\thanks{Associate Professor, Department of Statistics \& Data Science, Carnegie Mellon University, 5000 Forbes Avenue, Pittsburgh, PA 15213. Email: edward@stat.cmu.edu.} \and Luke J. Keele\thanks{Research Associate Professor, Department of Surgery, Perelman School of Medicine, University of Pennsylvania, 3400 Spruce Street, 4 Silverstein, Philadelphia, PA 19104.	Email: luke.keele@uphs.upenn.edu}}
		\date{ \today \\ \medskip \textit{Preliminary draft. Comments welcome.}}
		\maketitle
	} \fi
	
	\if0\blind
	{
		\bigskip
		\bigskip
		\bigskip
		\begin{center}
			{\bf Title}
		\end{center}
		\medskip
	} \fi
	
	\bigskip
	\begin{abstract}
		Quantifying treatment effect heterogeneity is a crucial task in many areas of causal inference, e.g. optimal treatment allocation and estimation of subgroup effects. We study the problem of estimating the level sets of the conditional average treatment effect (CATE), identified under the no-unmeasured-confounders assumption. Given a user-specified threshold, the goal is to estimate the set of all units for whom the treatment effect exceeds that threshold. For example, if the cutoff is zero, the estimand is the set of all units who would benefit from receiving treatment. Assigning treatment just to this set represents the optimal treatment rule that maximises the mean population outcome. Similarly, cutoffs greater than zero represent optimal rules under resource constraints. Larger cutoffs can also be used for anomaly detection, i.e., finding which subjects are most affected by treatments. Being able to accurately estimate CATE level sets is therefore of great practical relevance. The level set estimator that we study follows the plug-in principle and consists of simply thresholding a good estimator of the CATE. While many CATE estimators have been recently proposed and analysed, how their properties relate to those of the corresponding level set estimators remains unclear. Our first goal is thus to fill this gap by deriving the asymptotic properties of level set estimators depending on which estimator of the CATE is used. Next, we identify a minimax optimal estimator in a model where the CATE, the propensity score and the outcome model are H\"{o}lder-smooth of varying orders. We consider data generating processes that satisfy a margin condition governing the probability of observing units for whom the CATE is close to the threshold. We investigate the performance of the estimators in simulations and illustrate our methods on a dataset used to study the effects on mortality of laparoscopic vs open surgery in the treatment of various conditions of the colon. 
	\end{abstract}

\section{Introduction}

Much empirical research focuses on estimating causal effects. One commonly estimated causal effect is the average treatment effect (ATE), which is the difference in average outcome if everyone in the population, versus no one, receives treatment. By definition, the ATE is an aggregate measure of treatment efficacy that does not capture any effect heterogeneity. An alternative measure of treatment effect is the conditional average treatment effect (CATE), which is the ATE restricted to a subpopulation of interest. The subpopulation is typically defined by the values of some \textit{a priori} selected variables known as \textit{effect modifiers}. One natural extension of the CATE is to estimate the set of units with treatment effects larger (or smaller) than some user-specified threshold. For example, when the threshold is zero, assigning treatment to only those units with a positive treatment effect is the optimal rule maximizing the mean outcome in the population (see, e.g, \cite{robins2004optimal}, \cite{hirano2009asymptotics}, \cite{chakraborty2013statistical}, and \cite{luedtke2016statistical}). 

To consider this problem, informally, we define $Y$ as the outcome, $A$ as an indicator for treatment, and $X$ as measured confounders and simultenously effect modifiers. Using these terms, the CATE $\tau(x)$ is equal to $\tau(x) = \E(Y \mid A = 1, X = x) - \E(Y \mid A = 0, X = x)$, and the ATE is $\E\{\tau(X)\}$. Our target of inference, the upper level set of the CATE at $\theta$, is
\begin{align*}
	\Gamma(\theta) = \{x \in \mathcal{X}: \tau(x) > \theta\}
\end{align*}
We assume the level $\theta \in \R$ to be user-specified. For some estimator $\widehat\tau(x)$ of $\tau(x)$, we estimate the level set, $\Gamma(\theta)$, with 
\begin{align*}
	\widehat\Gamma(\theta) = \{x \in \mathcal{X}: \widehat\tau(x) > \theta\}.
\end{align*}
This estimator defines the set of study units with estimated CATEs that are greater than $\theta$. Clearly, an estimator for $\Gamma(\theta)$ depends on an estimator for $\tau(x)$, and the performance of $\widehat\Gamma(\theta)$ will be affected by how well $\tau(x)$ can be estimated. Yet, we will show that estimating $\Gamma(\theta)$ can be an easier statistical problem than estimating $\tau(x)$ itself on its support. Intuitively, one needs to be able to estimate $\tau(x)$ accurately only in regions of the covariates' space where $\tau(x)$ is close to $\theta$. Further, if $\tau(X)$ has a bounded density and a particular loss function is used, we will show that the convergence rate of $\widehat\Gamma(\theta)$ to $\Gamma(\theta)$ will generally be faster than that of $\widehat\tau(x)$ to $\tau(x)$. 

Recent work has developed a number of proposals for CATE estimation with an emphasis on using nonparametric methods borrowed from the machine learning (ML) literature \citep{athey2016recursive, foster2019orthogonal, semenova2021debiased, nie2021quasi, kennedy2020optimal, kennedy2022minimax, wager2018estimation, kunzel2019metalearners, hahn2020bayesian, imai2013estimating, shalit2017estimating}. In our work, we focus on a class of nonparametric estimators for the CATE that are embedded in a meta-learner framework that separates estimation of the CATE into a multi-step regression procedure. In the first step, a set of nuisance functions is estimated using flexible machine learning models. Then, in the second-stage, an estimate of $\tau(x)$ is computed using the nuisance function estimates as inputs. More specifically, we focus on two recently proposed estimators of $\tau(x)$: the DR-Learner analyzed in \cite{kennedy2020optimal} and the Lp-R-Learner proposed in \cite{kennedy2022minimax}. The first one is a general estimation procedure based on a two-stage regression that can be computed using off-the-shelf software. The second is a more complicated estimator, which has been shown to be minimax optimal for an important set of models. 

We merge this work on flexible estimation of CATEs with the extensive literature on nonparametric estimation of (upper) level sets.
See, for examples, \citet{qiao2019nonparametric}, \citet{mammen2013confidence}, \citet{chen2017density}, \citet{rigollet2009optimal}, \citet{willett2007minimax} and references therein. The main difference between our work and this research is that in our context the level set is defined by the difference of two regressions, the optimal estimation of which can be considerably more involved than that of either regression. Within this literature, our work is closest to \cite{rigollet2009optimal}, and we use their general framework to analyze the performance of our estimators. 

Other streams of research closely related to our work are policy learning \citep{hirano2009asymptotics, athey2021policy, ben2022policy} and contextual bandits \citep{gur2022smoothness}. In the policy learning literature, it is typically assumed that the best policy belongs to some well-behaved and interpretable class of decision rules. This is different from the route we take in this work; instead of restricting the complexity of the level set class, we restrict the complexity of the CATE function in nonparametric models. In addition, while one of the core goals of the literature on contextual bandits is to identify regions of the covariates' space where the treatment effect is positive, this is usually done in settings where  the probability of taking a given action or receiving treatment, i.e., the propensity score, is under the experimenter's control and known. Instead, we consider observational studies where the propensity score is unknown. Finally, \cite{reeve2021optimal} has also considered a similar problem to the one discussed in this paper, but they require that the propensity score is known and the estimator appears to be more complicated.

\subsection{Our contribution}

The level set estimator that we study follows the plug-in principle and consists of simply thresholding an estimator of the CATE. To the best of our knowledge, how the properties of a CATE learner relate to those of the corresponding level set estimator has not been investigated in the literature yet. As such, our first goal is to derive the asymptotic properties of level set estimators depending on which estimator of the CATE is used.  We calculate the risk for estimating $\Gamma(\theta)$ by thresholding a general estimator of the CATE required to satisfy a particular exponential inequality. Then, we specialize the results when the CATE is estimated with the DR-Learner or the Lp-R-Learner. Further, we show that if the Lp-R-Learner is used, the risk achieved is minimax optimal, under certain conditions. The optimality of thresholding the Lp-R-Learner for estimating CATE level sets had yet to be established. As an intermediate step for obtaining our main results, we derive exponential inequalities for CATE estimators based on linear smoothing, which might be of independent interest.

We establish the minimax optimal rate for estimating $\Gamma(\theta)$ in H\"{o}lder smoothness models where $\tau(x)$ and the nuisance functions have potentially different smoothness levels. \cite{kennedy2022minimax} have recently shown that, from a minimax optimality point of view, the parameter $\tau(x)$ shares features of a functional with nuisance components \citep{robins2009semiparametric, robins2017minimax} and a standard nonparametric regression \citep{tsybakov2004introduction}. Building upon their work and \cite{rigollet2009optimal}, we show that $\Gamma(\theta)$ behaves as a hybrid parameter not only exhibiting features similar to those of $\tau(x)$, but also those of a Bayes classifier. Effectively, we connect the problem of estimating CATE level sets to the domains of classification, nonparametric regression and functional estimation. We also briefly discuss the construction of confidence sets for $\Gamma(\theta)$ based on the distribution of $\sup_{x\in \mathcal{X}}|\widehat\tau(x) - \tau(x)|$. Finally, we illustrate our methods in simulations and with a real dataset used to study the effect of laparoscopic surgery for partial colectomy on mortality and complications. 
\section{Notation}
We assume that $X$ has at least one continuous component and denote the marginal CDF of X by $F(x)$, with corresponding density $f(x)$ with respect to the Lebesgue measure, which we assume to be uniformly bounded. We also let $d$ denote the dimension of $X$ and let $\mathcal{X}$ be the set of all $x \in \R^d$ such that $f(x) > 0$. 

We define the nuisance functions:
\begin{align*}
	& \pi(X) = \Pb(A = 1 \mid X), \ \text{the probability of receiving treatment or propensity score}, \\
	& \mu(X) = \E(Y \mid X), \ \text{the outcome model}, \\
	& \mu_a(X) = \E(Y \mid A = a, X),  \ \text{the outcome model for units treated with } A = a, \\
	& \text{and } \tau(X) = \mu_1(X) - \mu_0(X), \ \text{the CATE identified under no-unmeasured-confounding}.
\end{align*}
Unless we need to keep track of constants, we will adopt the notation $a \lesssim b$ to mean that there exists a constant $C$ such that $a \leq C b$. We assume all the nuisance functions are uniformly bounded and $\pi(x)$ is also bounded away from 0 and 1. To keep the notation as light as possible, we will write $\Gamma$ to mean $\Gamma(\theta)$. 

Let $s = (s_1, \ldots, s_d) \in \mathbb{N}^d$, $|s|= \sum_{i = 1}^d s_i$, $s! = s_1! \cdots s_d!$ and $D^s = \frac{\partial^{s_1 + \ldots s_d}}{\partial x_1^{s_1} \cdots \partial x_d^{s_d}}$ be the differential operator. For $\beta > 0$, let $\lfloor \beta \rfloor$ denote the largest integer strictly less than $\beta$. Given $x \in \R^d$ and $f$ a $\lfloor \beta \rfloor$-times continuously differentiable function, let 
$$f_x(u) = \sum_{|s| \leq \lfloor \beta \rfloor} \frac{(u - x)^s}{s!} D^s f(x)$$
denote its Taylor polynomial approximation of order $\lfloor \beta \rfloor$ at $u = x$.
\begin{definition}[locally H\"{o}lder-$\beta$ function]\label{def:holder}
	A function $f$ is ``$\beta$-smooth locally around a point $x_0 \in \mathcal{X}_0$" if it is $\lfloor \beta \rfloor$-times continuosly differentiable at $x_0$ and there exists a constant $L$ such that
	\begin{align*}
		\left|f(x) - f_{x_0} (x)\right| \leq L \|x - x_0\|^{\beta} \text{ for all } x \in B(x_0, r), r > 0. 
	\end{align*}
\end{definition}
There are a few ways to measure the performance of $\widehat\Gamma(\theta)$, two of which are
\begin{itemize}
	\item $d_\Delta(\widehat\Gamma, \Gamma) = \int_{\widehat\Gamma \Delta \Gamma} f(x) dx$, for $\widehat\Gamma \Delta \Gamma = (\widehat\Gamma^c \cap \Gamma) \cup (\widehat\Gamma \cap \Gamma^c)$ (set difference);
	\item $d_H(\widehat\Gamma, \Gamma) = \int_{\widehat\Gamma \Delta \Gamma} |\tau(x) - \theta| f(x) dx$ (penalized set difference).
\end{itemize}
The first one is simply the $\Pb_X$-measure of the set difference between $\widehat\Gamma$ and $\Gamma$. The second one is the $\Pb_X$-measure of the set difference simply with a smaller penalty assigned to errors made by including / excluding values of $X$ for which the CATE is close to $\theta$. In particular, whether or not $x$ such that $\tau(x) = \theta$ is included or excluded from the set $\widehat\Gamma$ has no impact on the error measured by $d_H(\widehat\Gamma, \Gamma)$. If $\theta = 0$, this means that, according to this metric, it does not matter whether we assign treatment to units with zero treatment effect. 
We will focus on $d_H(\widehat\Gamma, \Gamma)$ and analyze the risk
\begin{align} \label{eq:risk}
	\E\left\{d_H(\widehat\Gamma, \Gamma)\right\} = \E\left\{ \int_{\widehat\Gamma \Delta \Gamma} |\tau(x) - \theta| f(x) dx \right\},
\end{align}
which we represent in Figure \ref{fig:loss} for the case $d = 1$ and $X \sim \text{Unif}(0, 1)$. 
\begin{figure}[!h]
	\centering
	\includegraphics[scale=0.5]{./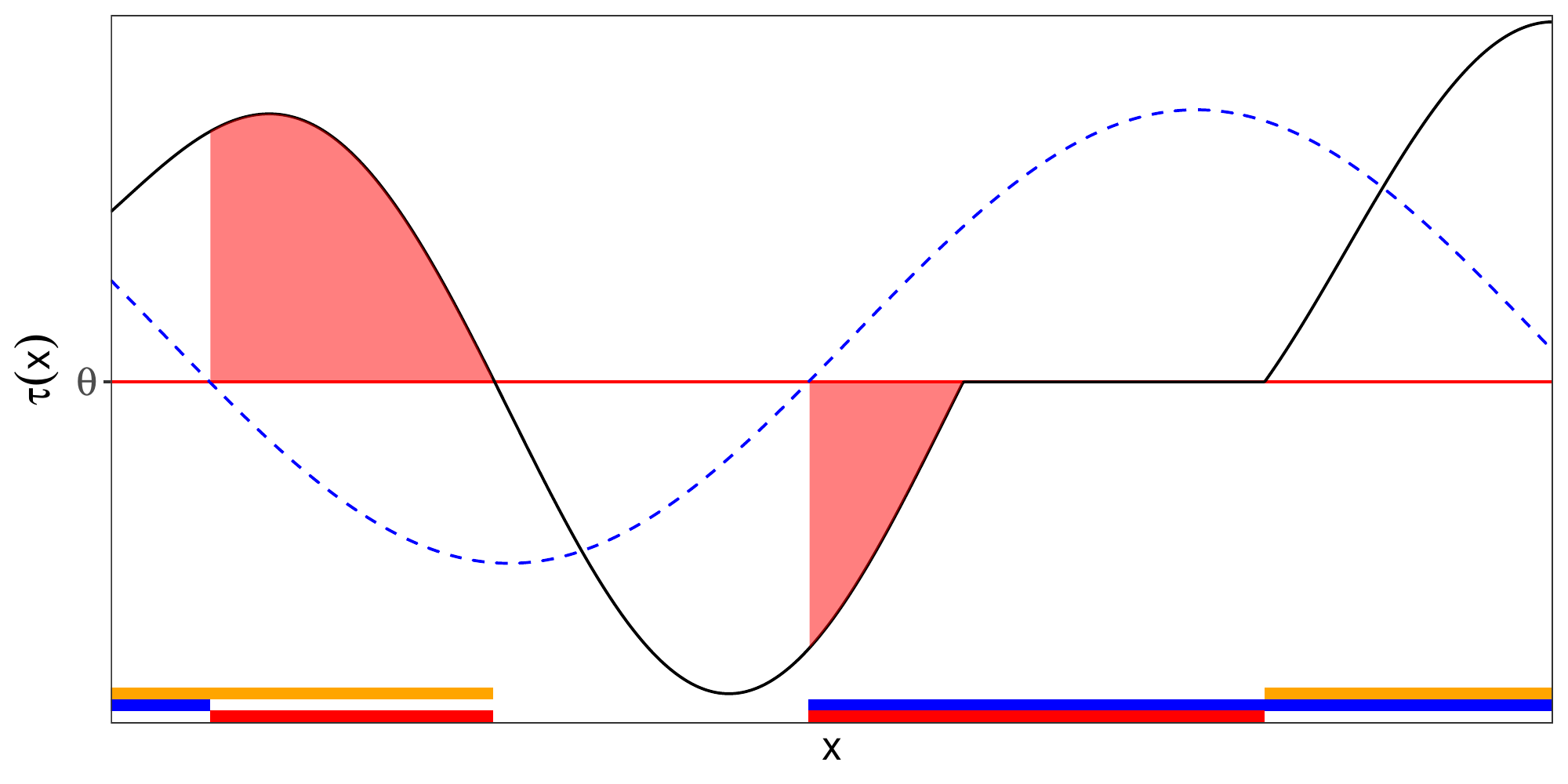}
	\caption{\label{fig:loss} Representation of the loss in eq. \eqref{eq:risk} for $d = 1$ and $X$ uniformly distributed. The solid line is $\tau(x)$, the dotted line is $\widehat\tau(x)$, and the shaded area equals $d_H(\widehat\Gamma, \Gamma)$. The orange portion of the $x$-axis represents $\Gamma(\theta)$, the blue one $\widehat\Gamma(\theta)$ and the red one $\widehat\Gamma \Delta \Gamma$.}
\end{figure}
\begin{remark}\label{remark:willet}
	\cite{willett2007minimax} study estimation of the level sets of a function using dyadic trees. Their approach, adjusted to our setting, would find $\widehat\Gamma(\theta)$ by minimizing an estimate of
	\begin{align*}
		R\{\overline\Gamma(\theta)\} \propto \int \{\theta - \tau(x)\}[\one\{x \in \overline\Gamma(\theta)\} - \one\{x \in \overline\Gamma^c(\theta)\}] f(x) dx
	\end{align*}
	as the risk function. They show that minimizing $R\{\overline\Gamma(\theta)\}$ is equivalent to minimizing the excess risk, i.e.
	\begin{align*}
		R\{\widehat\Gamma(\theta)\} - R\{\Gamma(\theta)\}= \int_{\Gamma \Delta \widehat\Gamma} |\tau(x) - \theta| f(x) dx
	\end{align*}
	which is equivalent to the loss $d_H(\widehat\Gamma, \Gamma)$ that we use in this paper. We leave the study of empirical risk minimizers for estimating CATE level sets for future work. 
\end{remark}
As described below, the performance of our estimators will depend crucially on the difficulty in estimating the CATE around the level $\theta$. The intuition is that, to estimate $\Gamma(\theta)$, one needs to estimate the sign of $\tau(x) - \theta$ well and, in regions of the covariates' space where $\tau(x)$ is far from $\theta$, estimating this sign well does not require estimating $\tau(x)$ precisely. On the contrary, for values of $x$ such that $\tau(x)$ is close to $\theta$, estimating $\tau(x)$ precisely plays an important role in determining the sign of $\tau(x) - \theta$. For example, $\tau(x)$ may be a very complex function far away from $\theta$ but, as long as it is well-behaved and easy to estimate close to $\theta$, one may hope to still be able to estimate $\Gamma(\theta)$ well. In this respect, a typical example that we consider is when $\tau(x)$ is $\gamma$-smooth in a neightborhood around $\theta$ and $\gamma{'}$-smooth everywhere else. 
\section{Estimation}
\subsection{Estimand \& setup}
The goal of this section is to provide an upper bound on the risk \eqref{eq:risk} for generic CATE estimators. Following \cite{rigollet2009optimal}, we introduce a margin assumption governing the mass concentrated around the level set $\chi = \{ x \in \mathcal{X}: \tau(x) = \theta\}$ encoded below. 
\begin{assumption}\label{assumption_margin}
	There exist positive constants $t_0$ and $c_0$ such that such that, for all $t \in (0, t_0]$, it holds that $\Pb_X(0 < |\tau(X) - \theta| < t) \leq c_0t^\xi$ . 
\end{assumption}
The margin condition (Assumption \ref{assumption_margin}) can yield fast convergence rates when the performance is measured by the risk in eq. \eqref{eq:risk}. Crucially, it can be shown to hold with exponent $\xi = 1$ as long as the density of $\tau(X)$ is bounded, which can be satisfied in many applications. The following two propositions are restatements of Lemmas 5.1 and 5.2 in \cite{audibert2007fast} written for the problem considered here; we provide their proofs for completeness. 
\begin{proposition}\label{prop:general}
	Under Assumption \ref{assumption_margin}, it holds that
	\begin{align*}
		\E\left\{d_H(\widehat\Gamma, \Gamma)\right\} \leq \E\left[\int_{\mathcal{X}} \one\left\{|\tau(x) - \theta| \leq \| \widehat\tau - \tau\|_\infty\right\}|\tau(x) - \theta| f(x) dx\right] \lesssim \E\left(\|\widehat\tau - \tau\|_\infty^{1 + \xi}\right)
	\end{align*}
\end{proposition}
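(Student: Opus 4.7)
The result reduces to two short observations, corresponding to the two inequalities in the display. \emph{For the first inequality}, the plan is a pointwise set-inclusion argument: if $x \in \widehat\Gamma \Delta \Gamma$, then $\widehat\tau(x)$ and $\tau(x)$ lie on opposite sides of $\theta$, so $|\tau(x) - \theta| \leq |\widehat\tau(x) - \tau(x)| \leq \|\widehat\tau - \tau\|_\infty$. Therefore $\widehat\Gamma \Delta \Gamma \subseteq \{x \in \mathcal{X} : |\tau(x) - \theta| \leq \|\widehat\tau - \tau\|_\infty\}$, so enlarging the domain of integration gives the first inequality, and taking expectations preserves it.

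\emph{For the second inequality}, I would condition on the sample, let $R = \|\widehat\tau - \tau\|_\infty$ denote the (random) realized sup-norm error, and bound the $X$-integral using the margin condition. On the event $\{|\tau(X) - \theta| \leq R\}$ the integrand is at most $R$, so
\begin{align*}
\E_X\!\left[\one\{|\tau(X) - \theta| \leq R\}\,|\tau(X) - \theta|\right] \leq R\cdot \Pb_X(|\tau(X) - \theta| \leq R).
\end{align*}
When $R \leq t_0$, Assumption \ref{assumption_margin} bounds the probability by $c_0 R^\xi$, yielding $c_0 R^{1+\xi}$ (the $\{\tau(X) = \theta\}$ mass contributes nothing because the integrand vanishes there). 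Taking expectation over the data then produces the desired $\E(\|\widehat\tau - \tau\|_\infty^{1+\xi})$.

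The only wrinkle is that the margin condition is stated only for $t \leq t_0$, so the argument needs a separate case $R > t_0$. Here I would exploit the assumed uniform boundedness of the nuisance functions, which gives a constant $M$ with $|\tau(x) - \theta| \leq M$; then the inner integral is trivially at most $M \leq (M/t_0^{1+\xi})\,R^{1+\xi}$, so the bound $\lesssim R^{1+\xi}$ persists up to constants. This case split is the only technical nuisance; everything else is driven by the geometric fact that symmetric-difference errors occur only in the thin band around the level set $\chi$, and the margin condition controls the mass of that band.
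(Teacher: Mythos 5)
Your proof is correct and follows essentially the same route as the paper. For the first inequality, the paper writes the one-line chain
$\one\{x \in \widehat\Gamma \Delta \Gamma\} = |\one\{\widehat\tau(x) > \theta\} - \one\{\tau(x) > \theta\}| \leq \one\{|\tau(x) - \theta| \leq |\widehat\tau(x) - \tau(x)|\} \leq \one\{|\tau(x) - \theta| \leq \|\widehat\tau - \tau\|_\infty\}$
and outsources the middle step to Lemma 1 of Kennedy et al.\ (2020); you instead give the direct geometric justification (that on the symmetric difference one of $\widehat\tau(x),\tau(x)$ is $>\theta$ while the other is $\leq\theta$, so $\theta$ lies between them), which is exactly what that cited lemma formalizes. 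For the second inequality ($\lesssim \E\|\widehat\tau - \tau\|_\infty^{1+\xi}$) the paper gives no argument at all, so your conditioning-on-$R=\|\widehat\tau-\tau\|_\infty$ step, the bound $R\cdot\Pb_X(0<|\tau(X)-\theta|\leq R)$, and the case split at $R>t_0$ using boundedness of $\tau$ are the correct missing details. Two small points of care worth flagging: (i) Assumption~\ref{assumption_margin} uses the strict event $\{0<|\tau(X)-\theta|<t\}$, so to control $\Pb_X(0<|\tau(X)-\theta|\leq R)$ you should either take a limit $t\downarrow R$ or absorb a factor $2^\xi$ by using $t=2R$ (this only perturbs constants, which the $\lesssim$ absorbs); and (ii) the ``opposite sides'' phrasing is slightly loose since one of the two values can equal $\theta$ exactly, but the displayed bound $|\tau(x)-\theta|\leq|\widehat\tau(x)-\tau(x)|$ still holds in that boundary case, as your argument implicitly uses.
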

\begin{proof}
	The proposition simply follows from the observation that
	\begin{align*}
		\one\left\{x \in \widehat\Gamma(\theta) \Delta \Gamma(\theta) \right\} & = \left| \one\left\{ \widehat\tau(x) - \theta > 0\right\} - \one\left\{\tau(x) - \theta > 0 \right\} \right| \\
		& \leq \one\left\{|\tau(x) - \theta| \leq | \widehat\tau(x) - \tau(x)|\right\} \\
		& \leq \one\left\{|\tau(x) - \theta| \leq \| \widehat\tau - \tau\|_\infty\right\}
	\end{align*}
	The second inequality follows by Lemma 1 in \cite{kennedy2020sharp}. 
\end{proof}
Proposition \ref{prop:general} applies to any estimator $\widehat\tau$ of $\tau$ and links the error in estimating the upper level sets to the error in estimating $\tau$. In particular, it is often the case that $\|\widehat\tau - \tau\|_\infty$ is of the same order of the pointwise error $|\widehat\tau(x) - \tau(x)|$ up to a log factor. In this sense, Proposition \ref{prop:general} would typically match the sharper result described in Lemma \ref{lemma:lemma3.1} up to a log factor provided that estimating $\tau(x)$ near the level $\theta$ is at least as difficult as estimating it on the entire domain. The next proposition links the level set estimator error to the $L_p$ norm of the error in estimating $\tau$. This proposition, however, appears to give results that match those in Lemma \ref{lemma:lemma3.1} only if the margin condition does not hold, i.e. $\xi = 0$.
\begin{proposition}\label{prop:general_pnorm}
	Under Assumption \ref{assumption_margin}, it holds, for any $1 \leq p < \infty$:
	\begin{align*}
		\E\left\{d_H(\widehat\Gamma, \Gamma)\right\} \leq C_{\xi, p} \E\left\{\|\widehat\tau - \tau\|_{p}^{\frac{p(1 + \xi)}{p + \xi}}\right\}
	\end{align*} 
	for some constant $C_{\xi, p}$ depending on $p$ and $\xi$. 
\end{proposition}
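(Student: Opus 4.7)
The plan is to keep the pointwise indicator bound already isolated in (the proof of) Proposition~\ref{prop:general}, namely
\begin{align*}
d_H(\widehat\Gamma,\Gamma) \le \int |\tau(x)-\theta|\,\one\{|\tau(x)-\theta|\le g(x)\}\,f(x)\,dx,
\end{align*}
where $g(x)=|\widehat\tau(x)-\tau(x)|$, and then to split the indicator using an auxiliary threshold $t>0$. Since $|\tau-\theta|\le g$ forces either $|\tau-\theta|\le t$ or $g>t$, I would use $\one\{|\tau-\theta|\le g\}\le \one\{|\tau-\theta|\le t\}+\one\{g>t\}$, and since $|\tau-\theta|\le g$ is still in force on the second event, bound $|\tau-\theta|\,\one\{g>t\}\le g\,\one\{g>t\}$. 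This yields the pathwise decomposition
\begin{align*}
d_H(\widehat\Gamma,\Gamma) \le \int|\tau(x)-\theta|\,\one\{|\tau(x)-\theta|\le t\}\,f(x)\,dx \;+\; \int g(x)\,\one\{g(x)>t\}\,f(x)\,dx.
\end{align*}

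Next I would bound each piece. The first is purely distributional and the margin condition is tailor-made for it: since the contribution where $|\tau-\theta|=0$ vanishes, the integrand is at most $t\,\one\{0<|\tau-\theta|\le t\}$, so for $t\le t_0$ the integral is at most $c_0\,t^{1+\xi}$. For the second piece I would apply the Chebyshev-type inequality $g\,\one\{g>t\}\le g^p/t^{p-1}$, valid for any $p\ge 1$, which converts the truncated integral into $t^{-(p-1)}\|\widehat\tau-\tau\|_p^p$. Combining these gives the pathwise bound
\begin{align*}
d_H(\widehat\Gamma,\Gamma) \;\le\; c_0\,t^{1+\xi} \;+\; t^{-(p-1)}\,\|\widehat\tau-\tau\|_p^p \qquad \text{for every } t\in(0,t_0].
\end{align*}

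The key step, and what yields the strong bound $\E\{\|\widehat\tau-\tau\|_p^{p(1+\xi)/(p+\xi)}\}$ rather than the weaker $(\E\|\widehat\tau-\tau\|_p^p)^{(1+\xi)/(p+\xi)}$ one would obtain via Jensen, is to optimize $t$ \emph{pathwise}, i.e.\ as a function of the random quantity $\|\widehat\tau-\tau\|_p$, \emph{before} taking expectation. Balancing the two terms gives the choice $t\propto\|\widehat\tau-\tau\|_p^{p/(p+\xi)}$, which makes both contributions of order $\|\widehat\tau-\tau\|_p^{p(1+\xi)/(p+\xi)}$; taking expectation then produces the claim. The main obstacle is the constraint $t\in(0,t_0]$: on the event where the optimizing $t$ exceeds $t_0$, the chosen $t$ is inadmissible. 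I would dispatch that event using the uniform boundedness of $\mu_0,\mu_1$ assumed in Section~2, which bounds $|\tau-\theta|$ and hence gives a trivial constant upper bound on $d_H$; since $\|\widehat\tau-\tau\|_p^{p(1+\xi)/(p+\xi)}\gtrsim t_0^{1+\xi}$ on that event, this constant can be reabsorbed into a multiple of the desired quantity, which is exactly what produces the constant $C_{\xi,p}$ in the statement.
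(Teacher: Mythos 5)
Your proof is correct, and it differs from the paper's in one useful detail. Both proofs split at a threshold $t$ and optimize $t$ pathwise (as a function of $\|\widehat\tau-\tau\|_p$) before taking expectation, which is exactly what produces the stronger $\E\{\|\widehat\tau-\tau\|_p^{p(1+\xi)/(p+\xi)}\}$ rather than the Jensen-type $(\E\|\widehat\tau-\tau\|_p^p)^{(1+\xi)/(p+\xi)}$. The difference is in how the near-level region is controlled. You bound $|\tau(x)-\theta|\le t$ inside the indicator $\one\{0<|\tau-\theta|\le t\}$ and then hit the resulting probability with the margin condition, obtaining the deterministic $c_0 t^{1+\xi}$. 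The paper instead bounds $|\tau(x)-\theta|\le |\widehat\tau(x)-\tau(x)|$ in both pieces and then applies H\"older with conjugate exponents $p$ and $p/(p-1)$ on the first, obtaining the random quantity $\|\widehat\tau-\tau\|_p\, t^{\xi(p-1)/p}$ (again via the margin condition). Balancing against the Chebyshev-type term $t^{-(p-1)}\|\widehat\tau-\tau\|_p^p$ gives the same optimal scale $t\propto \|\widehat\tau-\tau\|_p^{p/(p+\xi)}$ and the same exponent in either case, so the two routes are essentially interchangeable; yours is arguably more transparent since it avoids H\"older entirely and applies the margin condition in its simplest form. You also explicitly address the constraint $t\le t_0$ (needed for the margin condition in both arguments) by invoking boundedness of $\tau$ on the exceptional event, a point the paper's proof leaves implicit; this is a genuine improvement in rigor and is where your constant $C_{\xi,p}$ picks up a contribution from $t_0$.
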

\begin{proof}
	It holds that
	\begin{align*}
		d_H(\widehat\Gamma, \Gamma) & \leq \int \one\left\{|\tau(x) - \theta| \leq | \widehat\tau(x) - \tau(x)|\right\} \one\left\{ 0 < |\tau(x) - \theta|\leq t\right\} |\tau(x) - \theta| f(x) dx \\
		& \hphantom{=} + \int \one\left\{|\tau(x) - \theta| \leq | \widehat\tau(x) - \tau(x)| \right\} \one\left\{ |\tau(x) - \theta| > t\right\} |\tau(x) - \theta| f(x) dx \\
		& \leq \int \one\left\{|\tau(x) - \theta| \leq | \widehat\tau(x) - \tau(x)|\right\} \one\left\{ 0 < |\tau(x) - \theta|\leq t\right\} |\tau(x) - \widehat\tau(x)| f(x) dx \\
		& \hphantom{=} + \int \one\left\{|\tau(x) - \theta| \leq | \widehat\tau(x) - \tau(x)| \right\} \one\left\{ |\tau(x) - \theta| > t\right\} |\tau(x) - \widehat\tau(x)| f(x) dx \\
		& \lesssim \|\widehat\tau - \tau\|_p t^{\frac{\xi}{p}(p-1)} + \frac{\|\widehat\tau - \tau\|^p_p}{t^{p-1}}
	\end{align*}
	by H\"{o}lder's inequality. Minimizing the RHS over $t$ yields the desired bound.
\end{proof}
Proposition \ref{prop:general} and \ref{prop:general_pnorm} show that larger values of $\xi$ make estimation of the upper level sets easier. However, as noted in \cite{audibert2007fast}, $\xi$ cannot be too large or else the class of distributions satisfying the margin condition becomes small. This is true, for example, in smoothness models where $\tau(x)$ is $\gamma$-smooth around the cutoff in the sense of Definition \ref{def:holder}. If $\tau(x)$ is smooth enough around the cutoff, it cannot jump away from the level $\theta$ too quickly. This means that the measure of the set where it stays close to the cutoff cannot be too small and thus $\xi$ cannot be too large. In particular, following the proof of Proposition 3.4 in \cite{audibert2007fast}, $\xi\min(1, \gamma) \leq 1$ is necessary for $\tau(x)$ to cross $\theta$ in the interior of the support of the distribution of $X$, when this has a density bounded above and below away from zero. 

The lemma below, which is essentially Lemma 3.1 in \cite{rigollet2009optimal} and Theorem 3.1 in \cite{audibert2007fast} adjusted for our setting, shows that if $\widehat\tau(x) - \tau(x)$ satisfies an exponential inequality, then the bound on the risk $\E\{d_H(\widehat\Gamma, \Gamma)\}$ can be sharpened relative to the results presented in Propositions \ref{prop:general} and \ref{prop:general_pnorm} above. Furthermore, the bound on the risk depends on how fast $\widehat\tau(x)$ converges to $\tau(x)$ for values of $x$ near the cutoff $\tau(x) = \theta$.
\begin{lemma} \label{lemma:lemma3.1}
	Fix $\eta > 0, \Delta > 0$ and let $D(\eta) = \left\{x \in \mathcal{X}: |\tau(x) - \theta| \ \leq \eta \right\}$. Let $a_n$, $b_n$ and $\delta_n$ be monotonically decreasing sequences. Suppose that
	\begin{enumerate}
		\item $b_n \leq c_1 (\log n)^{-1/\kappa_2 - \epsilon}$, with $\epsilon > 0$;
		\item $a_n \geq c_2 n^{-\mu}$ for some positive constants $c_2$ and $\mu$, and $a_n \leq b_n$; 
	\end{enumerate}  
	and that the following inequalities hold
	\begin{align}\label{eq:expo_ineq}
		& \Pb\left(|\widehat\tau(x) - \tau(x)| > t \right) \leq c_3 e^{-c_4 (t / a_n)^{\kappa_1}} + c_5\frac{\delta^{1 + \xi}_n}{t^{1 + \xi}} \\
		&  \text{ for all } x \in D(\eta) \text{ and } c_a a_n < t < \Delta, \text{ and} \nonumber \\
		& \Pb\left(|\widehat\tau(x) - \tau(x)| > t \right) \leq c_6 e^{-c_7(t / b_n)^{\kappa_2}} + c_8 \frac{\delta^{1 + \xi}_n}{t^{1 + \xi}}  \\
		& \text{ for all } x \not\in D(\eta) \text{ and } c_b b_n < t < \Delta \nonumber.
	\end{align}
	for some constants $c_a, c_b, c_1, \ldots, c_8, \kappa_1$, and $\kappa_2$. Then, $\E\{d_H(\widehat\Gamma, \Gamma)\} \lesssim a_n^{1 + \xi} + (c_5 \lor c_8) \delta_n^{1 + \xi} \log n$. In particular, if $c_5 = c_8 = 0$, then $\E\{d_H(\widehat\Gamma, \Gamma)\} \lesssim a_n^{1 + \xi}$. 
\end{lemma}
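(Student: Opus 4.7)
The starting point is the pointwise inclusion from the proof of Proposition~\ref{prop:general}, namely $\one\{x \in \widehat\Gamma\Delta\Gamma\} \leq \one\{|\tau(x)-\theta| \leq |\widehat\tau(x)-\tau(x)|\}$. Applying Fubini to the risk in \eqref{eq:risk} yields
\[
\E\{d_H(\widehat\Gamma,\Gamma)\} \leq \int_{\mathcal{X}} \Pb\bigl(|\widehat\tau(x)-\tau(x)| \geq t(x)\bigr)\, t(x)\, f(x)\, dx,
\]
where $t(x) := |\tau(x)-\theta|$. The plan is to partition $\mathcal{X}$ into three strata, match each one with the tightest available bound on that probability, and combine the margin condition (Assumption~\ref{assumption_margin}) with a dyadic decomposition to control the resulting integrals. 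The strata are $S_0 = \{x: t(x) \leq c_a a_n\}$, $S_1 = \{x: c_a a_n < t(x) \leq \eta\} \subseteq D(\eta)$, and $S_2 = \{x: t(x) > \eta\} \subseteq D(\eta)^c$. All bounds below are for $n$ large enough that $c_a a_n \leq t_0 \wedge \eta$ and $c_b b_n \leq \eta$; both hold eventually since $a_n, b_n \to 0$.

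On $S_0$ I would bound the probability trivially by $1$ and invoke the margin condition: since $t(x)\leq c_a a_n$ and $\Pb_X(0 < t(X) \leq c_a a_n) \leq c_0 (c_a a_n)^\xi$, the contribution is $\lesssim a_n^{1+\xi}$. On $S_1$ I would decompose into dyadic shells $S_{1,j} = \{x: 2^j c_a a_n < t(x) \leq 2^{j+1} c_a a_n\}$, $j = 0,1,\dots,J$, where $J$ is the smallest integer with $2^{J+1} c_a a_n \geq \eta$; the assumption $a_n \geq c_2 n^{-\mu}$ forces $J = O(\log n)$. On each shell, plug $t = t(x)$ into the first exponential inequality and bound both summands using $t(x) \geq 2^j c_a a_n$, and bound $\Pb_X(S_{1,j}) \leq c_0 (2^{j+1} c_a a_n)^\xi$ by the margin condition. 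The exponential piece contributes at most $(2^j a_n)^{1+\xi}\exp\{-c_4 (2^j c_a)^{\kappa_1}\}$ per shell, which is geometrically summable in $j$ and dominated by the $j=0$ term, giving $\lesssim a_n^{1+\xi}$ in total. The polynomial piece contributes $\lesssim c_5\, \delta_n^{1+\xi}$ per shell after the $2^j$ factors cancel (the exponents align: $(2^j a_n)^\xi \cdot (2^j a_n)/(2^j a_n)^{1+\xi} \asymp 1$), so summing across the $O(\log n)$ shells yields $\lesssim c_5\, \delta_n^{1+\xi} \log n$.

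On $S_2$, since $t(x) \geq \eta > c_b b_n$, the second exponential inequality applies. The tail $\exp\{-c_7(\eta/b_n)^{\kappa_2}\}$ is super-polynomially small: the hypothesis $b_n \leq c_1(\log n)^{-1/\kappa_2 - \epsilon}$ gives $(\eta/b_n)^{\kappa_2} \gtrsim (\log n)^{1+\epsilon\kappa_2}$, so this contribution is $\exp\{-C(\log n)^{1+\epsilon\kappa_2}\}$, negligible compared to $a_n^{1+\xi} \geq c\, n^{-\mu(1+\xi)}$. The $c_8\, \delta_n^{1+\xi}/t(x)^{1+\xi}$ term is handled by one more dyadic decomposition of $S_2$; because $\tau$ is bounded, only finitely many shells are non-empty, and each contributes $\lesssim c_8\, \delta_n^{1+\xi}$, for a total of $\lesssim c_8\, \delta_n^{1+\xi}$. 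Adding the three strata gives $\E\{d_H(\widehat\Gamma,\Gamma)\} \lesssim a_n^{1+\xi} + (c_5 \vee c_8)\, \delta_n^{1+\xi} \log n$; when $c_5 = c_8 = 0$ the polynomial pieces vanish and the bound collapses to $a_n^{1+\xi}$.

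The main technical obstacle will be the bookkeeping in the dyadic decomposition on $S_1$: the $\log n$ factor in the final bound arises precisely from the number of shells needed to exhaust $(c_a a_n, \eta]$, and it cannot be removed unless the polynomial tail is absent. Verifying that the polynomial contribution per shell is $O(\delta_n^{1+\xi})$ rather than growing in $j$ is where the margin exponent $\xi$ must exactly match the $t^{1+\xi}$ in the tail inequality, so that the $2^j$ factors cancel. Everything else is a routine interaction between the exponential tails, the margin bound on the measure of level sets, and the super-polynomial slack provided by Assumption~1 outside $D(\eta)$.
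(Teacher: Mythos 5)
Your proof is correct and follows the same structural template as the paper: pointwise inclusion $\one\{x\in\widehat\Gamma\Delta\Gamma\}\leq\one\{|\tau(x)-\theta|\leq|\widehat\tau(x)-\tau(x)|\}$, stratification of $\mathcal{X}$ by the value of $|\tau(x)-\theta|$, the margin condition to bound the measure of each stratum, a dyadic shell decomposition whose exponential pieces sum geometrically and whose polynomial pieces contribute a constant per shell (whence the $\log n$ factor), and a super-polynomially small exponential tail on the far region. The one genuine difference is the choice of the boundary between the dyadic regime and the far regime: you use the fixed constant $\eta$, whereas the paper uses a slowly decaying sequence $\beta_n = c_\beta (\log n)^{-\min\{(1+\xi)^{-1},\epsilon\}}$ calibrated so that for $n$ large $\alpha_n < \beta_n < \min(\eta, t_0, \Delta)$. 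The paper's choice has the small technical advantage that on the far region it can apply the probability inequality at the single value $t = \beta_n < \Delta$, sidestepping the constraint $t < \Delta$ entirely; your plan of plugging $t = t(x)$ into the tail bounds runs into that constraint on the uppermost shells of $S_1$ (when $\eta > \Delta$) and on $S_2$ (where $t(x)$ is not bounded by $\Delta$). This is easily patched: bound $\Pb(|\widehat\tau(x)-\tau(x)|\geq t(x))$ by $\Pb(|\widehat\tau(x)-\tau(x)|> c)$ for some fixed $c\in(c_b b_n,\Delta)$ whenever $t(x)\geq c$; since $\tau$ is bounded there are only $O(1)$ such shells and the resulting polynomial piece is $O(\delta_n^{1+\xi})$, absorbed into your bound. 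With that caveat noted explicitly, the argument is sound and arguably slightly cleaner than the paper's, since it avoids introducing the auxiliary sequence $\beta_n$.
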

The central requirement to apply Lemma \ref{lemma:lemma3.1} is that the estimator $\widehat\tau(x)$ must satisfy an exponential inequality. If this is the case, this lemma shows that, provided that $\widehat\tau(x)$ converges to $\tau(x)$ at a rate $(\log n)^{-\kappa}$ for some $\kappa$ on the entire domain, the accuracy for estimating the CATE level set $\Gamma(\theta)$ is entirely determined by the rate for estimating $\tau(x)$ near the level $\theta$. If it is hard to show that the estimator satisfies an exponential inequality, then one can resort to applying Propositions \ref{prop:general} and \ref{prop:general_pnorm}. Because of Lemma \ref{lemma:lemma3.1}, we are left with the task of deriving concentration inequalities for $|\widehat\tau(x) - \tau(x)|$. We will do that for the case when $\widehat\tau(x)$ is either a DR-Learner or an Lp-R-Learner, which may be of independent interest. 
\subsection{Bound on estimation error using a DR-Learner}
To start, we consider the DR-Learner proposed and analyzed by \cite{kennedy2020optimal}.
\begin{definition}[DR-Learner algorithm based on linear smoothing]\label{def:dr_learner}
	Let $D^n$ and $Z^n$ be two independent samples of observations.
	\begin{enumerate}
		\item Using only observations in $D^n$, construct estimators $\widehat\pi(x) = \widehat\Pb(A = 1 \mid X - x)$ and $\widehat\mu_a(x) = \widehat\E(Y \mid A = a, X = x)$.
		\item Using only observations in $Z^n$, construct
		\begin{align*}
			& \widehat\tau(x) = \sum_{i = 1}^n W_i(x; X^n) \widehat\varphi(Z_i), \quad \text{for} \\
			& \widehat\varphi(Z_i) = \frac{\{A- \widehat\pi(X_i)\}\{Y_i - \widehat\mu_A(X_i)\}}{\widehat\pi(X_i)\{1 - \widehat\pi(X_i)\}}+ \widehat\mu_1(X_i) - \widehat\mu_0(X_i),
		\end{align*}
		some weights $W_i(x; X^n)$ and $X^n \subset Z^n$.
	\end{enumerate}
\end{definition}
Let us define:
\begin{align*}
	& S(x; X^n) = \left\{\sum_{i = 1}^n W^2_i(x; X^n)\right\}^{1/2}, \quad \widehat{b}(X_i) = \E\{\widehat\varphi(Z_i) - \varphi(Z_i) \mid X_i\} \quad \text{and}\\
	& \Delta(x; X^n) = \sum_{i = 1}^nW_i(x; X^n)\tau(X_i) - \tau(x)
\end{align*}
The quantity $\Delta(x; X^n)$ is the smoothing bias (conditional on $X_1, \ldots, X_n$) of the oracle estimator that has access to the true function $\varphi(Z_i)$. The quantity $\widehat{b}(x)$ expresses the bias resulting from having to estimate the nuisance functions. If $\pi(x)$ and $\widehat\pi(x)$ are bounded away from zero and one (positivity), it can be shown that
\begin{align*}
	|\widehat{b}(x)| \lesssim \left|\{\pi(x) - \widehat\pi(x)\}[\{\mu_1(x) - \widehat\mu_1(x)\} + \{\mu_0(x) - \widehat\mu_0(x)\}] \right|
\end{align*}
\begin{remark}
	A major advantage of the DR-Learner framework, not necessarily based on linear smoothing, is that regressing an estimate of the pseudo-outcome $\widehat\varphi(Z)$ on $V = v$ yields an estimate of $\tau(v) = \E\{\mu_1(X) - \mu_0(X) \mid V = v\}$, i.e., the CATE function evaluated at effect modifiers $V$, which may differ from the covariates $X$ needed to deconfound the treatment-outcome association. This is particularly useful when the dimension of $X$ is much greater than that of $V$. Thus, if the goal is to compute the upper level sets of $\tau(v)$, with $v \neq x$, thresholding a DR-Learner is an attractive option. 
\end{remark}
We have the following exponential inequality. 
\begin{lemma}\label{lemma:moment_bound_dr}
	Suppose $\widehat\tau(x)$ is a DR-Learner defined in \eqref{def:dr_learner}. Further suppose that
	\begin{enumerate}
		\item $|\Delta(x; X^n)| \leq c_1a_n$ almost surely for a monotonically decreasing sequence $a_n$ and constant $c_1$;
		\item $\E\{S^p(x; X^n)\} \leq s_n^p$ for any $p > 0$, a monotonically decreasing sequence $s_n$;
		\item $\E\left| \sum_{i = 1}^n W_i(x; X^n) \widehat{b}(X_i)\right|^{1 + \xi} \leq \delta_n^{1 + \xi}$ for a monotonically decreasing sequence $\delta_n$;
		\item $\|\widehat\varphi - \varphi - \widehat{b}\|_\infty \leq c_2 \|\varphi\|_\infty$ for a constant $c_2$.
	\end{enumerate}
	Then, for any $t \geq 3c_1a_n$, it holds that
	\begin{align*}
		\Pb\left(| \widehat\tau(x) - \tau(x)| > t \right) \leq 2e^2 \exp\left\{-\left(\frac{t}{12(c_2 \lor 2) e \|\varphi\|_\infty s_n} \right)^2\right\} + 3^{1 + \xi}\left(\frac{\delta_n}{t}\right)^{1 + \xi}
	\end{align*}
	
\end{lemma}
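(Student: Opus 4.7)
The plan is to decompose $\widehat\tau(x)-\tau(x)$ into an almost-sure smoothing bias, a moment-controlled nuisance bias, and a conditionally mean-zero noise term, and to control each piece at an appropriately chosen fraction of $t$. Using $\sum_i W_i(x;X^n)\tau(X_i)=\tau(x)+\Delta(x;X^n)$ and adding and subtracting $\varphi(Z_i)$ and $\widehat b(X_i)$, I would write
\[
\widehat\tau(x)-\tau(x)\;=\;\underbrace{\sum_{i=1}^n W_i\{\widehat\varphi(Z_i)-\widehat b(X_i)-\tau(X_i)\}}_{T_{\mathrm{noise}}}\;+\;\underbrace{\sum_{i=1}^n W_i\,\widehat b(X_i)}_{T_{\mathrm{bias}}}\;+\;\Delta(x;X^n).
\]
The smoothing term is bounded almost surely by $c_1 a_n$ by condition 1, and the nuisance-bias term has $(1+\xi)$-th moment at most $\delta_n^{1+\xi}$ by condition 3. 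For $T_{\mathrm{noise}}$, the sample-splitting structure (the estimator $\widehat\varphi$ is a fixed function once $D^n$ is frozen, while the $Z_i$ are drawn independently from $D^n$) makes the summands conditionally independent given $(D^n,X^n)$, with conditional mean zero because $\E\{\varphi(Z_i)\mid X_i\}=\tau(X_i)$ and $\widehat b(X_i)=\E\{\widehat\varphi(Z_i)-\varphi(Z_i)\mid X_i\}$; and each summand is bounded in absolute value by $(c_2+2)\|\varphi\|_\infty\le 2(c_2\lor 2)\|\varphi\|_\infty$ via condition 4 together with $|\tau|\le\|\varphi\|_\infty$.

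Given these structural facts, Hoeffding's inequality conditionally on $(D^n,X^n)$ yields
\[
\Pb\bigl(|T_{\mathrm{noise}}|>u \,\big|\, D^n,X^n\bigr)\;\leq\;2\exp\!\left\{-\frac{u^2}{8(c_2\lor 2)^2\|\varphi\|_\infty^2\,S(x;X^n)^2}\right\}.
\]
To pass to an unconditional bound I would invoke condition 2 via a truncation argument: writing $c=u^2/\{8(c_2\lor 2)^2\|\varphi\|_\infty^2\}$, for any $M>1$ and any $p>0$,
\[
\E[e^{-c/S^2}]\;\leq\;e^{-c/(Ms_n^2)}+\Pb(S^2>Ms_n^2)\;\leq\;e^{-c/(Ms_n^2)}+M^{-p},
\]
where the last inequality is Markov's with $\E S^{2p}\le s_n^{2p}$. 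Taking $M=e^2$ and $p$ as large as needed (permissible since condition 2 holds for every $p$) converts the conditional tail into an unconditional sub-Gaussian tail in $u/s_n$ with a multiplicative constant of $e^2$. The bias term is handled in one step by Markov: $\Pb(|T_{\mathrm{bias}}|>v)\leq(\delta_n/v)^{1+\xi}$.

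Finally I would split the deviation $t$ into three equal pieces of size $t/3$: the piece assigned to $\Delta$ is absorbed deterministically since $t/3\ge c_1 a_n$ is precisely the hypothesis $t\ge 3c_1 a_n$; the piece assigned to $T_{\mathrm{noise}}$ contributes $2e^2\exp\{-t^2/(C\,(c_2\lor 2)^2 e^2\|\varphi\|_\infty^2 s_n^2)\}$ with numerical $C$ arising from the $3^2$ of the split, Hoeffding's $8$, and the truncation's $e^2$; and the piece assigned to $T_{\mathrm{bias}}$ contributes $3^{1+\xi}(\delta_n/t)^{1+\xi}$. A union bound then assembles the stated inequality. The main obstacle is the integration step that converts the conditional sub-Gaussian tail with variance proxy $S(x;X^n)^2$ into an unconditional tail with variance proxy $s_n^2$ without degrading the Gaussian rate; this is exactly where the all-moments bound on $S$ in condition 2 is indispensable, since a single-moment bound would only yield polynomial tails after integration.
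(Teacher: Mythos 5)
Your proof is correct but follows a genuinely different route from the paper's. The paper uses a \emph{four}-term decomposition, keeping the oracle noise $\sum_i W_i\{\varphi(Z_i)-\tau(X_i)\}$ and the centered nuisance noise $\sum_i W_i\{\widehat\varphi(Z_i)-\varphi(Z_i)-\widehat b(X_i)\}$ separate, and controls each by first bounding conditional $p$-th moments via the Rosenthal-type inequality (eq.~2.16 of \cite{gine2000exponential}) and then converting the moment bounds into an exponential tail with Lemma~\ref{lemma:from_moment_to_exp}. You instead collapse the two noise pieces into the single conditionally centered, conditionally bounded sum $T_{\mathrm{noise}}$, apply Hoeffding's inequality conditionally on $(D^n,X^n)$, and de-condition via a Markov truncation on $S(x;X^n)$. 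Your verification that $T_{\mathrm{noise}}$ is conditionally mean-zero, the bound $(c_2+2)\|\varphi\|_\infty\le 2(c_2\lor 2)\|\varphi\|_\infty$, the Markov step for $T_{\mathrm{bias}}$, and the three-way split of $t$ with the hypothesis $t\ge 3c_1 a_n$ absorbing the smoothing bias are all sound. Your route is more elementary (Hoeffding instead of Rosenthal plus a moments-to-tail lemma); the paper's moment-based template is what it reuses verbatim for the heavier, sub-exponential tails of the Lp-R-Learner, so it buys uniformity across the two estimators. Two small remarks worth recording: (i) since condition~2 is assumed for \emph{every} $p>0$ with the same $s_n$, letting $p\to\infty$ gives $\|S(x;X^n)\|_\infty\le s_n$, so $\E[e^{-c/S^2}]\le e^{-c/s_n^2}$ directly and the Markov truncation, while correct, is heavier machinery than the hypothesis requires; and (ii) the constants your argument actually produces (a prefactor of $4<2e^2$ and a variance denominator of $72e^2(c_2\lor 2)^2\|\varphi\|_\infty^2 s_n^2$ rather than $144e^2(\cdots)$) are strictly sharper than those in the statement, so they imply the claimed bound, but you should say so explicitly rather than asserting that the constants assemble to the stated form.
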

Lemma \ref{lemma:moment_bound_dr} provides an exponential inequality for the DR-Learner based on linear smoothing, which might be of independent interest. Conditions 1-3 are not really assumptions in the sense that they are simply used to state the inequality in a succint form. Depending on the weights of the linear smoother and the accuracy in estimating the nuisance functions, conditions 1-3 would be satisfied by different sequences $a_n$, $s_n$ and $\delta_n$. Condition 4 is a mild boundedness assumption.
In the following example, we show how Lemmas \ref{lemma:lemma3.1} and \ref{lemma:moment_bound_dr} can be used to derive an upper bound on $\E\{d_H(\widehat\Gamma, \Gamma)\}$, where $\widehat\Gamma = \{x \in \mathcal{X}: \widehat\tau(x) > \theta\}$ for $\widehat\tau(x)$ the DR-Learner based on local polynomial second stage regression. 
\begin{example}[DR-Learner with local polynomials]\label{ex:dr_locpoly}
	Suppose that $\tau(x)$ is $\gamma$-smooth locally around any $x \in D(\eta)$ in the sense of Definition \ref{def:holder} and it is $\gamma{'}$-smooth for any $x \not \in D(\eta)$. Further suppose that $\widehat\tau(x)$ is based on local polynomial  second stage regression and that all observations are bounded. That is, $W_i(x; X^n)$ are the weights of a local polynomial of degree $p = \lfloor \gamma \rfloor$. The calculations in \cite{tsybakov2004introduction} (Section 1.6) show that, under mild regularity conditions:
	\begin{align*}
		S(x; X^n) \lesssim \frac{1}{\sqrt{nh^d}}, \quad T(x; X^n) = \sum_{i = 1}^n |W_i(x; X^n)| \lesssim 1, \quad \text{and} \quad |\Delta(x; X^n)| \lesssim h^\gamma.
	\end{align*}
	for $x \in D(\eta)$. Choosing $h$ of order $n^{-1/(2\gamma + d)}$ yields that there exist constants $c_1$ and $c_2$ such that
	\begin{align*}
		S(x; X^n) \leq c_1n^{-\gamma/(2\gamma + d)} \quad \text{and} \quad |\Delta(x; X^n)| \leq c_2n^{-\gamma/(2\gamma + d)}
	\end{align*}
	Typically, it will be the case that $W_i(x; X^n) = 0$ if $\|X_i - x\| > h$ so that by Jensen's inequality (since $u \mapsto |u|^{1 + \xi}$ is convex): 
	\begin{align*}
		\E\left| \sum_{i = 1}^n W_i(x; X^n) \widehat{b}(X_i) \right|^{1 + \xi} & \leq \E\left[T^{1 + \xi}(x; X^n) \cdot \left\{ \frac{\sum_{i = 1}^n |W_i(x; X^n)| \left| \widehat{b}(X_i) \right|}{T(x; X^n)} \right\}^{1 + \xi} \right] \\
		&  \leq  \E\left[T^{1 + \xi}(x; X^n) \cdot \left\{ \frac{\sum_{i = 1}^n |W_i(x; X^n)| \left| \widehat{b}(X_i) \right|^{1 + \xi}}{T(x; X^n)} \right\} \right] \\
		& \leq \E\left\{T^{1 + \xi}(x; X^n)\sup_{u: \|u - x\|\leq h} \left| \widehat{b}(u) \right|^{1 + \xi}\right\} \\
		& = \E \left\{T^{1 + \xi}(x; X^n) \right\} \E\left\{\sup_{u: \|u - x\|\leq h} \left| \widehat{b}(u) \right|^{1 + \xi}\right\} \\
		& \lesssim \E\left\{\sup_{u: \|u - x\|\leq h} \left| \widehat{b}(u) \right|^{1 + \xi}\right\}
	\end{align*}
	where the last equality follows because $\widehat{b}(u)$ depends only on the observations in the training sample, which is independent of $X^n$. By Lemma \ref{lemma:moment_bound_dr} and all $t \geq 3 c_2 n^{-\gamma/(2\gamma + d)}$ and $x \in D(\eta)$:
	\begin{align*}
		& \Pb\left(| \widehat\tau(x) - \tau(x)| > t \right) \lesssim \exp\left(-Ct^2n^{-2\gamma/(2\gamma + d)}\right) \\
		& \quad + t^{-1-\xi} \E\left(\sup_{u: \|u - x\|\leq h} \left|\{\pi(u) - \widehat\pi(u)\}[\{\mu_1(u) - \widehat\mu_1(u)\} + \{\mu_0(u) - \widehat\mu_0(u)\}] \right|^{ 1 + \xi} \right)
	\end{align*}
	For $x \not \in D(\eta)$, we have the same inequality with $\gamma$ replaced by $\gamma'$. Thus, by Lemma \ref{lemma:lemma3.1}, we have
	\begin{align*}
		\E\{d_H(\widehat\Gamma, \Gamma)\} \lesssim n^{-(1 + \xi)\gamma/(2\gamma + d)}+ \delta_n^{1 + \xi} \log n
	\end{align*}
	where $\delta_n$ satisfies
	\begin{align*}
		\E\left(\sup_{u: \|u - x\|\leq h} \left|\{\pi(u) - \widehat\pi(u)\}[\{\mu_1(u) - \widehat\mu_1(u)\} + \{\mu_0(u) - \widehat\mu_0(u)\}] \right|^{ 1 + \xi} \right) \lesssim \delta_n^{1 + \xi}.
	\end{align*}
\end{example}

\subsection{Bound on estimation error using Lp-R-Learners}
In this section, we derive an exponential inequality when $\widehat\tau(x)$ is the Lp-R-Learner. To describe the Lp-R-Learner estimator, we need to introduce some additional notation. We refer to the original paper \cite{kennedy2022minimax} for more details. In particular, the authors consider two different parametrizations of the data generating process: one based on $(f, \pi, \mu_0, \tau)$, which we consider in our work, and one based on $(f, \pi, \mu, \tau)$, where $\mu(x) = \E(Y \mid X = x)$ (see their Section 6). We expect that extending our analysis to cover the latter parametrization is straightforward.
\begin{definition}[Lp-R-Learner]\label{def:lp-r-learner}
	Let $F$ denote the CDF of $X$. For each covariate $x_j$, let $\rho(x_j) = [\rho_0(x_j), \ldots, \rho_{\lfloor \gamma \rfloor}(x_j)]$ be the first $(\lfloor \gamma \rfloor + 1)$ Legendre polynomials shifted to be orthonormal in $[0, 1]$. That is, 
	\begin{align*}
		\rho_m(x_j) = \sum_{l =1}^m \theta_{lm} x_j^l, \text{ for } \theta_{lm} = (-1)^{l + m} \sqrt{2m + 1} {m \choose l} {{m + l} \choose l}
	\end{align*}
	Define $\rho(x)$ to be the tensor product containing all interactions of $\rho(x_1), \ldots, \rho(x_d)$ up to order $\lfloor \gamma \rfloor$. Thus, $\rho(x)$ has length $J = {{d + \lfloor \gamma \rfloor} \choose \lfloor \gamma \rfloor} $ and is orthonormal in $[0, 1]^d$. Finally, define $\rho_h(x) = \rho(0.5 + (x - x_0) / h)$. The Lp-R-Learner $\widehat\tau(x_0)$ is defined as
	\begin{align*}
		& \widehat\tau(x_0) = \rho^T_h(x_0) \widehat{Q}^{-1} \widehat{R}, \text{ where } K_h(x) = \one(2\|x - x_0\|\leq h) \\
		& \widehat{Q} = \Pn \left\{ \rho_h(X) K_h(X) \widehat\varphi_{a1}(Z)\rho^T_h(X)\right\} + \mathbb{U}_n\left\{\rho_h(X_1)K_h(X_1)\widehat\varphi_{a2}(Z_1, Z_2) K_h(X_2)\rho^T_h(X_1) \right\} \\
		& \widehat{R} =\Pn\left\{ \rho_h(X_1) K_h(X)\widehat\varphi_{y1}(Z)\right\} + \mathbb{U}_n\left\{\rho_h(X_1)K_h(X_1)\widehat\varphi_{y2}(Z_1, Z_2)K_h(X_2) \right\} \\
		& \widehat\varphi_{a1}(Z) = A\{A -\widehat\pi(X) \} \\
		& \widehat\varphi_{a2}(Z_1, Z_2) = -\{A_1 - \widehat\pi(X_1)\} K_h(X_1) b^T_h(X_1)\widehat\Omega^{-1} b_h^T(X_2) A_2 \\
		& \widehat\varphi_{y1} (Z)= \{Y - \widehat\mu_0(X)\}\{A - \widehat\pi(X)\} \\
		& \widehat\varphi_{y2}(Z_1, Z_2) = -\{A_1 - \widehat\pi(X_1)\} b^T_h(X_1)\widehat\Omega^{-1} b_h^T(X_2) \{Y_2 - \widehat\mu_0(X_2)\} \\
		& b_h(X) = b(0.5 + (x - x_0) / h)\one(2\|x - x_0\| \leq h) \\
		& \widehat\Omega  = \int_{v \in [0, 1]^d}b(v)b^T(v) d\widehat{F}(x_0 + h(v - 0.5))
	\end{align*}
	for $b:\R^d \mapsto \R^k$ a basis vector of dimension $k$ that should have good approximating properties for the nuisance function class. The nuisance functions $(\widehat{F}, \widehat\pi, \widehat\mu_0)$ are computed from a training sample $D^n$, independent of that used to calculate the empirical and $U$-statistic measures.
\end{definition}
The Lp-R-Learner estimator is tailored to a particular smoothness model, which we describe next and adopt in this section and when discussing minimax optimality.
\begin{definition}[Lp-R-Learner smoothness model]\label{def:smoothness}
	Fix $\gamma$, $\gamma{'}$, $\alpha$ and $\beta$. Recall that $D(\eta) = \{x \in \mathcal{X}: |\tau(x) - \theta| \leq \eta\}$, $\eta > 0$. We define $\mathcal{P}$ to be the collection of all distributions satisfying the following conditions:
	\begin{enumerate}
		\item $\tau(x)$ is $\gamma$-smooth locally around any $x \in D(\eta)$ in the sense of Definition \ref{def:holder};
		\item $\tau(x)$ is $\gamma{'}$-smooth for any $x \not \in D(\eta)$;
		\item $\mu_0(x) - \widehat\mu_0(x)$ is $\beta$-smooth and $\pi(x) - \widehat\pi(x)$ is $\alpha$-smooth for any $x \in \mathcal{X}$ \footnote{In principle, $\mu_0(x)$ and $\pi(x)$ could have different smoothness levels depending on whether $x \in D(\eta)$ or not. This would not complicate the analysis conceptually but it would make the notation more involved. For simplicity, we treat the nuisance functions as having a smoothness level that does not vary across the covariates' space.};
		\item $\epsilon \leq \pi(x) \leq 1 - \epsilon$ almost-surely, for some $\epsilon > 0$;
		\item The eigenvalues of $Q$ and $\Omega$ are bounded above and below away from zero.
	\end{enumerate}
\end{definition}
Let $s = (\alpha + \beta)/2$ denote the average smoothness of the nuisance functions. Let $T = 1 + d/ (4s) + d/(2\gamma)$ and $T' = 1 + d/ (4s) + d/(2\gamma')$. For the model described in Definition \ref{def:smoothness},  \cite{kennedy2022minimax} proved that, under certain regularity conditions, the pointwise risk satisfies
\begin{align*}
	\E|\widehat\tau(x) - \tau(x)|^2 \lesssim \begin{cases}
		n^{-2\gamma/(2\gamma + d)} & \text{if } x \in D(\eta) \text{ and } s \geq \frac{d/4}{1 + d/(2\gamma)} \\
		n^{-2/T} & \text{if } x \in D(\eta) \text{ and } s < \frac{d/4}{1 + d/(2\gamma)} \\
		n^{-2\gamma'/(2\gamma' + d)} & \text{if } x \not\in D(\eta) \text{ and } s \geq \frac{d/4}{1 + d/(2\gamma')} \\
		n^{-2/T'} & \text{if } x \not\in D(\eta) \text{ and } s < \frac{d/4}{1 + d/(2\gamma')} 
	\end{cases}
\end{align*}
Crucially, \cite{kennedy2022minimax} shows that these rates are the minimax optimal rates for estimating $\tau(x)$ in this model. Notice that the rate $n^{-2\gamma/(2\gamma + d)}$ is the optimal rate for estimating a $d$-dimensional, $\gamma$-smooth regression function. It is referred to as the \textit{oracle rate} because it is the fastest rate achievable by an infeasible estimator that has access to the true pseudo-outcomes $\varphi(Z_i)$ (see definition \ref{def:dr_learner}). In the next section, we show that $\widehat\Gamma(\theta)$ based on thresholding the Lp-R-Learner estimator of the CATE is minimax optimal for $\Gamma(\theta)$ in this model as well. We derive the following exponential inequality, which may be of independent interest. 
\begin{lemma}\label{lemma:moment_lp-r-learner}
	Suppose the data generating mechanism satisfies the model described in Definition $\ref{def:smoothness}$. Let $dF^*(v) = dF(x_0 + h(v - 0.5))$ and $\|g\|^2_{F^*} = \int g^2(v) dF^*(v)$. Further suppose that:
	\begin{enumerate}
		\item The quantities $y^2$, $\widehat\pi^2$, $\widehat\mu_0^2$,  $\|\mu_0 - \widehat\mu_0\|_{F^*}$, $\|\widehat{Q}^{-1} - Q^{-1}\|$ are all bounded above and $\| dF / d\widehat{F} \|_\infty$, $\|\widehat{Q}\|$ and $\|\widehat\Omega\|$ are bounded above and below away from zero;
		\item $\| dF / d \widehat{F} - 1\|_\infty \left\{\|\widehat\pi - \pi\|_{F^*}(h^\gamma + \|\widehat\mu_0 - \mu_0\|_{F^*})\right\}\lesssim n^{-1/T} \lor n^{-1/T'} \lor  n^{-1/(1 + d/(2\gamma))}$;
		\item The basis dictionary is suitable for approximating H\"{o}lder functions of order $s$ in the sense that
		\begin{align*}
			\left\| g - b^T\Omega^{-1} \int b(u) g(u) dF^*(u) \right\|_{F^*} \lesssim k^{-s/d}
		\end{align*}
		if $g$ is $s$-smooth. 
	\end{enumerate} 
	Then there exist some constants $C, c, c_r$ and $\Delta$ so that, for all $c_r r_n \leq t \leq \Delta$, it holds that
	\begin{align*}
		\Pb\left(|\widehat\tau(x) - \tau(x)| > t\right) \leq
		C\exp\left[ - c \min \left\{ \left(\frac{t}{r_n} \right)^2, \left(\frac{t}{r_n} \right)^{1/2}\right\} \right],
	\end{align*}
	where
	\begin{align*}
		r_n = \begin{cases}
			n^{-\gamma/(2\gamma + d)} & \text{if } x \in D(\eta) \text{ and } s \geq \frac{d/4}{1 + d/(2\gamma)} \\
			n^{-1/T} & \text{if } x \in D(\eta) \text{ and } s < \frac{d/4}{1 + d/(2\gamma)} \\
			n^{-\gamma'/(2\gamma' + d)} & \text{if } x \not\in D(\eta) \text{ and } s \geq \frac{d/4}{1 + d/(2\gamma')} \\
			n^{-1/T'} & \text{if } x \not\in D(\eta) \text{ and } s < \frac{d/4}{1 + d/(2\gamma')} 
		\end{cases}
	\end{align*}
\end{lemma}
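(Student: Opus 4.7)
The plan is to mirror the $L_2$ analysis of Kennedy and Balakrishnan (2022) but upgrade every moment bound to a tail bound, using Bernstein's inequality for the linear empirical-process pieces and a Giné--Latala--Zinn / Adamczak type exponential inequality for the second-order $U$-statistic pieces; the characteristic $\exp\{-c\min(t^2,t^{1/2})\}$ shape in the conclusion is precisely what one gets from a non-degenerate $U$-statistic of order two whose Hoeffding decomposition has a sub-Gaussian linear part and a sub-exponential (chaos) quadratic part.

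First I would decompose, conditional on the training sample $D^n$,
\[
\widehat\tau(x_0)-\tau(x_0)\;=\;\rho_h^\T(x_0)\,\widehat Q^{-1}\bigl\{\widehat R-\widehat Q\,\tau(x_0)\bigr\},
\]
and then expand $\widehat R-\widehat Q\,\tau(x_0)$ into (i) a deterministic smoothing/approximation bias coming from replacing $\tau$ by its degree-$\lfloor\gamma\rfloor$ Taylor polynomial against $\rho_h$, which is $O(h^\gamma)$ on $D(\eta)$ and $O(h^{\gamma'})$ otherwise; (ii) a deterministic doubly-robust bias that, by the usual product structure of the $R$-learner pseudo-outcomes, is bounded by $\|\widehat\pi-\pi\|_{F^*}\bigl(h^\gamma+\|\widehat\mu_0-\mu_0\|_{F^*}\bigr)$ plus a $k^{-s/d}$ term from the basis projection error in assumption 3, all of which is absorbed into $r_n$ after optimising $h$ and $k$ exactly as in Kennedy and Balakrishnan (2022); and (iii) mean-zero stochastic parts of empirical-measure type $(\Pn-P)\{\cdots\}$ and of second-order $U$-statistic type $(\Un-P\otimes P)\{\cdots\}$. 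The matrix $\widehat Q^{-1}$ is handled by writing $\widehat Q^{-1}=Q^{-1}+(\widehat Q^{-1}-Q^{-1})$ and using assumption 1 so that only the behaviour of $Q^{-1}$ times the above pieces needs to be controlled; the remainder containing $\widehat Q^{-1}-Q^{-1}$ is negligible under the rate in assumption 2.

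Next I would handle concentration piece by piece. For the linear empirical-process term, all summands are bounded (assumption 1), so Bernstein's inequality, combined with the variance bound $\lesssim 1/(nh^d)$ coming from the local kernel $K_h$ and the orthonormality of $\rho_h$ under $F^*$, gives a sub-Gaussian tail $\exp\{-c(t/r_n)^2\}$. For the degenerate component of each second-order $U$-statistic term (after Hoeffding decomposition into a degenerate quadratic kernel plus two linear projections), I would apply the Giné--Latala--Zinn exponential inequality for canonical $U$-statistics of order 2 (see also Major, Adamczak): for bounded and $P\otimes P$-square-integrable kernels this yields a tail of the form $\exp\bigl[-c\min\{(t/\sigma)^2,(t/b)^{1/2}\}\bigr]$, where $\sigma$ is the $L_2$ norm of the kernel and $b$ its sup-norm. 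The linear projections of the Hoeffding decomposition are again handled by Bernstein. Plugging in the $L_2$ variances already computed in Kennedy and Balakrishnan (2022), specifically $\sigma^2\lesssim r_n^2$ for the degenerate parts in each of the four regimes, produces exactly the claimed $\exp\{-c\min((t/r_n)^2,(t/r_n)^{1/2})\}$ tail once $t\gtrsim r_n$ so that the bias is dominated.

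The main obstacle, and where I would spend most of the work, is bookkeeping the interaction between the random matrix $\widehat Q^{-1}$ and the $U$-statistic: $\widehat Q$ itself contains a $U$-statistic, so technically one first needs a high-probability event on which $\|\widehat Q^{-1}-Q^{-1}\|$ is small, and then condition on it to apply the $U$-statistic tail bound to the numerator. This is standard but tedious: I would show, via Bernstein plus a Giné--Latala--Zinn application to $\widehat Q-Q$, that $\|\widehat Q-Q\|\lesssim r_n$ with probability $1-C\exp\{-c\min((t/r_n)^2,(t/r_n)^{1/2})\}$, which lets us pass from $\widehat Q^{-1}$ to $Q^{-1}$ on the good event without inflating the tail; on the complementary event the boundedness of $\widehat Q^{-1}$ in assumption 1 keeps $|\widehat\tau(x_0)-\tau(x_0)|$ trivially controlled. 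Combining the good-event bound with the trivial bound on the bad event and choosing $h$ and $k$ as in the four cases of $r_n$ (exactly as in Kennedy and Balakrishnan 2022) yields the stated exponential inequality for $c_r r_n\leq t\leq\Delta$.
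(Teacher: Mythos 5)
Your overall strategy matches the paper's: decompose $\widehat\tau(x_0)-\tau(x_0)$ into a smoothing bias (controlled by Proposition~4 of Kennedy et al.\ 2022), a nuisance/projection bias (Proposition~9), and mean-zero empirical- and $U$-statistic pieces; Hoeffding-decompose the latter; and apply Bernstein's inequality to the linear parts together with the Gin\'e--Latala--Zinn exponential inequality for degenerate second-order $U$-statistics (the paper's Lemma~\ref{lemma:gine2000}) to obtain the $\exp\{-c\min((t/r_n)^2,(t/r_n)^{1/2})\}$ shape after plugging in the variance/sup-norm magnitudes from Kennedy et al.\ (2022) and optimizing $(h,k)$ in the four regimes. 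The one place you deviate --- and flag as ``where I would spend most of the work'' --- is precisely where the paper's argument is lightest. You propose first to establish a high-probability event on which $\|\widehat Q - Q\|\lesssim r_n$ (via yet more Bernstein/GLZ) and then to condition on it before attacking the numerator. The paper avoids this entirely: from the exact identity $\widehat\tau(x_0)-\tau_h(x_0)=\rho_h(x_0)^\T\widehat Q^{-1}\widehat S$ with $\widehat S=\widehat R-\widehat Q\,Q^{-1}R$ (Proposition~6 of Kennedy et al.\ 2022), combined with the \emph{assumed} boundedness of $\|\widehat Q^{-1}-Q^{-1}\|$ and of $\|\widehat Q\|$ in condition~1, one gets deterministically $\{\widehat\tau(x_0)-\tau_h(x_0)\}^2\leq c_3\sum_{j=1}^{J}\widehat S_j^2$, so the whole problem collapses to tail bounds on the individual $U$-statistics $\widehat S_j$, with no event construction or conditioning on $\widehat Q$ required. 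Relatedly, you attribute the negligibility of the $\widehat Q^{-1}-Q^{-1}$ remainder to assumption~2, but that assumption only controls the covariate-density estimation error; the matrix control is supplied directly by assumption~1, not by a rate condition. Neither point is a gap that would sink your argument, but the $\widehat S$-identity route is both shorter and requires no probabilistic inputs beyond what is already assumed.
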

All the conditions listed in the lemma above are needed in the derivation of the convergence rate (in pointwise RMSE) of $\widehat\tau(x_0)$ as proven in \cite{kennedy2022minimax}. We note that condition 2 requires estimating the covariates' density sufficiently well. We refer the reader to the original paper for a detailed discussion of their interpretation. Next, we use Lemmas \ref{lemma:lemma3.1} and \ref{lemma:moment_lp-r-learner} to derive a bound on $\E\{d_H(\widehat\Gamma, \Gamma)\}$ when $\widehat\Gamma$ is estimated using the Lp-R-Learner. 
\begin{corollary}\label{cor:lp-r-learner-bound}
	Under the setup of Lemma \ref{lemma:moment_lp-r-learner}, it holds that $\E\{d_H(\widehat\Gamma, \Gamma)\} \lesssim r_n^{*1+\xi}$, where 
	\begin{align*}
		r^*_n = \begin{cases}
			n^{-\gamma/(2\gamma + d)} & \text{if } s \geq \frac{d/4}{1 + d/(2\gamma)} \\
			n^{-1/T} & s < \frac{d/4}{1 + d/(2\gamma)}
		\end{cases}
	\end{align*}
\end{corollary}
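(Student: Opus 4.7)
The plan is to invoke Lemma \ref{lemma:lemma3.1} directly, using the exponential inequality of Lemma \ref{lemma:moment_lp-r-learner} as the input concentration bound. The key observation is that the tail bound delivered by Lemma \ref{lemma:moment_lp-r-learner} has \emph{no} $\delta_n^{1+\xi}/t^{1+\xi}$ component (i.e.\ we can set $c_5 = c_8 = 0$ in Lemma \ref{lemma:lemma3.1}), so the final risk bound will inherit only the pointwise rate $r_n$, taken on the near-cutoff region $D(\eta)$, which is exactly $r_n^*$.

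Concretely, I would first rewrite the tail
$$\Pb\bigl(|\widehat\tau(x)-\tau(x)|>t\bigr)\le C\exp\Bigl[-c\min\Bigl\{(t/r_n)^2,(t/r_n)^{1/2}\Bigr\}\Bigr]$$
in a form matching Lemma \ref{lemma:lemma3.1}. For $t\ge r_n$ the minimum equals $(t/r_n)^{1/2}$, so we get a sub-exponential tail with $\kappa_1=\kappa_2=1/2$. I would then set $a_n = r_n^*$ (the rate on $D(\eta)$) and $b_n$ equal to the corresponding rate on $D(\eta)^c$ (involving $\gamma'$ in place of $\gamma$). Both are polynomial in $n$, so the side conditions $a_n\ge c_2 n^{-\mu}$ and $b_n\le c_1(\log n)^{-2-\epsilon}$ of Lemma \ref{lemma:lemma3.1} hold trivially, and $a_n\le b_n$ follows from the local smoothness assumption $\gamma \ge \gamma'$ (the interesting regime, where extra local smoothness near the cutoff makes level-set estimation easier). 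Applying Lemma \ref{lemma:lemma3.1} with $c_5=c_8=0$ then gives $\E\{d_H(\widehat\Gamma,\Gamma)\}\lesssim a_n^{1+\xi} = r_n^{*\,1+\xi}$, as claimed.

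The only real obstacle is a bookkeeping one: verifying that the mixed tail $\exp[-c\min\{u^2,u^{1/2}\}]$ fits cleanly into the template of Lemma \ref{lemma:lemma3.1}, which is phrased with a single exponent $\kappa$. This is resolved by observing that in the integration range $c_rr_n\le t\le\Delta$ that matters for the proof of Lemma \ref{lemma:lemma3.1}, the slower of the two tails ($\kappa=1/2$) dominates, so one can simply take $\kappa_1=\kappa_2=1/2$ and absorb the constant factor. Everything else is mechanical: the exponential inequality already encodes the case split between $D(\eta)$ and its complement, Lemma \ref{lemma:lemma3.1} automatically picks out $a_n$ (the near-cutoff rate) in the final bound, and the margin exponent $\xi$ appears exactly where required.
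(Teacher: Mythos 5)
Your proof is correct and takes essentially the same route as the paper, whose entire proof is the one-liner ``It is sufficient to apply Lemma~\ref{lemma:lemma3.1} with $c_5 = c_8 = 0$.'' You usefully spell out the details the paper leaves implicit---taking $\kappa_1 = \kappa_2 = 1/2$ so that the mixed $\exp[-c\min\{u^2,u^{1/2}\}]$ tail from Lemma~\ref{lemma:moment_lp-r-learner} fits the template on the range $t \gtrsim r_n$, and noting that the requirement $a_n \le b_n$ rests on the (unstated in the paper) assumption that the CATE is at least as smooth near the cutoff as away from it.
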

\begin{proof}
	It is sufficient to apply Lemma \ref{lemma:lemma3.1} with $c_5 = c_8 = 0$. 
\end{proof}
In the next section, we show that $r_n^{*1+\xi}$ is also the minimax rate for estimating the level set $\Gamma(\theta)$ in the model described by Definition \ref{def:smoothness} when the risk is $\E\{d_H(\widehat\Gamma, \Gamma)\}$. We will derive the lower bound on the minimax risk in the low-smoothness regime ($s < \frac{d/4}{1 + d/(2\gamma)}$), with the understanding that a similar construction yields the appropriate lower bound in the high-smoothness regime ($s \geq \frac{d/4}{1 + d/(2\gamma)}$).
\section{Minimax lower bound}
Here, the goal is to find a tight lower bound on the minimax risk, defined as:
\begin{align*}
	\inf_{\widehat\Gamma} \sup_{p \in \mathcal{P}} \E_p\{d_H(\widehat\Gamma, \Gamma_p)\} = \inf_{\widehat\Gamma} \sup_{p \in \mathcal{P}} \E_p \left\{\int_{\widehat\Gamma \Delta \Gamma_p}  |\tau_p(x) - \theta| f_p(x) dx \right\}
\end{align*}
where $\mathcal{P}$ is a set of distributions compatible with our assumptions. Calculating the minimax risk for estimating a given parameter is important for at least two reasons. First, it serves as a benchmark for comparing estimators. In particular, if the lower bound on the minimax risk matches the rate of an available estimator, then one can conclude that there is not another estimator that can improve upon the minimax optimal one, at least in terms of a worst-case analysis, without introducing additional assumptions. Conversely, if there are no estimators attaining a rate that matches the minimax lower bound, then one has to either construct a better estimator or tighten the upper or lower bound. In our setting, we show that a valid lower bound matches the upper bound of Corollary \ref{cor:lp-r-learner-bound} up to constants, which therefore establishes the minimax rate for estimating $\Gamma$ under the loss $d_H(\widehat\Gamma, \Gamma)$ in model \ref{def:smoothness}. Furthermore, a tight minimax lower bound is helpful because it precisely characterizes the difficulty in estimating this parameter. 
\begin{theorem}\label{thm:lower_bound}
	Suppose that $\xi \gamma \leq d$. Under assumption \ref{assumption_margin} and the smoothness model defined in \ref{def:smoothness}, then
	\begin{align*}
		\inf_{\widehat\Gamma} \sup_{p \in \mathcal{P}} \E_p\{d_H(\widehat\Gamma, \Gamma_p)\} \gtrsim r_n^{*1+\xi}, \text{ where } r_n^* = n^{-1/T} \text{ and } T = 1 + d/ (4s) + d/(2\gamma).
	\end{align*}
	when $s < \frac{d/4}{1 + d/(2\gamma)}$. 
\end{theorem}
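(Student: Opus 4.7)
The plan is to prove the lower bound by Assouad's lemma, combining the multi-bump Assouad construction used by \cite{rigollet2009optimal} for level-set lower bounds with the single-point ``hard pair'' construction from \cite{kennedy2022minimax} for CATE estimation in the low-smoothness regime. The core idea is that each KB single-cell hard pair---in which a local CATE perturbation of amplitude $h_0^\gamma$ around a point $x_m$ is ``hidden'' by coordinated sub-cube oscillations of $(\pi,\mu_0)$ at scale $h_1 \asymp h_0^{\gamma/(2s)}$, so that the two distributions are at KL distance $\leq c/n$---can be cut-and-pasted into disjoint cells to produce a hypercube-indexed family suitable for Assouad.

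Concretely, I would take $h_0 \asymp n^{-1/(\gamma T)}$ so that $h_0^\gamma \asymp n^{-1/T}$, and pack $M \asymp h_0^{\xi\gamma - d}$ disjoint cubes $C_m$ of side $h_0$ into a fixed box inside $\mathcal{X}$; the hypothesis $\xi\gamma \leq d$ is precisely what guarantees $M \geq 1$, and it also makes the margin condition at level $t \asymp h_0^\gamma$ tight. For each $\omega \in \{0,1\}^M$, define $p_\omega$ by placing, on each $C_m$, the KB-style local hypothesis indexed by $\omega_m$: a $\gamma$-smooth CATE bump $\tau_\omega(x)-\theta = (2\omega_m - 1) h_0^\gamma K_0\bigl((x-x_m)/h_0\bigr)$ around the level $\theta$, together with the sub-cube perturbations of $\pi$ and $\mu_0$ of amplitudes $h_1^\alpha$ and $h_1^\beta$ on the $h_1$-grid inside $C_m$ used by \cite{kennedy2022minimax} for their pointwise CATE lower bound; off $\bigcup_m C_m$ use a fixed background with $|\tau-\theta|$ bounded away from $0$. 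Since the perturbations are supported on disjoint cells, $p_\omega \in \mathcal{P}$ follows from inheriting the local checks in Definition \ref{def:smoothness} (smoothness of $\tau,\pi,\mu_0$, positivity, eigenvalue bounds on $Q$ and $\Omega$, and margin condition saturated by the bump layout).

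The separation and KL budget then follow immediately. For $\omega,\omega'$ differing only in coordinate $m$, the level sets $\Gamma_\omega$ and $\Gamma_{\omega'}$ disagree only on the support of the $m$-th CATE bump, where $|\tau_\omega - \theta| \asymp h_0^\gamma$; integrating against $f$ gives a per-bit separation
\[
 \int_{C_m} |\tau_\omega(x) - \theta|\, f_\omega(x)\, dx \;\gtrsim\; h_0^{\gamma + d} \;=:\; \psi_n.
\]
At the same time $p_\omega$ and $p_{\omega'}$ agree outside $C_m$, so $\mathrm{KL}(p_\omega, p_{\omega'})$ equals the single-cell KL, which is $\leq c/n$ by the cited construction. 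Assouad's lemma then yields
\[
 \inf_{\widehat\Gamma} \sup_{p \in \mathcal{P}} \E_p\{d_H(\widehat\Gamma,\Gamma_p)\}
 \;\gtrsim\; M \cdot \psi_n
 \;\asymp\; h_0^{\xi\gamma - d} \cdot h_0^{\gamma + d}
 \;=\; h_0^{\gamma(1+\xi)}
 \;\asymp\; n^{-(1+\xi)/T}
 \;=\; r_n^{*\,1+\xi}.
\]

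The main obstacle is the compatibility check on the gluing: one must verify that the KB local hard pair can be made compactly supported in $C_m$ while still realising the $c/n$ KL bound, and that after gluing across disjoint cells the resulting global distribution still satisfies every clause of Definition \ref{def:smoothness}---notably the smoothness of $\pi$ and $\mu_0$ across cell boundaries, the eigenvalue bounds on $Q$ and $\Omega$, and the margin condition for all $t \in (0, t_0]$ (which is why we choose $K_0$ to have a flat plateau with a $\gamma$-smooth transition layer rather than a single extremum). These are the essentially technical bookkeeping steps of the proof; once they are in place, the Assouad step itself is routine, and no genuinely new ideas beyond \cite{rigollet2009optimal} and \cite{kennedy2022minimax} are needed.
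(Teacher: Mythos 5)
Your high-level plan is the same as the paper's: an Assouad argument gluing the KB local hard pair into $M \asymp h^{\xi\gamma - d}$ disjoint cubes of side $h \asymp n^{-1/(T\gamma)}$, with the $\lambda$-indexed sub-cube nuisance oscillations of $(\pi, \mu_0)$ hiding the CATE bump. The calibration $h, M, k$ and the way $\xi\gamma \le d$ enters are all right. But there are three concrete gaps.

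First, the background outside $\bigcup_m C_m$. You propose a fixed background with $|\tau - \theta|$ bounded away from $0$, but this is incompatible with the model: inside each $C_m$ the CATE sits at $\theta \pm O(h^\gamma)$, so at cube boundaries $\tau$ would have to move from $\theta + O(h^\gamma)$ to $\theta \pm \delta_0$ (with $\delta_0$ a constant) over a layer of width $O(h)$, which destroys $\gamma$-smoothness (for $\gamma \ge 1$ the gradient blows up, and even for $\gamma < 1$ the H\"{o}lder constant does). The paper instead takes $\tau_\omega(x) = \theta$ identically off the bump supports, which is continuous and smooth across cube boundaries, and then verifies the margin condition directly by computing $\Pb_X(0 < |\tau_\omega - \theta| < t) \le c_{hm} 2^{-d} h^{\gamma\xi}\mathbbm{1}(t > h^\gamma) \lesssim t^\xi$.

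Second, the $\omega$-encoding. You use a sign flip $\tau_\omega - \theta = (2\omega_m - 1) h_0^\gamma K_0$. The paper instead uses a two-location encoding: each bit $\omega_i$ places a \emph{positive} CATE bump at one of two disjoint locations $x_i$ or $x_{i+m}$, while $\pi$ is left at $1/2$ exactly where the bump is present and is fluctuated only where it is absent. This is not cosmetic. The KB hiding mechanism produces a \emph{one-signed} apparent CATE bias after $\lambda$-averaging (it is the product of the $\pi$ and $\mu_0$ fluctuations, $\propto \sum_j \lambda_{ij}^2 B_j^2 \ge 0$), and the condition $h^{\gamma - 2s} = 4k^{-2s/d}$ makes that positive bias exactly match the single-sided $+h^\gamma$ bump so that $\overline{p} = \overline{q}$ (i.e.\ $\delta_3 = 0$ in the Hellinger lemma). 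Matching a $\pm h^\gamma$ contrast would require either doubling the interaction amplitude or flipping the relative sign of the $\pi$ and $\mu_0$ fluctuations; you assert this just works, but the bookkeeping that makes $\delta_3 = 0$ and keeps $\delta_1, \delta_2$ of the right order has to be redone, and the paper's location-shift was designed precisely to avoid this.

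Third, the Assouad step. Because $d_H(\widehat\Gamma, \Gamma_p)$ weights the symmetric difference by $|\tau_p - \theta| f_p$, it is not a distribution-free metric in its two set arguments, so one cannot directly invoke the triangle inequality that Assouad's per-bit argument rests on. The paper sidesteps this: it first applies Proposition 2.1 of Rigollet--Vert to lower bound $d_H$ by a power $(1+\xi)/\xi$ of the plain $\Pb_X$-measure of the symmetric difference (which \emph{is} a metric), then reduces $\widehat\Gamma$ to a hypercube vertex $\widehat\omega$ by a majority-vote construction (Step 3), and finally applies Theorem 2.12 of Tsybakov to the Hamming distance with a Hellinger bound on the $\lambda$-mixtures obtained from the Robins et al./KB Lemma (rather than a per-cell KL $\le c/n$). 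Your direct decomposition of $d_H$ over cubes happens to give the same exponent because your sign-flip makes the weight $h_0^\gamma K_0$ independent of $\omega_m$, but that coincidence is not checked, and the reduction from an arbitrary measurable $\widehat\Gamma$ to a discrete $\widehat\omega$ is missing entirely.

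In short: the architecture is the right one, but the off-cube background as you describe it violates the smoothness model, the sign-flip does not plug directly into KB's one-sided hiding mechanism, and the passage from $d_H$ to a Hamming-style Assouad bound needs the margin-condition and majority-vote reductions that the paper spells out.
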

As shown in \cite{kennedy2022minimax}, the rate $r_n^*$ is the minimax rate for estimating $\tau(x)$ (at a point and under the square loss) in the smoothness model encoded in Definition \ref{def:smoothness} in the low smoothness regime. Our result shows that the same estimator can be thresholded to yield an optimal estimator of the CATE level sets. The result in Theorem \ref{thm:lower_bound} aligns with that of \cite{rigollet2009optimal}, where $r_n^*$ is replaced by the optimal minimax rate for estimating a $\gamma$-smooth density on a $d$-dimensional domain, i.e. $n^{-\gamma / (2\gamma + d)}$ (on the root-mean-square error scale).

The proof of Theorem \ref{thm:lower_bound} combines the construction of \cite{rigollet2009optimal}, \cite{kennedy2022minimax} and Assouad’s lemma (specifically, we rely on Theorem 2.12 in \cite{tsybakov2004introduction}). To derive a lower bound on the risk of an estimator, one needs to construct two worst-case distributions $Q_1$ and $Q_2$ such that $Q_1$ and $Q_2$ are similar enough so that one cannot perfectly determine whether a sample is from $Q_1$ or $Q_2$ but, at the same time, the value of the parameter at $Q_1$ is maximally separated from that at $Q_2$. To construct $Q_1$ and $Q_2$ one typically carefully designs fluctuations around the quantities that need to be estimated, in our case $\pi(x)$, $\mu_0(x)$ and $\tau(x)$. As shown in Figure \ref{fig:bumps}, we place bumps on these functions of particular heights depending on the level of smoothness. Our construction extends that of \cite{kennedy2022minimax}, which is localized in a neighborhood around $x = x_0$, to the entire domain of $X$. In particular, it can be used to show that the rate obtained in \cite{kennedy2022minimax} for the pointwise risk is also the minimax rate for the integrated risk $\int \{\widehat\tau(x) - \tau(x)\}^2 dF(x)$, which might be of independent interest. We refer to \cite{kennedy2022minimax} for additional details. Finally, the lower bound from Theorem \ref{thm:lower_bound} applies only to the case $\xi \gamma \leq d$. This condition also appears in the work of \cite{audibert2007fast} and the more stringent condition $\xi \gamma \leq 1$ appears in the lower bound construction of \cite{rigollet2009optimal}. To the best of our knowledge, deriving a tight lower bound without this condition is still an open problem. 
\begin{figure}[!h]
	\centering
	\includegraphics[scale=0.5]{./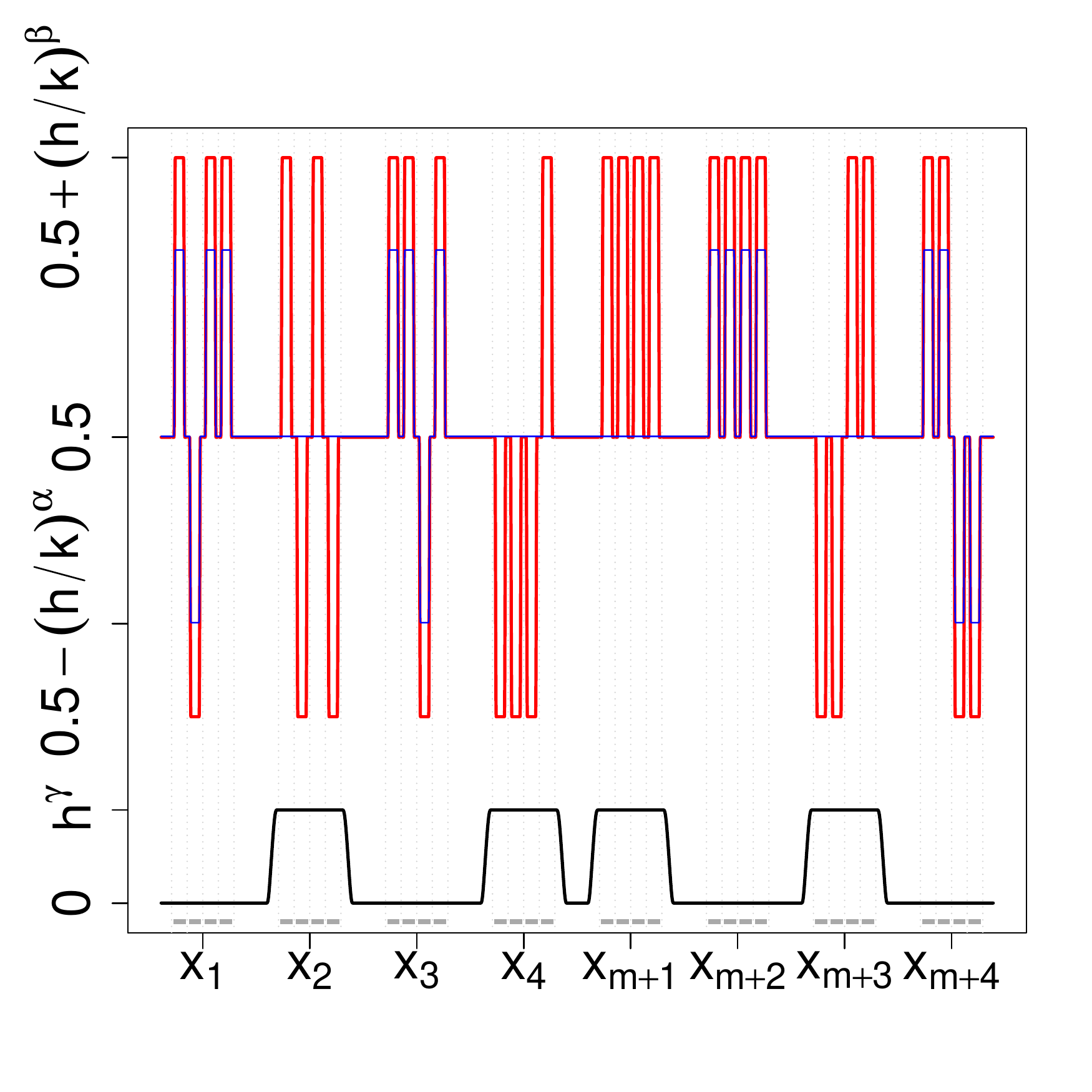}
	\caption{\label{fig:bumps} Lower bound construction for the case $d = 1$, $\theta = 0$ and $\alpha \geq \beta$. The solid black curve represents $\tau(x)$, the red curve represents $\mu_0(x)$ while the blue curve represents $\pi(x)$. Notice that if $\tau(x) > 0$ then $\pi(x) = 1/2$, whereas $\mu_0(x)$ is always fluctuated.}
\end{figure}
\section{Inference} \label{section:inference_lvl_sets}
In this section, we discuss a simple way to carry out inference when a DR-Learner is thresholded to estimate $\Gamma(\theta)$. Inspired by \cite{mammen2013confidence}, we propose constructing two sets $\widehat{C}_l$ and $\widehat{C}_u$ of the form 
\begin{align*}
	& \widehat{C}_l = \left\{x \in \R^d: \widehat\sigma^{-1}(x)\{\widehat\tau(x) - \theta\} > c_n(1-\alpha)\right\} \\
	& \widehat{C}_u = \left\{x \in \R^d: \widehat\sigma^{-1}(x)\{\widehat\tau(x) - \theta\} \geq -c_n(1-\alpha) \right\},
\end{align*}
where $\widehat\sigma(x)$ is an estimate of the standard deviation of $\widehat\tau(x)$ and $c_n(1 - \alpha)$ is some carefully chosen cutoff, depending on the $1-\alpha$ confidence level. The rationale for constructing such sets is outlined in the following lemma, which is written for level sets of some arbitrary function $f(x)$.
\begin{lemma}\label{lemma:inclusions}
	Let $\overline\Lambda(\theta) = \{x \in \mathcal{X}: f(x) \geq \theta\}$ and $\Lambda(\theta) = \{x \in \mathcal{X}: f(x)> \theta\}$. Let $\widehat{f}(x)$ be an estimator of $f(x)$ with some standard deviation $\widehat\sigma(x)$. Define the t-statistic $t_n = \{\widehat\sigma(x)\}^{-1}\{\widehat{f}(x) - f(x)\}$. Finally, define
	\begin{align*}
		& \widehat{C}_l = \{x \in \R^d:\{\widehat\sigma(x)\}^{-1}\{\widehat{f}(x) - \theta\} > t \}  \\
		& \widehat{C}_u = \{x \in \R^d:\{\widehat\sigma(x)\}^{-1}\{\widehat{f}(x) - \theta\} \geq -t \}
	\end{align*}
	Then, it holds that
	\begin{align*}
		\Pb\left( \overline\Lambda(\theta) \subseteq \widehat{C}_u \  \text{ and } \ \widehat{C}_l \subseteq \Lambda(\theta) \right) \geq \Pb\left(\|t_n\|_\infty \leq t  \right). 
	\end{align*}
\end{lemma}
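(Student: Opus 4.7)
The plan is to prove this by a deterministic event-inclusion argument: I will show that the event $\{\|t_n\|_\infty \leq t\}$ is contained in the event $\{\overline\Lambda(\theta) \subseteq \widehat{C}_u \text{ and } \widehat{C}_l \subseteq \Lambda(\theta)\}$, after which the inequality on probabilities follows by monotonicity. The key identity driving everything is that on $\{\|t_n\|_\infty \leq t\}$ we have the uniform two-sided bound $|\widehat{f}(x) - f(x)| \leq t\, \widehat\sigma(x)$ for every $x \in \R^d$, so that $\widehat{f}(x)$ and $f(x)$ are within $t\widehat\sigma(x)$ of each other pointwise.

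First, I would fix a realization in the event $\{\|t_n\|_\infty \leq t\}$ and verify the inclusion $\overline\Lambda(\theta) \subseteq \widehat{C}_u$. Take any $x \in \overline\Lambda(\theta)$, so that $f(x) \geq \theta$. Using the lower side of the uniform bound, $\widehat{f}(x) \geq f(x) - t\widehat\sigma(x) \geq \theta - t\widehat\sigma(x)$, which rearranges to $\{\widehat\sigma(x)\}^{-1}\{\widehat{f}(x) - \theta\} \geq -t$, showing $x \in \widehat{C}_u$.

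Next, on the same event, I would verify $\widehat{C}_l \subseteq \Lambda(\theta)$. Take any $x \in \widehat{C}_l$, so that $\widehat{f}(x) - \theta > t\,\widehat\sigma(x)$. Using the upper side of the uniform bound, $f(x) \geq \widehat{f}(x) - t\widehat\sigma(x) > \theta$, hence $x \in \Lambda(\theta)$. Combining the two inclusions shows the set-theoretic containment of events, and then monotonicity of $\Pb$ yields the stated inequality.

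I do not anticipate any real obstacle here: the lemma is essentially a formal restatement of the familiar duality between a uniform confidence band for $f$ and nested inner/outer confidence regions for its level set, so the only mild care needed is to keep the strict versus non-strict inequalities consistent (the strict $>$ in the definition of $\widehat{C}_l$ matches the strict $>$ in $\Lambda(\theta)$, and the non-strict $\geq$ in $\widehat{C}_u$ matches the non-strict $\geq$ in $\overline\Lambda(\theta)$), which is exactly why the lemma distinguishes $\Lambda(\theta)$ from $\overline\Lambda(\theta)$.
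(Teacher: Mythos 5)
Your proof is correct and follows essentially the same event-inclusion argument as the paper: on the event $\{\|t_n\|_\infty \leq t\}$ you verify the two set containments pointwise, handling the strict/non-strict inequalities exactly as the paper does. The only cosmetic difference is that you multiply through by $\widehat\sigma(x)$ before rearranging, whereas the paper works directly with the normalized quantities; the substance is identical.
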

\begin{proof}
	Let $x_0$ be any member of $\overline\Lambda(\theta)$ and notice the following chain of implications
	\begin{align*}
		\|t_n\|_\infty \leq t \implies \widehat\sigma^{-1}(x_0)\{\widehat{f}(x_0) - f(x_0)\}  \geq -t \implies  \widehat\sigma^{-1}(x_0)\{\widehat{f}(x_0) - \theta\}  \geq -t
	\end{align*}
	because $f(x_0) \geq \theta$. This means that $x_0 \in \widehat{C}_u$ so that we conclude that $\Pb\left( \overline\Lambda(\theta) \subseteq \widehat{C}_u \right) \geq \Pb\left(\|t_n\|_\infty \leq t  \right)$.
	
	Similarly, let $x_0$ be any member of $\widehat{C}_l$ and notice that
	\begin{align*}
		& \|t_n\|_\infty \leq t \implies \widehat\sigma^{-1}(x_0)\{\widehat{f}(x_0) - \theta\} +  \widehat\sigma^{-1}(x_0)\{\theta - f(x_0)\}  \leq t \\
		& \implies \widehat\sigma^{-1}(x_0)\{\theta - f(x_0)\} < 0
	\end{align*}
	because $\widehat\sigma^{-1}(x_0)\{\widehat{f}(x_0) - \theta\} > t$. Thus, $x_0 \in \Lambda(\theta)$ so that $$\Pb\left(\widehat{C}_l \subseteq \Lambda(\theta) \right) \geq \Pb\left(\|t_n\|_\infty \leq t  \right).$$
\end{proof} 
In light of Lemma \ref{lemma:inclusions}, $\widehat{C}_l$ and $\widehat{C}_u$ act as $1-\alpha$ lower and upper confidence sets for $\Lambda(\theta)$ as long as $\|t_n\|_\infty \leq t$ with probability at least $1-\alpha$. Thus, constructing $\widehat{C}_l$ and $\widehat{C}_u$ to cover $\Gamma(\theta)$ effectively reduces to the problem of constructing uniform confidence bands around $\widehat\tau(x)$.

Constructing confidence regions for level sets based on the supremum of the function that is being thresholded is an example of confidence sets based on ``vertical variation." An alternative route would be to construct confidence regions based on ``horizontal variation," an example of which would be a confidence region based on approximating the distribution of the Hausdorff distance between the estimated set and the true set. We leave this for future work and refer to \cite{qiao2019nonparametric} and \cite{chen2017density} for more details regarding the differences between these approaches in the context of density estimation. 

\cite{semenova2021debiased} establish uniform confidence bands for a DR-Learner estimator of the CATE such that the second-stage regression is carried out via orthogonal series regression. One can therefore leverage their results (Theorem 3.5) to construct confidence sets for the CATE level sets based on Lemma \ref{lemma:inclusions}. \footnote{Estimating the quantile of $\|t_n\|_\infty$ typically requires that the smoothing bias for estimating the CATE converges to zero faster than the standard error (e.g., see condition (iv) in Theorem 3.5 in \cite{semenova2021debiased}). This condition can be challenging to guarantee in applications, but we note that it is not required if one changes the target of inference to upper level sets of a  ``smoothed version" of the CATE function, i.e. a modified CATE function that can be estimated without smoothing bias. See also \cite{chen2017density}.} Finally, in the context of dose-response estimation, \cite{takatsu2022debiased} construct uniformly valid confidence bands for second-stage local linear smoothers where the outcome is estimated in a first-step. We expect their results to be useful in the setting considered here as well. We plan on including a more precise result on uniform inference for DR-Learners in an updated version of this work. 
\section{Simulation}
The goal of this section is to evaluate the performance of the estimators and investigate the role of various aspects of the data generating processes in finite samples. First, we study the impact of the nuisance functions' estimation step on the coverage of the CATE upper level set. Our estimator of the upper level set will consist of thresholding a DR-Learner estimator of the CATE based on a parametric second-stage linear regression. Based on Lemma \ref{lemma:moment_bound_dr} and Example \ref{ex:dr_locpoly}, we expect the performance of our estimator to deteriorate significantly when the product of the nuisance functions' error is greater than $o_\Pb(n^{-1/2})$. 

Next, we investigate the impact of the parameters governing the margin assumption \ref{assumption_margin} on the error in estimating the upper level set. To simplify the simulation settings, we consider a smooth CATE with bounded density so that $\xi = 1$ in Assumption \ref{assumption_margin} holds, and we increase the constant $c_0$ on the right hand side of the margin condition inequality. We expect that the larger the region of the covariates' space where the CATE is close to the threhsold the harder the estimation problem becomes.

In all simulation scenarios, we define $\theta = 0$, the sample size $n = 1000$, the number of bootstrap replications used in constructing the confidence regions $B = 10^5$ and the number of simulations $I = 500$. We enforce consistency by setting $Y = AY^1 + (1 - A)Y^0$. We approximate the space $[-1, 1]^2$ by a grid of points $x_1, \ldots, x_m$, which are equally spaced points $(x_{1i}, x_{2j})$ for $1 \leq i, j \leq 50$. Uniform coverage is computed relative to this approximation. 

\textbf{Setup 1A: Impact of the nuisance functions' estimation step}. We generate data from the following model:
\begin{align*}
	& X_i \stackrel{iid}{\sim} \text{Unif}(-1, 1), \quad A \mid X_1, X_2 \sim \text{Bin}(\expit(-1 + X_1 + X_2)), \\
	& Y^1 \mid X_1, X_2 \sim N(0.15 - X_1 -0.5 X_1^2 + X_2, 1), \quad Y^0 \mid X_1, X_2 \sim N(0, 1).
\end{align*}
Notice that $\tau(x) = 0.15 - x_1 -0.5 x_1^2 + x_2$, which we assume is correctly specified in the second-stage regression. However, we construct the nuisance functions estimators $\widehat\pi$, $\widehat\mu_a$ by injecting Gaussian noise of order $n^{-1/c}$, for $c = \{0, 2, 3, 3.8, 4, 5\}$ in the true functions. For example, $\widehat\pi(x) = \expit(x^T\widehat\beta)$, where $\widehat\beta = [-1 \ 1 \ 1]^T + \mathcal{N}_3(n^{-1/c}, n^{-1/c}I_3)$. The case $c = 0$ refers to the case where we do not inject any noise. Figure \ref{fig:setup} represents the simulation setup; the black solid line denotes the set of covariates' values where the CATE is zero. As shown in Figure \ref{fig:cvg}, in agreement with our theoretical results, the coverage of the CATE upper level sets starts to deteriorate as soon as the product of the  errors in estimating the nuisance functions equals or exceeds the rate $n^{-1/4}$. 

\textbf{Setup 1B: Impact of the parameters governing the margin assumption \ref{assumption_margin}}.
We generate data as in Setup 1A except that $Y^1 \mid X_1, X_2 \sim N(\kappa(0.15 - X_1 - 0.5X_1^2 + X_2), 1)$, where $\kappa = \{0.1, 0.5, 1, 5, 10\}$. The parameter $\kappa$ is meant to govern the size of the set $\{ x \in \mathcal{X}: |\tau(x)|\  \leq \epsilon\}$ for some fixed $\epsilon$; the smaller $\kappa$ the larger this set is. We thus expect the performance of our estimator to deteriorate as $\kappa$ decreases.  To isolate the impact of varying $\kappa$ on the performance of the estimators, we use the true nuisance functions, instead of the estimated ones, in the construction of the pseudo outcome. In other words, we gauge the impact of $\kappa$ on the oracle estimator. We compute a monte-carlo approximation to $d_H(\widehat\Gamma, \Gamma)$.

As expected, Figure \ref{fig:error} shows simulation evidence that the estimation error as measured by the risk $\E\{d_H(\widehat\Gamma, \Gamma)\}$ decreases, i.e. the estimation problem becomes easier, if the size of covariates' space where the CATE is close to the level decreases. This too is in agreement with the results from the previous sections. 
\begin{figure}
	
	\centering
	\subfloat[True CATE (from Setup A, $\kappa = 1)$ with corresponding level set $\chi(0) = \{x \in {[-1, 1]}^2: \tau(x) = 0\}$ (black line).]{\label{fig:setup}
		\centering
		\includegraphics[width = 0.3\linewidth]{./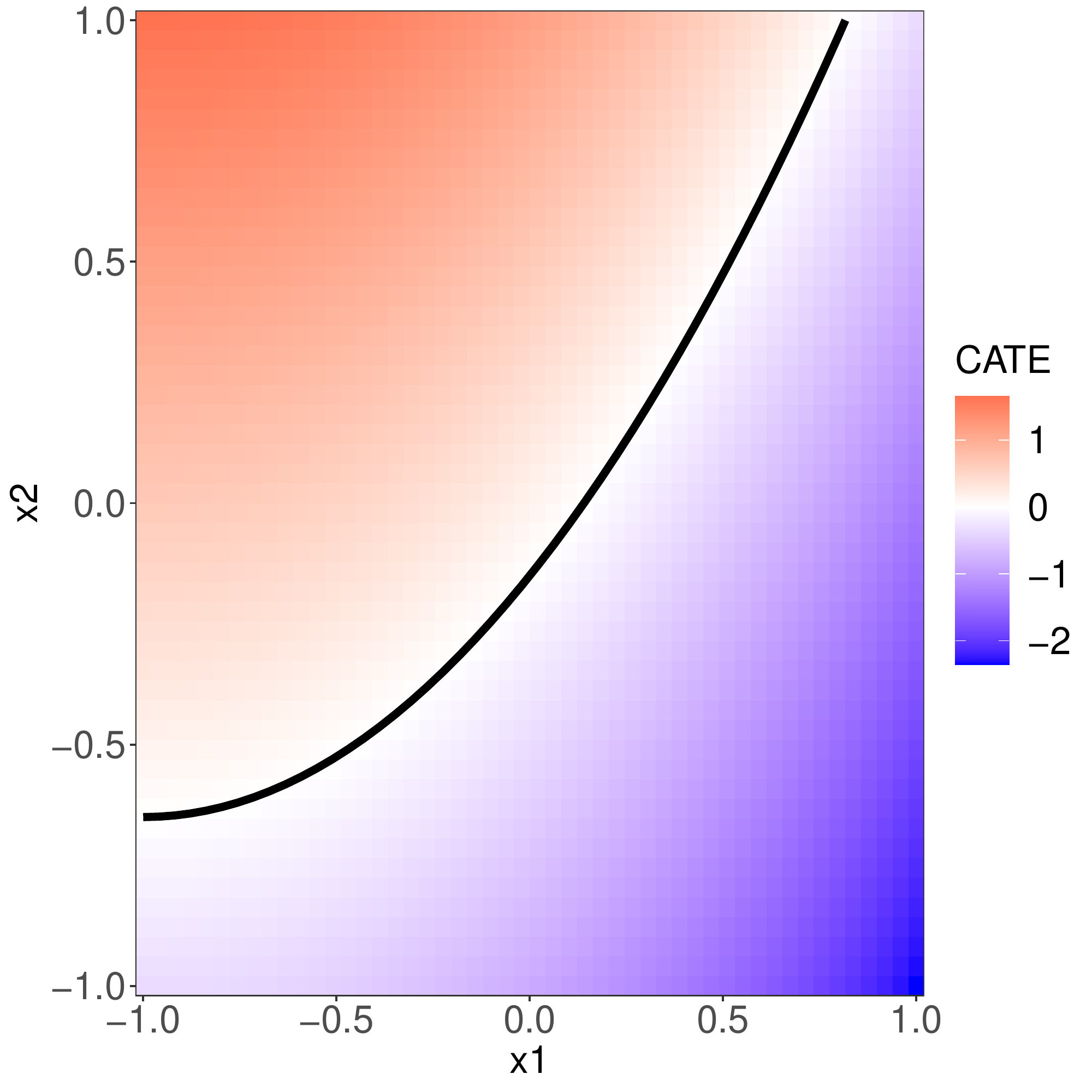}
	}
	\hfill
	\subfloat[Empirical coverage of $\Gamma(0)$ under setup A as a function the nuisance estimators' accuracy ($c$ in $n^{-c}$).]{\label{fig:cvg}
		\centering
		
		\includegraphics[width = 0.3\linewidth]{./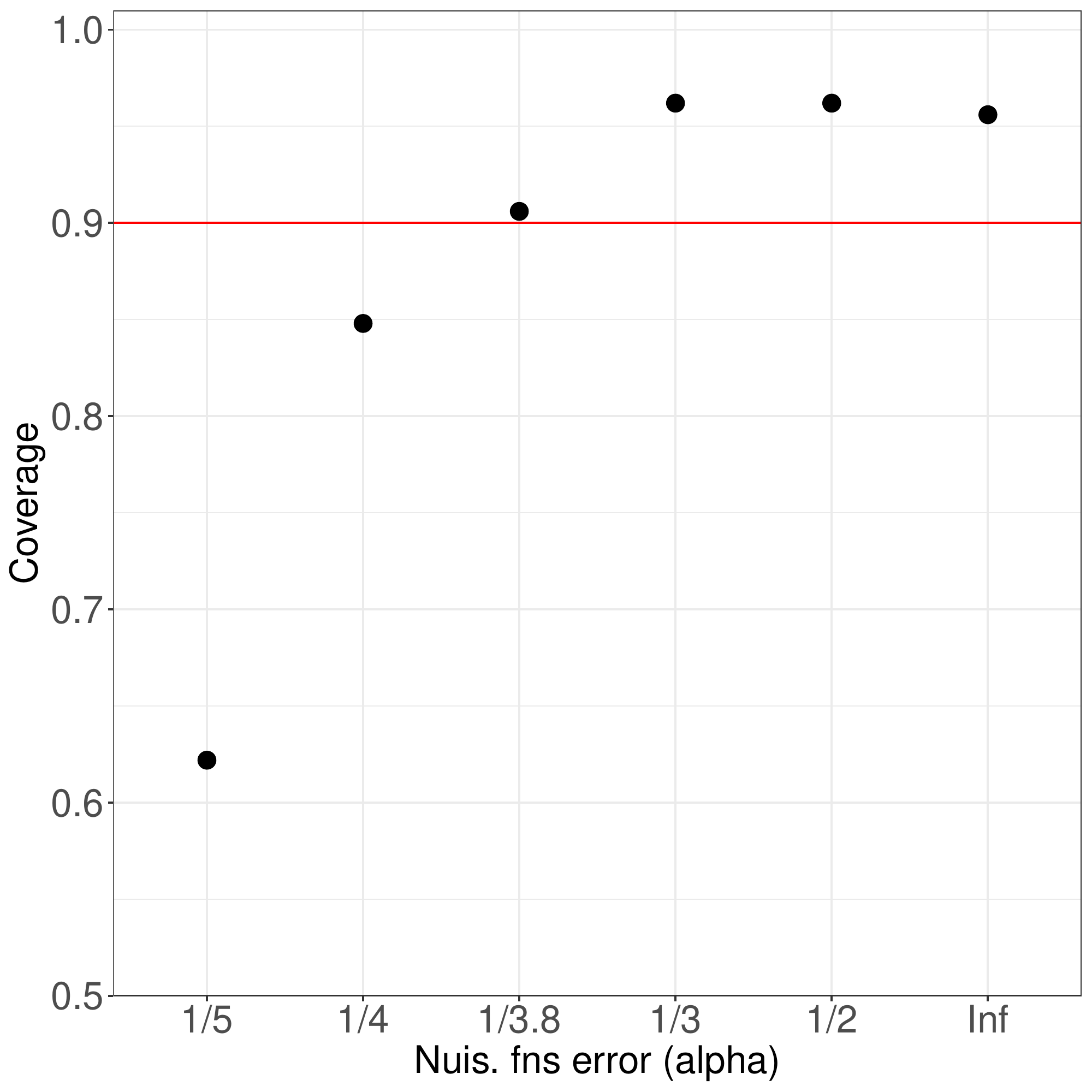}
	}
	\hfill
	\subfloat[Value of the loss $d_H(\widehat\Gamma, \Gamma)$ as a function of $\kappa$ in Setup B.]{\label{fig:error}
		\centering
		
		\includegraphics[width = 0.3\linewidth]{./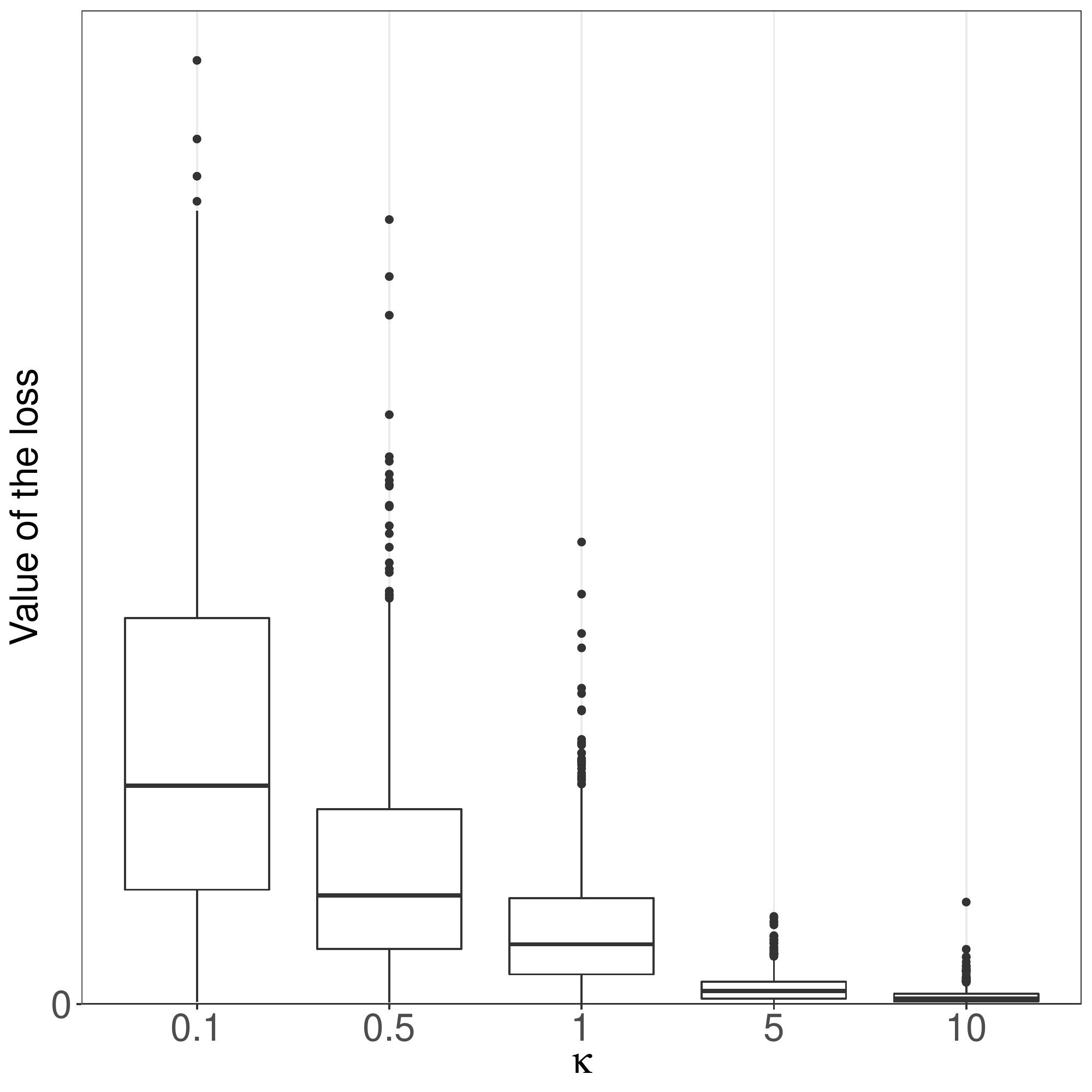}
	}
	
	\caption{Simulation results}
	\label{fig:sim_res}
\end{figure}
\section{Data Analysis}
Partial colon removal, also known as partial colectomy, is a medical procedure where a surgeon removes the diseased portion of the patient's colon and a small portion of surrounding healthy tissue. A partial colectomy serves as a treatment for various conditions including Crohn's disease, ulcerative colitis, and colon cancer. Surgery for appendicitis can be done in two ways. Traditionally, partial colectomy is done via open surgery (OS), which requires a long incision in the abdomen to gain access to the colon. The primary alternative to open surgery is laparoscopic surgery (LS) which is a surgical technique that uses a small incision and small narrow tubes. The surgeon pumps carbon dioxide through the tubes to inflate the organs and create more space for the procedure. Surgical instruments are inserted and used to remove part of the colon. LS is a minimally invasive colectomy and is designed to help patients recover more quickly and experience fewer surgical complications. 

LS for partial colectomy has been widely evaluated in randomized controlled trials, observational studies and meta-analyses \citep{varela2008outcomes,wu2022comparison,wu2010role,kemp2008outcomes,kannan2015laparoscopic}. Across these various types of studies, results indicate that LS leads to better patient outcomes including lower morbidity and lower complications. However, it is also likely that the effect of LS varies from patient to patient. More specifically, there may be some patients for whom LS is particularly beneficial, and there may be other patients for whom it is harmful or ineffective. As an empirical application, we use level sets to characterize optimal treatment for LS for partial colectomy. In our analysis, we use a large observational data set and exploit the large sample size and rich set of covariates to better detect whether the effects of LS vary systematically with key patient characteristics. 

We use a data set that merges the American Medical Association (AMA) Physician Masterfile with all-payer hospital discharge claims from New York, Florida and Pennsylvania in 2012-2013. The data include patient sociodemographic and clinical characteristics including indicators for frailty, severe sepsis or septic shock, and 31 comorbidities based on Elixhauser indices \citep{elixhauser1998comorbidity}. The data also include information on insurance type. Our primary outcomes are indicator variables for mortality and complications. In our data, there are 46,506 patients that underwent a partial colectomy. Among these patients, 20,133 underwent LS and 26,373 underwent OS. 

In Figure \ref{fig:da}, we report the results from a data analysis we have conducted. The figure shows our DR-Learner estimate of the CATE function defined in terms of two effect modifiers, an aggregate measure of comorbidity and age. Age is approximately continuous ranging from 18 to 102 years old, whereas the measure of comorbidity is ordinal taking values in $0, 1, \ldots, 7, 8+$. The outcome is a binary indicator for whether a set of complications occured. To create Figure \ref{fig:da}, we restrict the range of age to be between 30 and 80. In the rest of the age space, we observe too few data points for certain comorbidity levels. We deconfound the treatment / outcome association using all pre-treatment variables available. To estimate the nuisance functions, we use Random Forests implemented in the \texttt{ranger} R package with default parameters. We then estimate the CATE with a DR-Learner (Definition \ref{def:dr_learner}) by regressing the estimated pseudo-outcome on the effect modifiers via a linear model of the form
\begin{align*}
	\widehat\E\left\{\widehat{\varphi}(Z)\mid \text{Age} = v_1, \text{Comorb} = v_2\right\} = \sum_{j = 0}^8 \one(v_2 = j)\left(\widehat\beta_{0j} + \sum_{i = 1}^{k_j} \widehat\beta_{ij} v_1^i\right),
\end{align*}
where $k_j = 1, 2, 3$ is picked by leave-one-out cross validation (loocv). In other words, within each level of comorbidity, we consider a polynomial regression model where the degree, either 1, 2 or 3, is chosen to minimize a leave-one-out estimate of the risk. We construct the confidence regions using the method of approximating the distribution of $\sup_{x}|\widehat\tau(x) - \tau(x)|$ described in \cite{semenova2021debiased}.

Our estimate of the average treatment effect is $-8.07\%$ ($95\%$ CI: $[-8.88\%, \ -7.26\%]$); it is calculated by simply averaging the estimated pseudo-outcomes $\widehat\varphi(Z)$, as those are also the influence-function values for the average treatment effect parameter. Thus, if everyone in the population would undergo laparoscopic surgery instead of the traditional open surgery, we expect a statistically significant reduction on the probability of incurring in complications of roughly 8\%.  This is consistent with the idea that laparoscopic surgery, being minally invasive, reduces the risk to develop complications. 

Importantly, as shown in Figure \ref{fig:cate}, our estimates of the CATE are negative everywhere in the covariates' space considered. From this analysis, and in particular from Figure \ref{fig:cate_inf}, it appears that laparoscopic surgery significantly decreases the chance of complications for units with a low or average number of comorbities across many age groups, and for units roughly older than 44 (52) with a level of comorbidity equal to 6 (7). We do not find any significant decrease in the likelihood of complications for units with an elevated number of comorbidities (8+). We note that the lack of statistical significance for younger units with high levels of comorbidities could be due to the fact that we observe fewer points in these regions. Finally, notice that the blue region in Figure \ref{fig:cate_inf} makes up the complement of $\widehat{C}_u$ (with $\theta = 0$) as defined in Section \ref{section:inference_lvl_sets}. Therefore, their union is a region that, with high probability, is contained in, and thus potentially smaller than, the true region where the CATE is negative.
\begin{figure}
	
	\centering
	\subfloat[Estimates of the CATE values for patients that underwent LS for a partial colectomy, as a function of age and comorbidities. All estimates are negative so that that the level set of the CATE at $\theta = 0$, the reference value for our analysis, is effectively empty.] {\label{fig:cate}
		\centering
		
		\includegraphics[scale=0.3]{./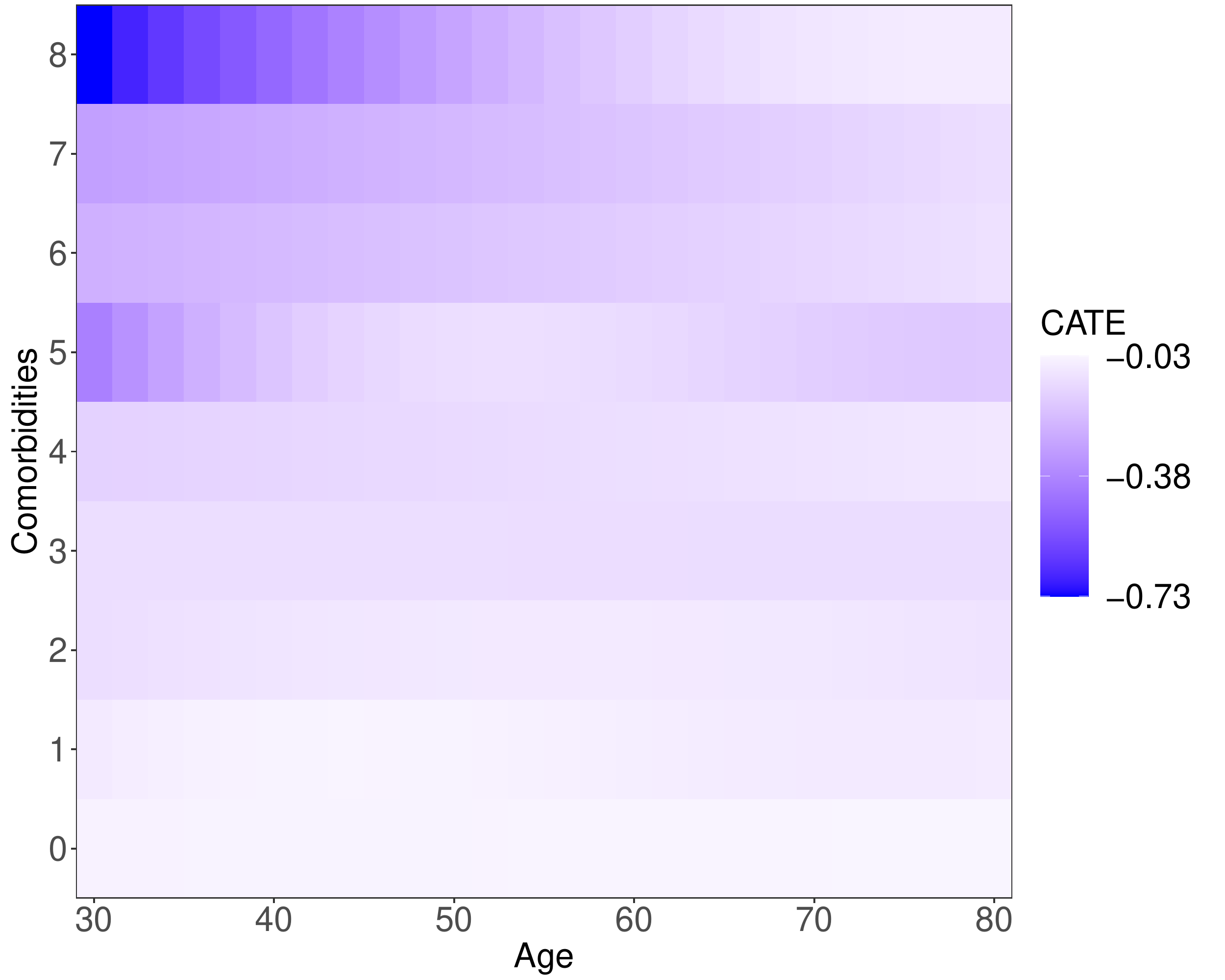}
	}
	\hfill
	\subfloat[The blue regions represent covariates' values where, with high probability, the CATE function is negative. The union of these regions is the complement of $\widehat{C}_u$ defined in Section \ref{section:inference_lvl_sets} at level $\theta = 0$.]{\label{fig:cate_inf}
		\centering
		
		\includegraphics[scale=0.3]{./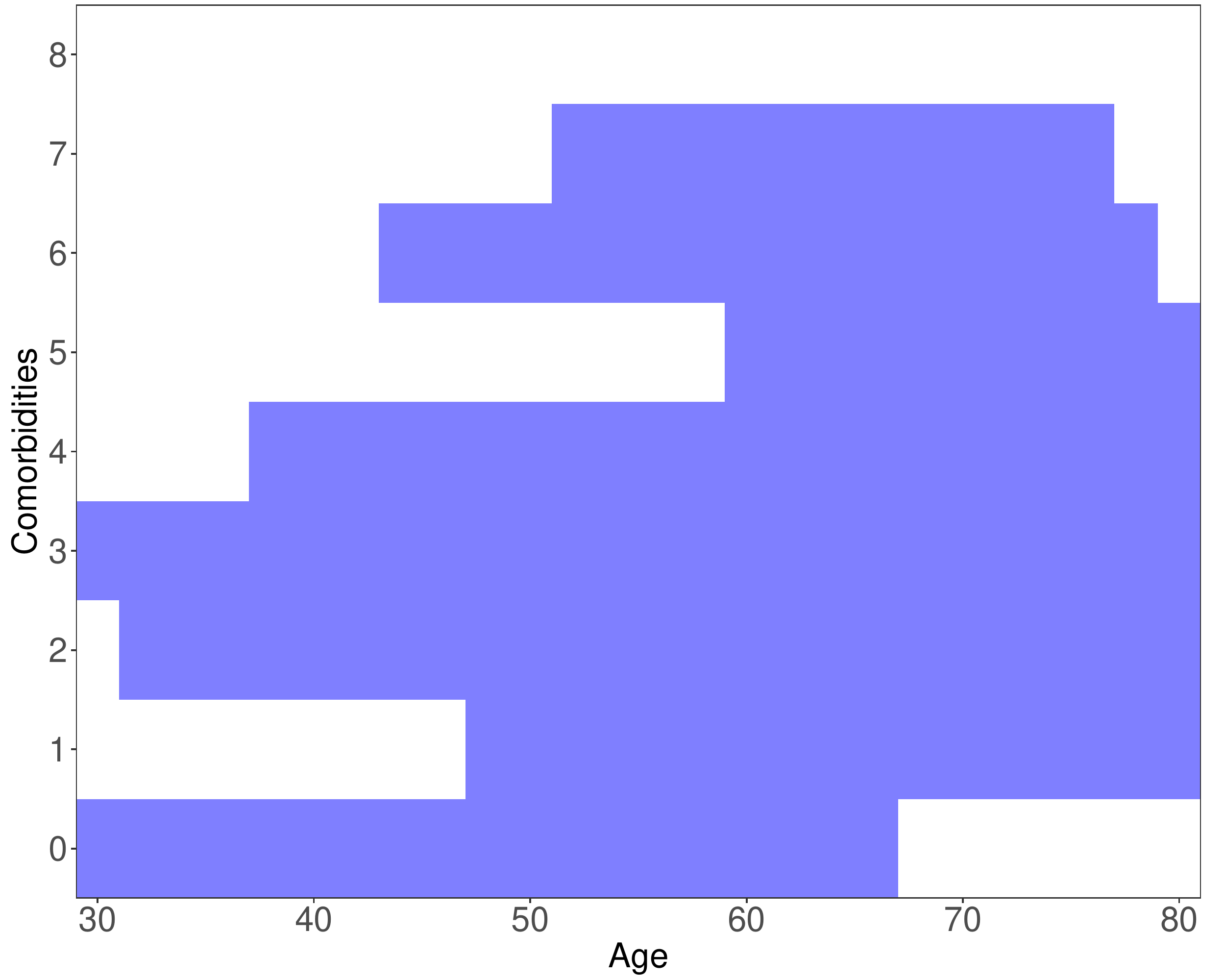}
	}	
	\caption{Data analysis results}
	\label{fig:da}
\end{figure}
\section{Conclusions}
In this work, we have studied the convergence rates for estimating the upper level sets of the conditional average treatment effect (CATE). We have provided upper bounds on the error in estimating this parameter when either DR-Learners or Lp-R-Learners of the CATE are thresholded to yield estimators of the CATE level sets. Furthermore, we have shown that the estimator based on thresholding the Lp-R-Learner is minimax optimal in a particular smoothness model that allows the CATE and the nuisance functions to have different smoothness levels. We have also discussed a straightforward method to construct upper and lower confidence regions for the upper level set.

There are many questions that remain to be investigated. First, implementing the minimax optimal Lp-R-Learner estimator of the CATE presents a few challenges. For example, it would be very useful to study how to choose the right values for the tuning parameters that would adapt to the unknown smoothness of the data generating process. In addition, when the covariates' dimension is large, this estimator requires substantial computational power. Second, our construction used to derive the minimax rates requires that the product of the parameter $\xi$ governing the margin condition times the smoothness $\gamma$ of the CATE is less than the dimension of the covariates. Establishing minimax optimality without imposing this assumption remains an open problem.

It would also be of interest to consider the estimation of related parameters. For example, one could estimate 1) the $\Pb_X$ measure of the CATE upper level set, which could potentially be estimated with even more precision than the upper level set and the CATE itself, 2) the boundary level set at $\theta = 0$, as well as 3) the CATE upper level sets under additional structural constraints, e.g. in cases where the covariates take values on a lower-dim manifold in $\R^d$.

Finally, an important avenue for future work is to consider estimators of CATE upper level sets that are based on empirical risk minimization, as opposed to the one we have considered in this work that consist of simply thresholding estimators of the CATE functions. This would naturally allow the user to pre-specify a family of candidate upper level sets, which can be chosen sufficiently regular, e.g. hyper-rectangles, to have a natural interpretation in the context of the application considered.
\section{Acknowledgments}
MB thanks Profs. Abhishek Ananth, Alejandro Sanchez Becerra, Ruoxuan Xiong, and Miles Lopes, Tudor Manole, the participants at the QTM Seminar at Emory University and those at his doctoral thesis defense for very helpful discussions.

\section{Disclaimer}
The authors declare no conflicts. Research in this article was supported by the National Library of Medicine, \#1R01LM013361-01A1. All statements in this report, including its findings and conclusions, are solely those of the authors. The dataset used for this study was purchased with a grant from the Society of American Gastrointestinal and Endoscopic Surgeons. Although the AMA Physician Masterfile data is the source of the raw physician data, the tables and tabulations were prepared by the authors and do not reflect the work of the AMA. The Pennsylvania Health Cost Containment Council (PHC4) is an independent state agency responsible for addressing the problems of escalating health costs, ensuring the quality of health care, and increasing access to health care for all citizens. While PHC4 has provided data for this study, PHC4 specifically disclaims responsibility for any analyses, interpretations or conclusions. Some of the data used to produce this publication was purchased from or provided by the New York State Department of Health (NYSDOH) Statewide Planning and Research Cooperative System (SPARCS). However, the conclusions derived, and views expressed herein are those of the author(s) and do not reflect the conclusions or views of NYSDOH. NYSDOH, its employees, officers, and agents make no representation, warranty or guarantee as to the accuracy, completeness, currency, or suitability of the information provided here. This publication was derived, in part, from a limited data set supplied by the Florida Agency for Health Care Administration (AHCA) which specifically disclaims responsibility for any analysis, interpretations, or conclusions that may be created as a result of the limited data set.

\bibliographystyle{plainnat}
\bibliography{ref.bib}
\appendix
\section{Proof of Lemma \ref{lemma:lemma3.1}}
Write
\begin{align*}
	\E\{d_H(\widehat\Gamma, \Gamma)\} = \E\int_{\widehat\Gamma^c \cap \Gamma} |\tau(x) - \theta| f(x) dx + \E\int_{\widehat\Gamma \cap \Gamma^c} |\tau(x) - \theta| f(x) dx
\end{align*}
Let $\alpha_n = c_\alpha a_n$ and $\beta_n = c_\beta (\log n)^{-\min\{(1+\xi)^{-1}, \epsilon\}}$, where $c_\alpha = 2c_a$ and 
$$c_\beta = \max[c_b, c_1\{\mu(1 + \xi) / c_7\}^{1/\kappa_2}].$$ Consider the first term and write
\begin{align*}
	& \widehat\Gamma^c \cap \Gamma = \{x \in \mathcal{X}: \widehat\tau(x) \leq \theta \text{ and } \tau(x) > \theta\} = A_1 \cup A_2 \cup A_3 \\
	& A_1 =  \{x \in \mathcal{X}: \widehat\tau(x) \leq \theta  \text{ and } \theta < \tau(x) \leq \theta + \alpha_n\} \\
	& A_2 =  \{x \in \mathcal{X}: \widehat\tau(x) \leq \theta  \text{ and } \theta + \alpha_n < \tau(x) \leq \theta + \beta_n\} \\
	& A_3 =  \{x \in \mathcal{X}: \widehat\tau(x) \leq \theta  \text{ and } \tau(x) > \theta + \beta_n\}
\end{align*}
Let $n_0$ be the integer such that for all $n \geq n_0$, it holds that
\begin{align*}
	\alpha_n < \beta_n < \min(\eta, t_0, \Delta)
\end{align*}
For the proof we assume that the sample size $n$ exceeds $n_0$. 

We have $A_1 \subseteq \{x \in \mathcal{X}: 0< |\tau(x) - \theta| \leq \alpha_n\}$. By Assumption \ref{assumption_margin} (margin condition), we have
\begin{align*}
	\E\int_{A_1} |\tau(x) - \theta| f(x) dx \leq \alpha_n \Pb(0 < |\tau(X) - \theta| \leq a_n) \leq c_0 \alpha_n^{1 + \xi}
\end{align*}
Next, let $J_n = \lfloor \log_2 \{\beta_n / \alpha_n\} \rfloor + 1$. Notice that $\beta_n / \alpha_n \lesssim n^{\mu}$ so that $J_n \lesssim \log n$.  Partition $A_2$ as $A_2= \cup_{j = 1}^{J_n} A_2 \cap \mathcal{V}_j$, where
\begin{align*}
	\mathcal{V}_j = \{x \in \mathcal{X}: \widehat\tau(x) \leq \theta \text{ and } 2^{j - 1} \alpha_n < \tau(x) - \theta \leq 2^{j}\alpha_n\}
\end{align*}
We have
\begin{align*}
	\E\int_{A_2} |\tau(x) - \theta| f(x) dx = \sum_{i = 1}^{J_n} \E\int_{A_2 \cap \mathcal{V}_j} |\tau(x) - \theta| f(x) dx
\end{align*}
and
\begin{align*}
	\mathcal{V}_j \subset \{x \in \mathcal{X}: |\widehat\tau(x) - \tau(x)| > 2^{j - 1} \alpha_n \text{ and } |\tau(x) - \theta| < 2^{j}\alpha_n\} \cap  \mathcal{D}(\min(\eta, t_0))
\end{align*}
To see why this is the case, consider a $x^*$ such that $\tau(x^*) \in (\theta + 2^{j-1}\alpha_n, \theta + 2^j\alpha_n]$ and $\widehat\tau(x^*) \leq \theta$. Clearly, $x^*$ satisfies $\tau(x^*) \in [\theta - 2^{j}a_n, \theta + 2^j\alpha_n]$. Notice that 
$$\tau(x^*) - 2^{j-1}\alpha_n > \theta + 2^{j-1}\alpha_n - 2^{j-1}\alpha_n = \theta \geq \widehat\tau(x^*)$$
for any $j$. The claim follows because we have shown that $\widehat\tau(x^*) < \tau(x^*) - 2^{j-1}\alpha_n$ and thus $x^*$ is in the larger set.

For any $j \geq 1$, we have that $2^{j-1} \alpha_n > c_a a_n$ and
\begin{align*}
	\E\int _{A_2 \cap \mathcal{V}_j} |\tau(x) - \theta| f(x) dx & \leq \|f\|_\infty 2^{j}\alpha_n\int_{\mathcal{X}} \Pb\left(\left|\widehat\tau(x) - \tau(x)\right| > 2^{j-1}\alpha_n \right) \one\{0 < |\tau(x) - \theta| < 2^{j}\alpha_n\} dx\\
	& \leq \| f\|_\infty c_02^{j(1 + \xi)}\alpha_n^{1+\xi} \left\{c_3\exp(-c_42^{(j-1)\kappa_1} c_\alpha^{\kappa_1}) + c_5\frac{\delta^{1 + \xi}_n}{2^{(j - 1)(1 + \xi)} \alpha_n^{1+\xi}} \right\} \\
	& = \| f\|_\infty c_3c_02^{j(1 + \xi)}\alpha_n^{1+\xi} \exp(-c_42^{(j-1)\kappa_1} c_\alpha^{\kappa_1}) + c_5c_02^{1 + \xi} \delta_n^{1 + \xi}
\end{align*}
Thus,
\begin{align*}
	\E\int _{A_2} |\tau(x) - \theta| f(x) dx & = \sum_{j = 1}^{J_n} \E\int _{A_2 \cap \mathcal{V}_j} |\tau(x) - \theta| f(x) dx  \\
	& \leq \| f\|_\infty c_3c_0\alpha_n^{1+\xi}  \sum_{j = 1}^{J_n} 2^{j(1 + \xi)}\exp\left\{-c_4 \left(\frac{c_\alpha}{2}\right)^{\kappa_1}2^{ j\kappa_1}\right\} + J_n c_5c_0 2^{1 + \xi}\delta_n^{1 + \xi} \\
	& \lesssim a_n^{1+\xi} + \delta_n^{1 + \xi} \log n
\end{align*}
The last inequality follows, because for any $a, b, c > 0$, $\sum_{j = 1}^\infty 2^{aj}\exp(-b2^{jc}) < \infty$. In fact, for any $\alpha$, there exists a constant $C$ such that $(1/e^b)^{x} \leq Cx^{-\alpha}$ for any $x \geq 1$. Let $j_0$ be large enough so that $2^{jc - 1} \geq j \geq 1$ for all $j \geq j_0$.  Then, for some constant $C$ and $x_j = 2^{jc- 1}$:
\begin{align*}
	2^{aj}\left(\frac{1}{e^b}\right)^{2^{jc - 1}}= 2^{a/c} x_j^{a/c}\left(\frac{1}{e^b}\right)^{x_j} \leq C \text{ for all } j \geq j_0.
\end{align*}
Then, we have
\begin{align*}
	\sum_{j = 1}^\infty 2^{aj}\exp(-b2^{jc}) & = \sum_{j = 1}^{j_0 - 1} 2^{aj}\exp(-b2^{jc}) + \sum_{j = j_0}^\infty 2^{aj}\exp(-b2^{jc - 1}) \exp(-b2^{jc - 1}) \\
	& \leq \sum_{j = 1}^{j_0 - 1} 2^{aj}\exp(-b2^{jc}) + C\sum_{j = j_0}^\infty \exp(-b2^{jc - 1}) \\
	& < \infty
\end{align*}
Finally, we have that $\beta_n > c_b b_n$ so that
\begin{align*}
	& \E \int_{A_3}|\tau(x) - \theta| f(x) dx  \leq \int_{\mathcal{X}} |\tau(x) - \theta| \Pb\left( |\widehat\tau(x)-\tau(x)|> \beta_n \right) f(x) dx \\
	& \leq \int_{\mathcal{X}} |\tau(x) - \theta| f(x) dx\left(c_6\exp\left[-c_7\left(\frac{c_\beta}{c_1}\right)^{\kappa_2} \left\{\frac{(\log n)^{1/\kappa_2 + \epsilon}}{(\log n)^{\min\{(1+\xi)^{-1}, \epsilon\}}} \right\}^{\kappa_2}\right] + \frac{c_8}{c_\beta^{1+\xi}} \delta_n^{1+\xi} \log n\right) \\
	& \lesssim \exp\left\{-c_7\left(\frac{c_\beta}{c_1}\right)^{\kappa_2} \log n \right\} + c_8\delta_n^{1+\xi} \log n \\
	& = \exp\left[-\max\{c_7(c_b / c_1)^{\kappa_2}, \mu(1 + \xi)\} \log n \right] + c_8 \delta_n^{1+\xi} \log n \\
	& \leq n^{-(1+\xi)\mu} + c_8 \delta_n^{1+\xi} \log n \\
	& \lesssim a_n^{1+\xi} + c_8\delta_n^{1+\xi} \log n
\end{align*}
The bound on $ \E\int_{\widehat\Gamma \cap \Gamma^c} |\tau(x) - \theta| f(x) dx$ follows similarly.

\section{Proof of Lemma  \ref{lemma:moment_bound_dr} \label{app:proof_moment_bound_dr}}
By definition, we have $\widehat\tau(x) = \sum_{i = 1}^n W_i(x; X^n) \widehat\varphi(Z_i)$. Define $\overline\tau(x; X^n) = \sum_{i = 1}^n W_i(x; X^n)\tau(X_i)$ and recall that $\Delta(x; X^n) = \overline\tau(x; X^n) - \tau(x)$. Finally, let $\widehat{b}(X_i) = \E\{\widehat\varphi(Z_i) - \varphi(Z_i) \mid X_i, D^n\}$. We start from the decomposition
\begin{align*}
	\widehat\tau(x) - \tau(x) & =  \sum_{i = 1}^n W_i(x; X^n) \varphi(Z_i) - \overline\tau(x; X^n)  + \sum_{i = 1}^n W_i(x; X^n)\{ \widehat\varphi(Z_i) - \varphi(Z_i) - \widehat{b}(X_i)\} \\
	& \hphantom{=} + \Delta(x; X^n) + \sum_{i = 1}^n W_i(x; X^n) \widehat{b}(X_i).
\end{align*}
Given $(D^n, X^n)$, the last two terms are constants, whereas the first two are mean-zero. We have
\begin{align*}
	&	\Pb\left(\left| \widehat\tau(x) - \tau(x) \right| > t \right) \\
	& \leq \E\left\{\Pb\left(\left|\sum_{i = 1}^n W_i(x; X^n) \varphi(Z_i) - \overline\tau(x; X^n) \right| > \frac{t}{3} - \frac{|\Delta(x; X_n)|}{2} \mid X^n \right) \right\} \\
	& \hphantom{\leq} + \E\left[\Pb\left(\left| \sum_{i = 1}^n W_i(x; X^n)\{ \widehat\varphi(Z_i) - \varphi(Z_i) - \widehat{b}(X_i)\} \right| > \frac{t}{3} - \frac{|\Delta(x; X_n)|}{2} \mid D^n, X^n \right)\right] \\
	& \hphantom{\leq} + \inf_{p > 0} \left(\frac{3}{t}\right)^p \E \left| \sum_{i = 1}^n W_i(x; X^n) \widehat{b}(X_i) \right| ^p
\end{align*}
As in equation 2.16 in \cite{gine2000exponential}, we have, for $p \geq 2$:
\begin{align*}
	& \E\left[\left|\sum_{i = 1}^n W_i(x; X^n) \{\varphi(Z_i) - \tau(X_i)\}\right|^p \mid X^n \right] \\
	& \leq 2^p(p-1)^{p/2} \E\left(\left[\sum_{i = 1}^n W^2_i(x; X^n) \{\varphi(Z_i) - \tau(X_i)\}^2\right]^{p/2} \mid X^n\right) \\
	& \leq (4\|\varphi\|_\infty)^pp^{p/2} S^p(x; X^n)
\end{align*}
Thus,
\begin{align*}
	\E\left[\left|\sum_{i = 1}^n W_i(x; X^n) \varphi(Z_i) - \overline\tau(x; X^n)\right|^p \right] & \leq (4\|\varphi\|_\infty)^pp^{p/2} \E\{S^p(x; X^n)\} \\
	& \equiv (4\|\varphi\|_\infty)^pp^{p/2} s_n^p 
\end{align*}
The following lemma, which can be found for instance in \cite{gine2000exponential} (eq. 3.2, page 14), shows that an exponential inequality follows if all the moments $\E| \widehat\tau(x) - \tau(x)|^p$ are properly controlled. 
\begin{lemma}\label{lemma:from_moment_to_exp}
	Let $X$ be some random variable such that $\E|X|^p \leq a_n^p p^{p / \alpha}$, for all $p \geq p_0$ and some fixed $p_0$. Then,
	\begin{align*}
		\Pb\left(|X| > t \right) \leq e^{p_0} \exp\left\{-\left(\frac{t}{a_ne}\right)^\alpha\right\}
	\end{align*}
\end{lemma}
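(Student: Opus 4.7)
The plan is to derive the tail bound by applying Markov's inequality at an optimized moment, then handle the constraint $p \ge p_0$ by a two-regime argument. Concretely, for any $p \ge p_0$, Markov's inequality together with the hypothesis gives
\begin{equation*}
\Pb(|X| > t) \le \frac{\E|X|^p}{t^p} \le \left(\frac{a_n \, p^{1/\alpha}}{t}\right)^p.
\end{equation*}
The exponent is minimized (treating $p$ as continuous) by choosing $p^\ast = (t/(a_n e))^\alpha$, which makes the ratio inside the $p$-th power equal to $1/e$, yielding the clean expression $\exp\{-(t/(a_n e))^\alpha\}$.

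The one obstacle is that $p^\ast$ is only admissible when $p^\ast \ge p_0$, i.e.\ when $t \ge a_n e\, p_0^{1/\alpha}$. In this ``large-$t$'' regime, plugging $p = p^\ast$ into the Markov bound gives $\Pb(|X| > t) \le \exp\{-(t/(a_n e))^\alpha\}$, which is stronger than the claimed bound (since $e^{p_0} \ge 1$). In the complementary ``small-$t$'' regime where $p^\ast < p_0$, the claimed bound is trivially true: the inequality $(t/(a_n e))^\alpha \le p_0$ rearranges to $e^{p_0}\exp\{-(t/(a_n e))^\alpha\} \ge 1 \ge \Pb(|X|>t)$, so there is nothing to prove. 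Combining the two regimes gives the stated inequality uniformly in $t \ge 0$.

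I do not expect any real technical difficulty here; the only point that requires a sentence of care is the bookkeeping around the constraint $p \ge p_0$, which is why the factor $e^{p_0}$ has to be paid on the right-hand side. The proof is short enough that I would present it in essentially the two displays above.
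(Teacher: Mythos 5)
Your proposal is correct and follows essentially the same route as the paper: apply Markov's inequality at the $p$-th moment, plug in $p = (t/(a_n e))^\alpha$ so that the base $a_n p^{1/\alpha}/t$ collapses to $1/e$, and absorb the constraint $p \ge p_0$ by noting that when this choice of $p$ is below $p_0$ the right-hand side exceeds $1$. One small cosmetic nit: the chosen $p$ is not actually the minimizer of the Markov bound over $p$ (the true minimizer is $(t/a_n)^\alpha / e$, which gives a slightly sharper but messier bound), but this does not affect the argument, since any admissible $p$ yields a valid bound and your computation with this particular $p$ is exact.
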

\begin{proof}
	For $t$ such that $p = (te^{-1}a_n^{-1})^{\alpha} \geq p_0$, we have the bound $t^{-p}\E|X|^p \leq e^{-p}$. For all $t$, we thus have $\Pb(|X| > t) \leq e^{p_0 - p}$, since for values of $t$ for which $p < p_0$,  $e^{p_0 - p} > 1$ is a valid bound. 
\end{proof}
In light of Lemma \ref{lemma:from_moment_to_exp}, we have for all $t \geq 3|\Delta(x; X^n)|$:
\begin{align*}
	& \Pb\left(\left|\sum_{i = 1}^n W_i(x; X^n) \varphi(Z_i) - \overline\tau(x; X^n) \right| > \frac{t}{3} - \frac{|\Delta(x; X_n)|}{2} \mid X^n \right) \\
	&  \leq \Pb\left(\left|\sum_{i = 1}^n W_i(x; X^n) \varphi(Z_i) - \overline\tau(x; X^n) \right| > \frac{t}{6} \mid X^n \right) \\
	& \leq e^2\exp\left\{ -  \left(\frac{t}{24e \|\varphi\|_\infty s_n }\right)^2 \right\}
\end{align*}
Similarly,
\begin{align*}
	& \Pb\left(\left|\sum_{i = 1}^n W_i(x; X^n) \{ \widehat\varphi(Z_i) - \varphi(Z_i) - \widehat{b}(X_i)\}  \right| > \frac{t}{3} - \frac{|\Delta(x; X_n)|}{2} \mid D^n, X^n \right) \\
	& \leq e^2\exp\left\{ -  \left(\frac{t}{12c_2\|\varphi\|_\infty e s_n }\right)^2 \right\}
\end{align*}
under the assumption that $\|\widehat\varphi - \varphi - \widehat{b}\|_\infty \leq c_2\|\varphi\|_\infty$. Therefore, we conclude that
\begin{align*}
	\Pb\left(\left| \widehat\tau(x) - \tau(x) \right| > t \right) \leq  2e^2 \exp\left\{-\left(\frac{t}{12(c_2 \lor 2) e \|\varphi\|_\infty s_n} \right)^2\right\} + 3^{1 + \xi}\left(\frac{\delta_n}{t}\right)^{1 + \xi}
\end{align*}
for all $t \geq 3|\Delta(x; X^n)|$.

\section{Proof of Lemma \ref{lemma:moment_lp-r-learner}}  
Recall that the Lp-R-Learner can be written as $\widehat\tau(x_0) = \rho_h(x_0)^T \widehat{Q}^{-1} \widehat{R}$, where
\begin{align*}
	\widehat{Q} = \mathbb{U}_n \{\widehat{f}_1(Z_1, Z_2)\} \quad \text{and} \quad \widehat{R} = \mathbb{U}_n\{\widehat{f}_2(Z_1, Z_2)\}
\end{align*}
for some functions $\widehat{f}_1$ and $\widehat{f}_2$ described in Definition \ref{def:lp-r-learner}. Define $\tau_h(x_0) = \rho_h(x_0)^TQ^{-1}R$ to be the projection parameter. By proposition 4 in \cite{kennedy2022minimax}, it holds that
\begin{align*}
	\left|\tau_h(x_0) - \tau(x_0) \right| \leq \begin{cases} 
		& c_1h^\gamma \quad \text{ if } x_0 \in D(\eta) \\
		& c_2h^{\gamma'} \quad \text{ if } x_0 \not\in D(\eta)
	\end{cases}
\end{align*}
for some constants $c_1$ and $c_2$. Let $\widehat{S} = \widehat{R} - \widehat{Q}Q^{-1}R$. By Proposition 6 in \cite{kennedy2022minimax}, we have, under the conditions of the theorem, for $J = {{d + \lfloor \gamma \rfloor} \choose {\lfloor \gamma \rfloor}}$ and a constant $c_3$:
\begin{align*}
	\{\widehat\tau(x_0) - \tau_h(x_0)\}^2 \leq c_3\sum_{j=1}^J \widehat{S}^2_j
\end{align*}
Therefore, if $x_0 \in D(\eta)$ and $t \geq 2c_1h^\gamma$:
\begin{align*}
	&	\Pb\left( \left|\widehat\tau(x_0) - \tau(x_0)\right| > t \right) \leq 	\Pb\left( \left|\widehat\tau(x_0) - \tau_h(x_0)\right| > t - c_1h^\gamma \right) \\
	& \leq 	\Pb\left( \left\{\widehat\tau(x_0) - \tau_h(x_0)\right\}^2> \frac{t^2}{4} \right) \\
	& \leq \sum_{j = 1}^J \Pb\left( \left| \widehat{S}_j \right| > \frac{t}{2\sqrt{c_3J}} \right)
\end{align*}
Next, we bound
\begin{align*}
	\Pb\left( \left| \widehat{S}_j \right| > \frac{t}{2\sqrt{c_3J}} \right)
\end{align*}
Recall that $\widehat{S}_j = \mathbb{U}_n \{g(Z_1, Z_2)\}$, where
\begin{align*}
	&g(Z_1, Z_2) = f_{2j}(Z_1, Z_2) - [\widehat{Q}Q^{-1}R]_j \\
	& f_{2j} (Z_1, Z_2) = \rho_{hj}(X_1) K_h(X_1)\widehat\varphi_{y1}(Z_1)+ \rho_{hj}(X_1) K_{h}(X_1)\widehat\varphi_{y2}(Z_1, Z_2) K_{h}(X_2) \\
	& \varphi_{y1}(Z_1) = \{Y -\mu_0(X_1)\}\{A - \pi(X_1)\} \\
	& \varphi_{y2}(Z_1, Z_2) = -\{A_1 - \pi(X_1)\} b^T_{h}(X_1) \Omega^{-1} b_h(X_2)\{Y_2 - \mu_0(X_2)\} \\
	& \Omega = \E\{b_h(X)b_h(X)^T\}, \text{ for } K_h(x) = h^{-d}\one(2\|x - x_0\|\leq h).
\end{align*}
It will be useful to write $\widehat{S}_j$ as a sum of degenerate $U$-statistics, as follows:
\begin{align*}
	\widehat{S}_j  & =	\mathbb{U}_n g(Z_1, Z_2) = \mathbb{U}_n\{g_D(Z_1, Z_2)\} + \Pn\{ g_1(Z_1) \} + \Pn\{g_2(Z_2)\} + \int g(z_1, z_2) d\Pb(z_1)d\Pb(z_2)
\end{align*}
where
\begin{align*}
	& g_D(Z_1, Z_2) = g(Z_1, Z_2) - \int g(z_1, Z_2)d\Pb(z_1) - \int g(Z_1, z_2) d\Pb(z_2) + \int g(z_1, z_2) d\Pb(z_1)d\Pb(z_2) \\
	& g_1(Z_1) = \int g(Z_1, z_2) d\Pb(z_2) - \int g(z_1, z_2) d\Pb(z_1)d\Pb(z_2) \\
	& g_2(Z_2) = \int g(z_1, Z_2) d\Pb(z_1) - \int g(z_1, z_2) d\Pb(z_1)d\Pb(z_2)
\end{align*}
Thus, we have
\begin{align*}
	&\Pb\left(\left|\widehat{S}_j \right| > \frac{t}{2\sqrt{c_3J}}\right) \\
	& \leq
	\Pb\left(\left| \mathbb{U}_n\{g_D(Z_1, Z_2)\}  \right| + \left|\Pn\{ g_1(Z_1) \}\right| +\left| \Pn\{g_2(Z_2)\} \right|  > \frac{t}{2\sqrt{c_3J}} - \left|\int g(z_1, z_2) d\Pb(z_1)d\Pb(z_2)\right|\right)
\end{align*}
By proposition 9 in \cite{kennedy2022minimax}, there exists a constant $c_4$ such that 
\begin{align*}
	\left| \int g(z_1, z_2)d\Pb(z_1)d\Pb(z_2)\right|  \leq c_4 \left(\frac{k}{h^d} \right)^{-2s/d}
\end{align*} 
Therefore, for $t \geq 4\sqrt{c_3J} c_4 \left(\frac{k}{h^d} \right)^{-2s/d}$:
\begin{align*}
	& \Pb\left(\left|\widehat{S}_j - S_j \right| > \frac{t}{2\sqrt{2c_3J}}\right)  \leq \Pb\left(\left| \mathbb{U}_n\{g_D(Z_1, Z_2)\}  \right| + \left|\Pn\{ g_1(Z_1) \}\right| +\left| \Pn\{g_2(Z_2)\} \right|  > \frac{t}{4\sqrt{c_3J}} \right) \\
	& \leq \Pb\left(\left| \mathbb{U}_n\{g_D(Z_1, Z_2)\}  \right| > \frac{t}{12 \sqrt{c_3J}} \right) + \Pb\left(\left|\Pn\{ g_1(Z_1) \}\right| > \frac{t}{12\sqrt{c_3J}} \right) \\
	& \hphantom{\leq} +\Pb\left(\left| \Pn\{g_2(Z_2)\} \right|  > \frac{t}{12\sqrt{c_3J}} \right)
\end{align*}
The second and third terms can be analyzed by Bernstein's inequality. For the first term, we use a concentration inequality for $U$-statistics derived in \cite{gine2000exponential} and restated below. See also \cite{cattaneo2022boundary} for a similar use of this lemma.
\begin{lemma}[Equation 3.5 in \cite{gine2000exponential}]\label{lemma:gine2000}
	Let $f_{ij}(z_i, \widetilde{z}_j)$ be the kernel of a degenerate and decoupled second order $U$-statistic. Define
	\begin{align*}
		& A = \max_{1 \leq i, j \leq n} \sup_{z, \widetilde{z}} |f_{ij}(z, \widetilde{z})|, \quad B^2 = \max \left[ \sup_{\widetilde{z}} \sum_{i = 1}^n \E\left\{ f^2_{ij}(Z_i, \widetilde{z}) \right\}, \  \sup_{z} \sum_{j = 1}^n \E\left\{ f^2_{ij}(z, \widetilde{Z}_j) \right\} \right] \\
		& C^2 = \sum_{1 \leq i, j \leq n} \E \left\{f^2_{ij}(Z_i, \widetilde{Z}_j) \right\}
	\end{align*}
	where $\{Z_i,  1 \leq i \leq n\}$ are independent random variables and $\{\widetilde{Z}_j,  1 \leq j \leq n\}$ are independent copies of $Z_i$. Then, for a universal constant $K$, the following holds
	\begin{align*}
		\Pb\left(\left| \sum_{i, j} f_{ij}(z_i, \widetilde{z}_j) \right| > t \right) \leq K \exp\left[-\frac{1}{K}\min\left\{\frac{t}{C}, \left(\frac{t}{B}\right)^{2/3}, \left(\frac{t}{A} \right)^{1/2}\right\}\right]
	\end{align*}
\end{lemma}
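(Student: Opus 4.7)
This is the Gin\'{e}--Lata\l{}a--Zinn exponential inequality for decoupled degenerate $U$-statistics of order two, and the cleanest route is to derive it from a matching Rosenthal-type moment bound and then apply Markov's inequality. Concretely, the plan is first to establish that for every integer $p \geq 2$,
\begin{align*}
    \bigl\|T\bigr\|_p \leq K_0\bigl(pC + p^{3/2}B + p^2 A\bigr), \qquad T := \sum_{i,j} f_{ij}(Z_i, \widetilde{Z}_j),
\end{align*}
and then convert this into the stated tail bound by taking $p = \min\{t/C,\,(t/B)^{2/3},\,(t/A)^{1/2}\}/(eK_0)$ and invoking Markov: $\Pb(|T| > t) \leq (\|T\|_p / t)^p \leq e^{-p}$, which produces the three regimes in the exponential after adjusting the constant $K$.

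For the moment bound I would iterate a conditional Rosenthal/Bernstein argument. First freeze $(Z_i)_{i \leq n}$ and view $T = \sum_j h_j(\widetilde{Z}_j)$ with $h_j(\widetilde z) := \sum_i f_{ij}(Z_i, \widetilde z)$. By degeneracy in the second argument, $\E[h_j(\widetilde Z_j) \mid (Z_i)] = 0$, so the summands are independent and mean-zero conditionally, and a standard Rosenthal inequality yields
\begin{align*}
    \E\bigl[|T|^p \mid (Z_i)\bigr]^{1/p} \lesssim \sqrt{p}\,\Bigl(\sum_j \E\bigl[h_j(\widetilde Z_j)^2 \mid (Z_i)\bigr]\Bigr)^{1/2} + p\,\bigl\|\max_j |h_j(\widetilde Z_j)|\bigr\|_{L_p(\widetilde Z \mid Z)}.
\end{align*}
The conditional second-moment sum is a quadratic form in $(Z_i)$; taking its $L_{p/2}$ norm and applying another round of Rosenthal, this time in the $Z_i$'s and using degeneracy in the first argument, splits it into a piece controlled by the overall second moment $C^2$ and a piece controlled by the supremum of conditional second moments $B^2$. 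The conditional maximum of $p$ exponential-type variables then contributes a piece of order $pA$ from the uniform bound on $f$ and a further piece of order $\sqrt{p}\,B$ from the conditional standard deviations. Reassembling gives the three disjoint scales $pC$, $p^{3/2}B$, $p^2A$.

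The delicate part of the argument is tracking the powers of $p$ across the two rounds of conditioning: each Rosenthal step contributes a $\sqrt{p}$ (variance term) and a $p$ (maximum term), and for a chaos of order two these compositions must produce exactly $p$, $p^{3/2}$, $p^2$ so that the Markov step translates them into the exponents $1$, $2/3$, $1/2$ in the exponential tail (a moment bound $\|T\|_p \lesssim p^{1/\alpha}M$ yields a tail of the form $\exp(-c(t/M)^\alpha)$). A sloppy application of the conditional bounds would collapse the three scales into the worst one and lose the sub-Gaussian-like and sub-gamma-like portions of the tail. A clean packaging uses Talagrand's empirical-process moment inequality once in each coordinate (or the version in Adamczak 2006, Theorem 1), directly producing the three-term decomposition. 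Decoupling itself is essentially for free here, since the statement already assumes independent copies $(\widetilde Z_j)$, so no Pe\~{n}a--Montgomery-Smith step is needed before the conditioning argument begins.
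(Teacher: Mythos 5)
The paper does not actually prove Lemma \ref{lemma:gine2000}: it is imported (in slightly weakened form) from \cite{gine2000exponential}, so there is no internal argument to compare yours against; what you have sketched is essentially a reconstruction of the original Gin\'e--Lata\l{}a--Zinn proof. They establish the moment inequality $\|T\|_p \lesssim \sqrt{p}\,C + p\,D + p^{3/2}B + p^{2}A$ (with $D$ an operator-norm-type quantity satisfying $D \leq C$) by iterated conditional Rosenthal/Hoffmann--J{\o}rgensen arguments and then optimize Markov's inequality in $p$, exactly the route you propose; your weaker target $\|T\|_p \lesssim pC + p^{3/2}B + p^{2}A$ follows from theirs since $D\leq C$, and after the Markov step it reproduces precisely the three-regime bound stated here, which replaces the sharp sub-Gaussian regime $(t/C)^2$ by $t/C$ --- harmless, because for $t \leq C$ the stated bound is trivial once $K$ is enlarged. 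Two steps in your sketch carry the real weight and are thinner than ``standard Rosenthal'': (i) the centered off-diagonal part of the conditional quadratic form $\sum_j \E\{h_j^2(\widetilde Z_j)\mid Z\}$ is itself a degenerate second-order $U$-statistic, and controlling its $L_{p/2}$ norm is exactly where the operator-norm term $D$ arises in the sharp version; your cruder $pC$ target makes this tractable, but it still requires an induction or a Talagrand/Adamczak-type empirical-process moment bound, as you acknowledge; and (ii) bounding $p\,\|\max_j |h_j(\widetilde Z_j)|\|_{p}$ by $p^{3/2}B + p^{2}A$ without picking up dimension factors (the $n^{1/p}$ cost of replacing a maximum by a sum when $p \ll \log n$) needs a Hoffmann--J{\o}rgensen/maximal-inequality step, not Rosenthal alone; one also needs $p \geq 2$ in the Markov step, with small $t$ absorbed into $K$. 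With those ingredients filled in (or by invoking the Adamczak-type inequality coordinatewise, as you suggest), the argument is sound, and you are right that no de la Pe\~na--Montgomery-Smith decoupling is needed for the lemma as stated --- the paper invokes decoupling only later, when applying the lemma to an undecoupled $U$-statistic.
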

Lemma \ref{lemma:gine2000} is for decoupled $U$-statistics, however, because of the result in \cite{de1995decoupling}, as noted in \cite{gine2000exponential} (pages 15 and 20), the same conclusion holds for regular, undecoupled $U$-statistics simply with $K$ replaced by a different constant. Thus, we will apply Lemma \ref{lemma:gine2000} without performing the additional decoupling step or introducing a different constant. In particular, we apply Lemma \ref{lemma:gine2000} with $f_{ij}(z_i, z_j) = \{n(n-1)\}^{-1}g_D(z_i, z_j)$ if  $i \neq j$ and zero otherwise; that is:
\begin{align*}
	& \Pb\left(\left| \frac{1}{n(n-1)}\sum_{i \neq j} g_D(z_i, z_j) \right| > t \mid D^n \right) \leq K \exp\left[-\frac{1}{K}\min\left\{\frac{t}{C}, \left(\frac{t}{B}\right)^{2/3}, \left(\frac{t}{A} \right)^{1/2}\right\}\right],\\
	&  \text{where } A = \frac{\sup_{z_1, z_2} |g_D(z_1, z_2)|}{n(n-1)}, \\
	& B^2 = \max \left[ \sup_{z_2} \frac{\E\left\{ g_D^2(Z_1, z_2) \mid D^n \right\}}{n^2(n-1)}, \  \sup_{z_1} \frac{\E\left\{ g_D^2(z_1, Z_2) \mid D^n \right\}}{n^2(n-1)}\right], \text{ and} \\
	& C^2 =  \frac{\E \left\{g_D^2(Z_1, Z_2) \mid D^n \right\}}{n(n-1)}
\end{align*}
\subsection{Bound on  $\Pb\left(\left|\mathbb{U}_n g_D(Z_1, Z_2) \right| > \frac{t}{12\sqrt{c_3J}} \mid D^n \right)$} 
First, notice that, under the assumption that $\|\widehat{Q}\| \lesssim 1$ and $\|Q^{-1}\| \lesssim 1$,
\begin{align*}
	[\widehat{Q}Q^{-1}R]_j & \lesssim \max_j|R_j| \\
	& = \max_j \left| \int \rho_{hj} (x) K_{h}(x) \var(A \mid X = x) \tau(x) f(x) dx \right| \\
	& = \max_j \left| \int \rho_j(1/2 + v) K(v) \var(A \mid X = x_0 + vh) \tau(x_0 + vh) f(x_0 + vh) dv\right| \\
	& \leq \max_j \sup_v|\rho_j(1/2 + v) \var(A \mid X = x_0 + vh) \tau(x_0 + vh) f(x_0 + vh)| \left|\int K(v) dv \right| \\
	& \lesssim 1
\end{align*}
and
\begin{align*}
	\left| \int g(z_1, z_2)d\Pb(z_1) \right| &  \leq \left|\int f_{2j}(z_1, z_2) d\Pb(z_1) \right| + [\widehat{Q}Q^{-1}R_j]_j \\&  \lesssim h^{-d} \one(2\|x_2 - x_0\| \leq h )\left|  \widehat\Pi \left(\frac{dF}{d\widehat{F}}(\pi - \widehat\pi)\right) (x_2) \right| + 1 \lesssim h^{-d}
\end{align*}
\begin{align*}
	\left| \int g(z_1, z_2)d\Pb(z_2) \right| & \leq \left|\int f_{2j}(z_1, z_2) d\Pb(z_2) \right| + [\widehat{Q}^{-1}QR_j]_j \\ &   \lesssim h^{-d}\one(2\|x_1 - x_0\| \leq h )\left| 1 - \widehat\Pi \left(\frac{dF}{d\widehat{F}}(\mu - \widehat\mu_0)\right) (x_1) \right| + 1 \lesssim h^{-d}
\end{align*}
and recall that by Proposition 9 in \cite{kennedy2022minimax}, 
\begin{align*}
	\left| \int g(z_1, z_2) d\Pb(z_1, z_2) \right| \leq c_4 \left(\frac{k}{h^{d}}\right)^{-2s/d}
\end{align*}
for some constant $c_4$. 

\textbf{Term A}.

We have
\begin{align*}
	\sup_{z_1, z_2} |g_D(z_1, z_2)| & \leq 4 \sup_{z_1, z_2} |g(z_1, z_2)| \\
	& \lesssim \sup_{z_1, z_2} \left|\rho_{hj}(x_1)K_{h}(x_1) \widehat\varphi_{y1}(z_1) + \rho_{hj}(x_1) K_h(x_1) \widehat\varphi_{y2}(z_1, z_2) K_{h}(x_2)\right|\\
	& \lesssim k h^{-2d}
\end{align*}
Thus, there exists a constant $c_A$ such that $A \lesssim \frac{k}{n(n-1)h^{2d}} \leq c_A\sqrt{ \frac{k}{n(n-1)h^{2d}}}$ since $kh^{-2d}n^{-2} \to 0$. 

\textbf{Term B}. 

We have
\begin{align*}
	& f_{2j}^2(z_1, z_2) \\
	& \lesssim h^{-2d}\one(2\|x_1 - x_0\| \leq h) \\
	& \hphantom{\lesssim} + h^{-2d}\one(2\|x_1 - x_0\| \leq h) h^{-2d}\one(2\|x_2 - x_0\| \leq h) b^T_h(X_1) \widehat\Omega^{-1} b_h(X_2) b^T_h(X_2) \widehat\Omega^{-1} b_h(X_1)
\end{align*}
and, for instance, for $dF^*(v) = dF(x_0 + h(v - 0.5))$:
\begin{align*}
	& \int  f_{2j}^2(z_1, z_2) d\Pb(z_1)  \lesssim h^{-d} \int_{v: \|v - 0.5\|\leq 0.5} dF^*(v) \\
	& \hphantom{\lesssim} + h^{-3d}\one(2\|x_2 - x_0\| \leq h) b_h(x_2)^T\widehat\Omega^{-1} \int_{v: \|v - 0.5\|\leq 0.5} b(v)b(v)^T dF^*(v)\widehat\Omega^{-1} b_h(x_2) \\
	& \lesssim \frac{k}{h^{3d}}
\end{align*}
and similarly for $\int  f_{2j}^2(z_1, z_2) d\Pb(z_2)$. Therefore,
\begin{align*}
	\sup_{z_2} \int g^2(z_1, z_2) d\Pb(z_1) & \lesssim \frac{k}{h^{3d}}  + [\widehat{Q}^{-1}QR_j]_j^2\lesssim \frac{k}{h^{3d}}
\end{align*}
and similarly for $\sup_{z_1} \int g^2(z_1, z_2) d\Pb(z_2)$. Furthermore, 
\begin{align*}
	& 	\sup_{z_2}\left| \int g(z_1, z_2)d\Pb(z_1) \right|  \lesssim h^{-d}, \quad \sup_{z_1} \left| \int g(z_1, z_2)d\Pb(z_2) \right|  \lesssim h^{-d}, \\
	&\text{and} \quad \left|\int g(z_1, z_2)d\Pb(z_1)d\Pb(z_2)\right| \lesssim \left(\frac{k}{h^{d}}\right)^{-2s/d}.
\end{align*}
Therefore,
\begin{align*}
	\sup_{z_2} \int g^2_D(z_1, z_2) d\Pb(z_1) \lesssim \frac{k}{h^{3d}} \quad \text{and} \quad 	\sup_{z_1} \int g^2_D(z_1, z_2) d\Pb(z_2) \lesssim \frac{k}{h^{3d}}.
\end{align*}
Thus, for some constant $c_B$, we have
\begin{align*}
	B \lesssim \frac{1}{\sqrt{nh^d}} \cdot \sqrt{\frac{k}{n(n-1)h^{2d}}} \leq c_B \sqrt{\frac{k}{n(n-1)h^{2d}}}
\end{align*} 
\textbf{Term C}. 

\begin{align*}
	& \int  f_{2j}^2(z_1, z_2) d\Pb(z_1) d\Pb(z_2)  \lesssim h^{-d} \int_{v: \|v - 0.5\|\leq 0.5} dF^*(v) \\
	& \hphantom{\lesssim} + h^{-2d} \int_{v:\|v - 0.5\|\leq0.5} b_h(v)^T\widehat\Omega^{-1} \int_{v: \|v - 0.5\|\leq 0.5} b(v)b(v)^T dF^*(v)\widehat\Omega^{-1} b_h(v) dF^*(v) \\
	& \lesssim \frac{k}{h^{2d}}
\end{align*}
so that
\begin{align*}
	\int g^2(z_1, z_2) d\Pb(z_1) d\Pb(z_2) & \lesssim  \frac{k}{h^{2d}}
\end{align*}
Furthermore, 
\begin{align*}
	\left| \int g(z_1, z_2)d\Pb(z_1)\right| \lesssim h^{-d} \quad \text{and} \quad \left| \int g(z_1, z_2)d\Pb(z_2) \right| \lesssim h^{-d}
\end{align*}
Therefore, 
\begin{align*}
	\int g_D^2(z_1, z_2)d\Pb(z_1)d\Pb(z_2)& \lesssim \int g^2(z_1, z_2) d\Pb(z_1) d\Pb(z_2) + \int \left\{\int g(z_1, z_2)d\Pb(z_1)\right\}^2d\Pb(z_2) \\
	& \hphantom{\lesssim} + \int \left\{\int g(z_1, z_2)d\Pb(z_2)\right\}^2d\Pb(z_1) \\
	& \lesssim \frac{k}{h^{2d}}
\end{align*}
This means that $C \leq c_C \sqrt{\frac{k}{n(n-1)h^{2d}}}$ for some constant $c_C$. 

To recap, we have derived that for some constant $K$, which now includes $c_A, c_B$ and $c_C$:
\begin{align*}
	& \Pb\left(\left|\Un g_D(Z_1, Z_2) \right| > \frac{t}{2\sqrt{c_3J}} \mid D^n \right) \\
	& \leq K \exp\left[-\frac{1}{K} \min \left\{ \frac{t}{\sqrt{\frac{k}{n(n-1)h^{2d}}}}, \left( \frac{t}{\sqrt{\frac{k}{n(n-1)h^{2d}}}}\right)^{2/3}, \left( \frac{t}{\sqrt{\frac{k}{n(n-1)h^{2d}}}}\right)^{1/2}\right\}\right],
\end{align*}
for $t \geq 4\sqrt{c_3J}c_4 \left(\frac{k}{h^d}\right)^{-2s/d}$.
\subsection{Bound on  $\Pb\left(\left|\Pn g_1(Z_1) \right| > \frac{t}{12\sqrt{c_3J}} \mid D^n \right)$} 
By Bernstein's inequality, we have
\begin{align*}
	\Pb\left(\left|\Pn g_1(Z_1) \right| > \frac{t}{12\sqrt{c_3J}} \mid D^n \right) \leq 2 \exp\left( -\frac{n(t/\{12\sqrt{c_3J}\})^2 / 2}{\E\{g^2_1(Z_1) \mid D^n\} + \sup_{z_1}(3)^{-1}|g_1(z_1)|t} \right)
\end{align*}
We have
\begin{align*}
	\sup_{z_1} \left| \int g(z_1, z_2)d\Pb(z_2) \right|  \leq c_5h^{-d}, \quad \text{and} \quad \left|\int g(z_1, z_2)d\Pb(z_1)d\Pb(z_2)\right| \leq c_4 \left(\frac{k}{h^{d}}\right)^{-2s/d}.
\end{align*}
Moreover,
\begin{align*}
	\int \left\{\int g(z_1, z_2) d\Pb(z_2) \right\}^2 d\Pb(z_1) \leq c_6h^{-d}
\end{align*}
Therefore, we have
\begin{align*}
	& \E\{g^2_1(Z_1) \mid D^n\} \leq 2c_6h^{-d} + 2 c_4 \left(\frac{k}{h^{d}}\right)^{-4s/d} \leq c_7 h^{-d} \quad \text{and} \\ & \sup_{z_1} |g_1(z_1)| \leq c_5h^{-d} + c_4 \left(\frac{k}{h^{d}}\right)^{-2s/d} \leq c_8h^{-d}
\end{align*}
since $k/h^d \to \infty$. Therefore, we conclude that for all $t \leq 3c_7 / c_8$ and $c_9 = 1/(4\cdot 12^2 c_3Jc_7)$:
\begin{align*}
	\Pb\left(\left|\Pn g_1(Z_1) \right| > \frac{t}{12\sqrt{c_3J}} \mid D^n \right) \leq 2\exp\left( - c_9nh^d t^2\right)
\end{align*}
\subsection{Bound on  $\Pb\left(\left|\Pn g_2(Z_2) \right| > \frac{t}{12\sqrt{c_3J}} \mid D^n \right)$} 
A similar to the one above yields that there exist constants $c_{10}$ and $\Delta_{10}$ such that
\begin{align*}
	\Pb\left(\left|\Pn g_2(Z_2) \right| > \frac{t}{12\sqrt{c_3J}} \mid D^n \right) \leq 2\exp\left( - c_{10}nh^d t^2\right) \text{ for all } t \leq \Delta_{10} 
\end{align*}
\subsection{Final step}
To conclude, notice that
\begin{align*}
	\sqrt{\frac{k}{n(n-1) h^{2d}}} = \sqrt{\frac{k}{n^2 h^{2d}} + {\frac{k}{n^2(n-1)h^{2d}}}} \lesssim  \sqrt{\frac{k}{n^2 h^{2d}}} \equiv a_n
\end{align*}
We have obtained that, for $x_0 \in D(\eta)$, there exists constants $K$, $\Delta$,  $c_{11}$ and $c_{12}$ such that, for all $c_{11}\max\left\{h^\gamma, \left(\frac{k}{h^d}\right)^{-2s/d} \right\} \leq t \leq \Delta$, it holds that:
\begin{align*}
	\Pb\left(\left|\widehat\tau(x_0) - \tau(x_0) > t\right| \mid D^n \right) & \leq K \exp \left[-\frac{1}{K} \min \left\{\frac{t}{a_n}, \left(\frac{t}{a_n}\right)^{2/3}, \left(\frac{t}{a_n}\right)^{1/2} \right\} \right] \\
	& \hphantom{\leq} + 4\exp\left(-c_{12} nh^d t^2\right)
\end{align*}
The optimal choice of $k$ and $h$ depends on the values of $\gamma$, $s$ and $d$. In particular, we distinguish two cases.

\textbf{Case I: $s \geq \frac{d/4}{1 + d/2\gamma}$}. Set
\begin{align*}
	h = n^{-1/(2\gamma + d)} \text{ and } k = nh^d = n^{2\gamma/(2\gamma + d)} \implies a_n = \frac{1}{\sqrt{nh^d}} =  n^{-\gamma/(2\gamma + d)}.
\end{align*} 
In this case, we have for all $t$ such that $c_a a_n \leq t \leq \Delta$ for some constant $c_a$:
\begin{align*}
	\Pb\left(\left|\widehat\tau(x_0) - \tau(x_0) > t\right| \mid D^n \right) \leq K \exp \left[-\frac{1}{K} \min \left\{\left(\frac{t}{a_n}\right)^2, \frac{t}{a_n}, \left(\frac{t}{a_n}\right)^{2/3}, \left(\frac{t}{a_n}\right)^{1/2} \right\} \right]
\end{align*}
for some constant $K$.

\textbf{Case II: $s < \frac{d/4}{1 + d/2\gamma}$}. Define $T = 1 + d/(2\gamma) + d/(4s)$, where we recall $s = (\alpha + \beta) / 2$ ($\alpha$ = smoothness of $\pi, \widehat\pi$ and $\beta$ = smoothness of $\mu, \widehat\mu$). Set
\begin{align*}
	& h = n^{-1/(T\gamma)} \text{ and } k = n^{ \{d/(2s) - d/\gamma\} / T} \\ & \implies a_n = n^{-1/T}  \text{ and } a_n \geq \frac{1}{\sqrt{nh^d}}= n^{d/(2T\gamma) - 1/2} \text{ because } s < \frac{d/4}{1 + d/2\gamma}. 
\end{align*}
Thus, in this case too, we have for all $t$ such that $c_a a_n \leq t \leq \Delta$ for some constant $c_a$:
\begin{align*}
	\Pb\left(\left|\widehat\tau(x_0) - \tau(x_0) > t\right| \mid D^n \right) \leq K \exp \left[-\frac{1}{K} \min \left\{\left(\frac{t}{a_n}\right)^2, \frac{t}{a_n}, \left(\frac{t}{a_n}\right)^{2/3}, \left(\frac{t}{a_n}\right)^{1/2} \right\} \right]
\end{align*}
for some constant $K$. 

If $x_0 \not \in D(\eta)$, the same arguments use to prove the $x_0 \in D(\eta)$ case hold simply with $\gamma$ replaced by $\gamma'$. The final rate would be $a'_n = n^{-1/T'}$, where $T' = 1 + d/(2\gamma') + d/(4s)$. 
\section{Proof of Theorem \ref{thm:lower_bound}}
We prove the case when $\alpha \geq \beta$, since the case $\alpha < \beta$ can be proved in a symmetric way by swapping the perturbations of $\mu_0$ and $\pi$ in a way analogous to that presented in \cite{kennedy2022minimax}. We proceed as follows
\begin{enumerate}
	\item Let $z = (y, a, x) \in \{0, 1\}^2 \times [0, 1]^d$ and $f(x)$ the density of $x$ with respect to the Lebesgue measure. We consider a data generating process such that each observation follows a distribution with density $p_{\omega, \lambda}(z)$, where $\omega \in \Omega = \{0, 1\}^m$ and $\lambda = \{0, 1\}^{2mk}$, for some $k$ and with prior $\overline\nu$ on $\lambda$. The sample of $n$ independent observations has thus density
	\begin{align*}
		p^n_\omega \equiv p^n_\omega(z_1, \ldots, z_n) = \int \prod_{i = 1}^n p_{\omega, \lambda}(z_i) d\overline\nu(\lambda)
	\end{align*} 
	Depending on $\omega$, the density will have $\tau_\omega(x) = \mu_1(x) - \mu_0(x)$ fluctuated. For each density, the $\lambda$ vector will govern the fluctuations of $\pi(x)$ and $\mu_0(x)$ and will not generally interact with $\omega$. Notice that we parametrize the density by $(\pi, \mu_0, \tau)$, so that $\mu_1(x) = \tau(x) + \mu_0(x)$.
	Crucially, we will establish that $p_{\omega, \lambda}(z)$ belongs to $\mathcal{P}$, the set of all densities compatible with assumptions \ref{assumption_margin} and \ref{def:smoothness}, so that we have
	\begin{align*}
		\inf_{\widehat\Gamma} \sup_{p \in \mathcal{P}} \E\{d_H(\widehat\Gamma, \Gamma_p)\} \geq \inf_{\widehat\Gamma} \max_{\omega} \E_{p^n_\omega}\{d_H(\widehat\Gamma, \Gamma_{\omega})\}
	\end{align*}
	where $\Gamma_\omega$ is the true upper level set when the data is sampled from $p_\omega^n$. 
	\item Under the margin assumption \ref{assumption_margin}, we rely Proposition 2.1 in \cite{rigollet2009optimal} to obtain
	\begin{align*}
		d_H(\widehat\Gamma, \Gamma_{\omega}) \gtrsim \left[\int_{(\widehat\Gamma \Delta \Gamma_{\omega}) \cap \{\tau_{\omega}(x) \neq \theta\} } f(x) dx \right]^{(1 + \xi)/ \xi}
	\end{align*}
	\item We construct a vector $\widehat\omega$ such that
	\begin{align*}
		\int_{(\widehat\Gamma \Delta \Gamma_{\omega}) \cap \{\tau_{\omega}(x) \neq \theta \}} f(x) dx \geq \frac{1}{2} \int_{(\Gamma_{\widehat\omega} \Delta \Gamma_{\omega}) \cap \{\tau_{\omega}(x) \neq \theta \} } f(x) dx
	\end{align*}
	and show that
	\begin{align*}
		\int_{(\Gamma_{\widehat\omega} \Delta \Gamma_{\omega}) \cap \{\tau_{\omega}(x) \neq \theta \}} f(x) dx = 2\text{Leb}_d\{\mathcal{S}_{hk}(x_1)\}\rho(\widehat\omega, \omega)
	\end{align*}
	where $\rho(\widehat\omega, \omega) = \sum_{i = 1}^n \one(\widehat\omega_i \neq \omega_i)$ is the Hamming distance and $\mathcal{S}_{hk}(x_1)$ is a particular set defined below. 
	
	At this point we have the following chain of inequalities:
	\begin{align*}
		\inf_{\widehat\Gamma} \sup_{p \in \mathcal{P}} \E\{d_H(\widehat\Gamma, \Gamma_p)\} & \geq \inf_{\widehat\Gamma} \max_{\omega} \E_{p_\omega^n}\{d_H(\widehat\Gamma, \Gamma_{\omega})\} \\
		& \gtrsim \inf_{\widehat\Gamma} \max_{\omega} \E_{p_\omega^n}\left(\left[\int_{(\widehat\Gamma \Delta \Gamma_{\omega}) \cap \{\tau_{\omega}(x) \neq \theta\}} f(x) dx \right]^{(1 + \xi)/ \xi}\right) \\
		& \geq \inf_{\widehat\Gamma} \max_{\omega} \left(\E_{p_n^\omega}\left[\int_{(\widehat\Gamma \Delta \Gamma_{\omega}) \cap \{\tau_{\omega}(x) \neq \theta\}} f(x) dx \right]\right)^{(1 + \xi)/ \xi} \\
		& \geq \frac{1}{2}\inf_{\widehat{\omega}} \max_{\omega} \left(\E_{p_n^\omega}\left[\int_{(\Gamma_{\widehat\omega} \Delta \Gamma_{\omega}) \cap \{\tau_{\omega}(x) \neq \theta \}} f(x) dx \right]\right)^{(1 + \xi)/ \xi} \\
		& = \left[\text{Leb}_d\{\mathcal{S}_{hk}(x_1)\}\right]^{(1+\xi)/\xi}\inf_{\widehat{\omega}} \max_{\omega} \left[\E_{p_n^\omega}\left\{\rho(\widehat\omega, \omega) \right\}\right]^{(1 + \xi)/ \xi}
	\end{align*}
	\item By Theorem 2.12 in \cite{tsybakov2004introduction}, if the Hellinger distance satisfies $H^2(p^n_{\omega'}, p^n_{\omega}) \leq 1$ for any $\omega', \omega$ such that $\rho(\omega', \omega) =1$, then
	\begin{align*}
		\inf_{\widehat\omega}\max_{\omega} \E_{p^n_\omega}\rho(\widehat\omega, \omega) \geq m\left(\frac{1}{2} - \frac{\sqrt{3}}{4} \right)
	\end{align*}
	We show that $\text{Leb}_d\{\mathcal{S}_{hk}(x_1)\} = (h/2)^d$ so that, putting everything together, we have
	\begin{align*}
		\inf_{\widehat\Gamma} \sup_{p \in \mathcal{P}} \E\{d_H(\widehat\Gamma, \Gamma_p)\} \gtrsim (h^d m)^{(1 + \xi) / \xi} 
	\end{align*}
	Choosing $h = O\left(n^{-1/(T \gamma)}\right)$ and $m = O\left(h^{-d + \gamma \xi}\right)$ yields the desired rate, where $T = 1 + d/(4s) + d/(2\gamma)$. 
	\item We verify that choosing $h = O\left(n^{-1 / (T\gamma)}\right)$, $m = O\left( h^{-d + \gamma \xi}\right)$ and $k = O\left(n^{d(\gamma - 2s) / (2s\gamma T) }\right)$ yields $H^2(p^n_{\omega'}, p^n_{\omega}) \leq 1$ for any $\omega', \omega$ such that $\rho(\omega', \omega) =1$. 
\end{enumerate}
\subsubsection{Step 1: Construction of fluctuated densities}
Let $x_1, \ldots, x_{2m}$ denote a grid of $[0, 1]^d$, for some $m$ to be chosen later, and $\mathcal{C}_{h}(x_i)$ a cube with side $h$ centered at $x_i$.  Let $\mathcal{C}_{h / k^{1/d}}(m_{ji})$, $j \in \{1, \ldots, k\}$, be a partition of the cube $\mathcal{C}_h(x_i)$ into $k$, equally-sized cubes with midpoints $m_{1i}, \ldots, m_{ki}$. Then, for $\lambda \in \{-1, 1\}^{2mk}$ and $\omega \in\{0, 1\}^{m}$, define the functions
\begin{align*}
	& \tau_{\omega}(x) = \theta + h^\gamma \sum_{i = 1}^m \left[\omega_i B\left( \frac{x - x_i}{h} \right) + (1 - \omega_i) B\left( \frac{x - x_{i + m}}{h} \right)\right]\\
	& \mu_{0\lambda} (x) = \frac{1}{2} + \left(\frac{h}{k^{1/d}}\right)^{\beta} \sum_{i = 1}^{2m} \sum_{j = 1}^k \lambda_{ij} B\left( \frac{x - m_{ji}}{h / 2k^{1/d}} \right) - \frac{\tau_{\omega}(x)}{2} \\
	& \pi_{\lambda\omega}(x) = \frac{1}{2} + \left(\frac{h}{k^{1/d}}\right)^{\alpha} \sum_{i = 1}^m \sum_{j = 1}^k \left\{(1 - \omega_i)\lambda_{ij}B\left( \frac{x - m_{ji}}{h / 2k^{1/d}} \right) + \omega_i \lambda_{i+mj} B\left( \frac{x - m_{ji + m}}{h / 2k^{1/d}} \right) \right\}\\
	& f(x) = c_{hm}\left[ 1 - \sum_{i = 1}^{2m}\one\{x \in \mathcal{C}_{2h}(x_i)\} + \sum_{i = 1}^{2m} \one\{x \in \mathcal{S}_{hk}(x_i)\} \right]\\
	& \mathcal{S}_{hk}(x_i) = \cup_{j = 1}^k \mathcal{C}_{h/2k^{1/d}} (m_{ji})
\end{align*}
where $c_{hm} = \{1 - 2(2^d - 2^{-d})h^dm\}^{-1}$ and $B(u)$ is an infinitely differentiable function such that $B(u) = 1$ for $u \in [-1/2, 1/2]^d$ and $B(u) = 0$ for $u \not\in [-1, 1]^d$. For example, $B\left( \frac{x - m_{ji}}{h / 2k^{1/d}} \right) = 0$ for any $x \not\in \mathcal{C}_{h/k^{1/d}}(m_{ji})$ and $B\left( \frac{x - m_{ji}}{h / 2k^{1/d}} \right) = 1$ for any $x \in \mathcal{C}_{h/2k^{1/d}} (m_{ji})$. 

Also notice that $f(x) = c_{hm}$ for any $x \not\in \cup_{i = 1}^{2m} \mathcal{C}_{2h}(x_i) \equiv \mathcal{X}_0$. In this region $\mathcal{X}_0$, $\tau_\omega(x) = \theta$, $\mu_{0\lambda}(x) = (1-\theta) / 2$ and$\pi_{\omega\lambda}(x) = 1/2$. Therefore, the density of each observation can be written as 
\begin{align*}
	& p_{\omega, \lambda}(z) = c_{hm}  \sum_{i = 1}^m \one\{x \in \mathcal{S}_{hk}(x_i)\} \{\omega_i p_{i \lambda}(z) + (1 - \omega_i)q_{i\lambda}(z)\} \\
	& \hphantom{p_{\omega}(z) = c_{hm}\sum_{i = 1}^m} + \one\{x \in \mathcal{S}_{hk}(x_{i + m})\} \{ (1 - \omega_i) p_{i \lambda}(z) + \omega_i q_{i\lambda}(z)\} \\ 
	& \hphantom{p_{\omega}(z) =} + \frac{1}{4} c_{hm} \one\{x \in \mathcal{X}_0\} \left\{ 1 + (2y - 1)(2a - 1)\theta \right\}
\end{align*}
where, for $s = (\alpha + \beta) / 2$,
\begin{align*}
	& p_{i\lambda}(z) = \frac{1}{4} + (y - 1/2) \left(\frac{h}{k^{1 / d}}\right)^\beta \sum_{j = 1}^k \lambda_{ij}B\left( \frac{x - m_{ji}}{h/2k^{1/d}} \right) +  (2a - 1)(2y - 1)\left\{\frac{\theta}{4} + \frac{h^\gamma}{4} B\left( \frac{x - x_i}{h} \right) \right\} \\
	& q_{i\lambda}(z) = \frac{1}{4} + \left[(a - 1/2)\left(\frac{h}{k^{1 / d}}\right)^\alpha + (y - 1/2)\left\{\left(\frac{h}{k^{1 / d}}\right)^\beta + \theta \left(\frac{h}{k^{1 / d}}\right)^{\alpha} \right\} \right]\sum_{j = 1}^k \lambda_{ij}B\left( \frac{x - m_{ji}}{h/2k^{1/d}} \right) \\
	& \hphantom{q_{i\lambda}(z)} \quad  + (2a - 1)(2y - 1) \left\{ \frac{\theta}{4} + \left(\frac{h}{k^{1 / d}}\right)^{2s}  \sum_{j = 1}^kB\left( \frac{x - m_{ji}}{h/2k^{1/d}} \right)^2  \right\}
\end{align*}
It is possible to verify that $\tau_\omega(x)$ is $\gamma$-smooth, $\mu_{0\lambda}(x)$ is $\beta$-smooth and $\pi_{\lambda\omega}(x)$ is $\alpha$-smooth. To verify the margin condition, notice that, for any $t > 0$:
\begin{align*}
	\int_{x \in \mathcal{X}: 0 < |\tau_{\omega}(x) - \theta| < t} f(x) dx & = c_{hm} \sum_{i = 1}^m  \text{Leb}_d\{x \in \mathcal{S}_{hk}(x_i) \cap 0 < |\tau_\omega(x) - \theta| < t\} \\
	& = c_{hm}m \text{Leb}_d\{x \in \mathcal{S}_{hk}(x_1) \cap 0 < |\tau_\omega(x) - \theta| < t\} \\
	& =  \frac{c_{hm}}{2^d}h^dm \one(t > h^{\gamma}) \\
	& = \frac{c_{hm}}{2^d} h^{\gamma \xi} \one(t^\xi > h^{\xi\gamma}) \\
	& \lesssim  t^\xi 
\end{align*}
because the choice of $h$ and $m$ ensures that $c_{hm}$ is upper bounded by a constant. 
\subsubsection{Step 2: Proposition 2.1 in \cite{rigollet2009optimal}}
By Proposition 2.1 in \cite{rigollet2009optimal}, under the margin assumption \ref{assumption_margin}, we have
\begin{align*}
	d_H(G_1, G_2)^{\xi / (1 + \xi)} \gtrsim \int_{G_1 \Delta G_2 \cap \tau(x) \neq \theta} f(x) dx
\end{align*}
for any $G_1$ and $G_2$. 
\subsubsection{Step 3: Reduction from $\widehat\Gamma$ to $\Gamma_{p_{\widehat\omega}}$}
Define
\begin{align*}
	\widehat\omega_i = \begin{cases}
		0 & \text{ if } \text{Leb}_d\{\widehat\Gamma \cap \mathcal{S}_{hk} (x_i)\} <  \text{Leb}_d\{\widehat\Gamma \cap \mathcal{S}_{hk}(x_{i + m})\} \\
		1 & \text{ otherwise}
	\end{cases}
\end{align*}
and notice that
\begin{align*}
	\int_{(\widehat\Gamma \Delta \Gamma_{\omega}) \cap \tau_\omega(x) \neq \theta} f(x) dx & = c_{hm}  \sum_{i = 1}^{m} \text{Leb}_d [\widehat\Gamma \Delta \Gamma_{\omega} \cap \{\mathcal{S}_{hk}(x_i) \cup \mathcal{S}_{hk}(x_{i + m})\}] \\
	& \geq \sum_{i = 1}^{m} \text{Leb}_d [\widehat\Gamma \Delta \Gamma_{\omega} \cap \{\mathcal{S}_{hk}(x_i) \cup \mathcal{S}_{hk}(x_{i + m})\}] \\
	& \geq \frac{1}{2} \sum_{i = 1}^{m} \text{Leb}_d [\Gamma_{\widehat\omega}\Delta \Gamma_{\omega} \cap \{\mathcal{S}_{hk}(x_i) \cup \mathcal{S}_{hk}(x_{i + m})\}] 
\end{align*}
The second inequality follows because $c_{hm} \geq 1$. To see why the third inequality holds, first consider the case where $\omega_i = 1$. If $\widehat\omega_i = 1$, then 
\begin{align*}
	0 = \text{Leb}_d [\Gamma_{\widehat\omega}\Delta \Gamma_{\omega} \cap \{\mathcal{S}_{hk}(x_i) \cup \mathcal{S}_{hk}(x_{i + m})\}] \leq \text{Leb}_d [\widehat\Gamma \Delta \Gamma_{\omega} \cap \{\mathcal{S}_{hk}(x_i) \cup \mathcal{S}_{hk}(x_{i + m})\}]  
\end{align*}
If $\widehat\omega_i = 0$, then it means that $\text{Leb}_d\{\widehat\Gamma \cap \mathcal{S}_{hk} (x_i)\} <  \text{Leb}_d\{\widehat\Gamma \cap \mathcal{S}_{hk}(x_{i + m})\}$ so that
\begin{align*}
	& \text{Leb}_d [\widehat\Gamma \Delta \Gamma_{\omega} \cap \{\mathcal{S}_{hk}(x_i) \cup \mathcal{S}_{hk}(x_{i + m})\}]  = \text{Leb}_d \{\widehat\Gamma^c \cap \mathcal{S}_{hk}(x_i)\} + \text{Leb}_d \{\widehat\Gamma \cap \mathcal{S}_{hk}(x_{i + m})\} \\
	& = \text{Leb}_d \{\mathcal{S}_{hk}(x_i)\} - \text{Leb}_d \{\widehat\Gamma \cap \mathcal{S}_{hk}(x_i)\} + \text{Leb}_d \{\widehat\Gamma \cap \mathcal{S}_{hk}(x_{i + m})\} \\
	& > \text{Leb}_d \{\mathcal{S}_{hk}(x_i)\} \\
	& = \frac{1}{2} \text{Leb}_d [\Gamma_{\widehat\omega}\Delta \Gamma_{\omega} \cap \{\mathcal{S}_{hk}(x_i) \cup \mathcal{S}_{hk}(x_{i + m})\}] 
\end{align*}
Similarly, consider the case where $\omega_i = 0$. If $\widehat\omega_i = 0$, then as before
\begin{align*}
	0 = \text{Leb}_d [\Gamma_{\widehat\omega}\Delta \Gamma_{\omega} \cap \{\mathcal{S}_{hk}(x_i) \cup \mathcal{S}_{hk}(x_{i + m})\}] \leq \text{Leb}_d [\widehat\Gamma \Delta \Gamma_{\omega} \cap \{\mathcal{S}_{hk}(x_i) \cup \mathcal{S}_{hk}(x_{i + m})\}]  
\end{align*}
If $\widehat\omega_i = 1$, then it means that $\text{Leb}_d\{\widehat\Gamma \cap \mathcal{S}_{hk} (x_i)\} \geq \text{Leb}_d\{\widehat\Gamma \cap \mathcal{S}_{hk}(x_{i + m})\}$ so that
\begin{align*}
	&	\text{Leb}_d [\widehat\Gamma \Delta \Gamma_{\omega} \cap \{\mathcal{S}_{hk}(x_i) \cup \mathcal{S}_{hk}(x_{i + m})\}] = \text{Leb}_d \{\widehat\Gamma \cap \mathcal{S}_{hk}(x_i)\} + \text{Leb}_d\{\widehat\Gamma^c \cap \mathcal{S}_{hk}(x_{i + m})\} \\
	& = \text{Leb}_d \{\widehat\Gamma \cap \mathcal{S}_{hk}(x_i)\} + \text{Leb}_d \{\mathcal{S}_{hk}(x_{i + m})\} -  \text{Leb}_d \{\widehat\Gamma \cap \mathcal{S}_{hk}(x_{i + m})\} \\
	& \geq \text{Leb}_d \{\mathcal{S}_{hk}(x_{i + m})\} \\
	& = \frac{1}{2} \text{Leb}_d [\Gamma_{\widehat\omega}\Delta \Gamma_{\omega} \cap \{\mathcal{S}_{hk}(x_i) \cup \mathcal{S}_{hk}(x_{i + m})\}] 
\end{align*}
We have
\begin{align*}
	\sum_{i = 1}^{m} \text{Leb}_d [\Gamma_{\widehat\omega}\Delta \Gamma_{\omega} \cap \{\mathcal{S}_{hk}(x_i) \cup \mathcal{S}_{hk}(x_{i + m})\}] = 2\text{Leb}_d\{\mathcal{S}_{hk}(x_1)\}\sum_{i = 1}^{m} \one(\widehat\omega_i \neq \omega_i)
\end{align*}
Therefore, we have that, for any $\widehat\Gamma$, it holds that
\begin{align*}
	\max_{\omega \in \Omega} \E_\omega\left\{\int_{(\widehat\Gamma \Delta \Gamma_{\omega}) \cap \tau_\omega(x) \neq \theta} f(x) dx \right\} & \geq \text{Leb}_d\{\mathcal{S}_{hk}(x_1)\}\inf_{\widehat\omega}\max_{\omega \in \Omega}\E_{p^n_\omega}\rho(\widehat\omega, \omega)
\end{align*}
\subsubsection{Step 4: Final bound}
By Theorem 2.12 in \cite{tsybakov2004introduction}, if we can show that $H^2(p^n_\omega, p^n_{\omega'}) \leq 1$ for any $\omega, \omega' \in \Omega$ such that $\rho(\omega', \omega) = 1$, then
\begin{align*}
	\inf_{\widehat\omega}\max_{\omega \in \Omega}\E_\omega\rho(\widehat\omega, \omega) \geq m\left(\frac{1}{2} - \frac{\sqrt{3}}{4}\right)
\end{align*}
Thus, 
\begin{align*}
	\inf_{\widehat\Gamma} \sup_{p \in \mathcal{P}} \E\{d_H(\widehat\Gamma, \Gamma_p)\}  & \gtrsim  (mh^d)^{{(1 + \xi) / \xi}} 
\end{align*}
because $\text{Leb}_d\{\mathcal{S}_{hk}(x_1)\} = 2^{-d}h^d$. By choosing $h = O\left( n^{-1/(T \gamma)}\right)$ and $m = O\left( h^{-d + \gamma \xi}\right)$, where
\begin{align*}
	T = 1 + d/ (4s) + d/(2\gamma),
\end{align*}
we get a lower bound of order $n^{-(1 + \xi) / T}$ as desired.

\subsubsection{Step 5: Verification of upper bound on Hellinger distance}
\begin{lemma}[\cite{robins2009semiparametric}, \cite{kennedy2022minimax}] \label{lemma:robins}
	Let $P_\lambda$ and $Q_\lambda$ denote distributions indexed by a vector $\lambda = (\lambda_1, \ldots, \lambda_k)$, and let $\mathcal{Z} = \cup_{j = 1}^k \mathcal{Z}_j$ denote a partition of the sample space. Assume:
	\begin{enumerate}
		\item $P_\lambda(\mathcal{Z}_j) = Q_\lambda(\mathcal{Z}_j) = p_j$ for all $\lambda$, and
		\item the conditional distributions $\one_{\mathcal{Z}_j} dP_\lambda / p_j$ and $\one_{\mathcal{Z}_j} dQ_\lambda / p_j$ do not depend on $\lambda_l$ for $l \neq j$, and only differ on partitions $j \in S \subseteq \{1, \ldots, k\}$.
	\end{enumerate}
	For a prior distribution $\overline\nu$ over $\lambda$, let $\overline{p} = \int p_\lambda d \overline\nu(\lambda)$ and $\overline{q} = \int q_\lambda d \overline\nu(\lambda)$, and define
	\begin{align*}
		& \delta_1 = \max_{j \in S} \sup_\lambda \int_{\mathcal{Z}_j} \frac{(p_\lambda - \overline{p})^2}{p_\lambda p_j} d\nu \\
		& \delta_2 =  \max_{j \in S} \sup_\lambda \int_{\mathcal{Z}_j} \frac{(q_\lambda -p_\lambda)^2}{p_\lambda p_j} d\nu \\
		& \delta_3 =  \max_{j \in S} \sup_\lambda \int_{\mathcal{Z}_j} \frac{(\overline{q} - \overline{p})^2}{p_\lambda p_j} d\nu
	\end{align*}
	for a dominating measure $\nu$. If $\overline{p} / p_\lambda \leq b < \infty$ and $n p_j \max (1, \delta_1, \delta_2)\leq b$ for all $j$, then
	\begin{align*}
		H^2\left(\int P_\lambda^n d \overline\nu(\lambda), Q_\lambda^n d\overline\nu(\lambda)\right) \leq Cn\sum_{j \in S} p_j \left\{ n\left(\max_{j \in S} p_j\right)(\delta_1\delta_2 + \delta_2^2) + \delta_3\right\}
	\end{align*}
	for a constant $C$ only depending on $b$. 
\end{lemma}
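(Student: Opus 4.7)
The plan is to expand the Hellinger affinity $\rho_n = \int \sqrt{d\overline{P}^n\, d\overline{Q}^n}$ to second order about the reference product measure, using the partition structure to decouple the calculation into cells. Since $H^2(\overline{P}^n, \overline{Q}^n) = 2(1-\rho_n)$, it suffices to show $1-\rho_n$ is at most half the claimed right-hand side. First I would condition on the cell assignments $(J_1, \ldots, J_n) \in \{1,\ldots,k\}^n$ that each observation receives. By condition 1 both mixtures share the same multinomial marginal with cell probabilities $p_j$, so this marginal passes through the affinity untouched. By condition 2 the within-cell conditional distributions are independent across cells and each depends only on its own component $\lambda_j$, so the conditional affinity factors as $\prod_{j=1}^{k} \rho_j(N_j)$ with $\rho_j(n_j)$ the Hellinger affinity between the $n_j$-fold mixture conditional laws on $\mathcal{Z}_j$; for $j \notin S$ one has $\rho_j \equiv 1$, and the problem reduces to controlling $\prod_{j \in S} \rho_j(N_j)$.

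Next, in each cell $j \in S$ with $n_j$ observations, I normalize the densities by $p_j$ and write $\tilde p_\lambda = p_\lambda/p_j$, $\tilde q_\lambda = q_\lambda/p_j$ on $\mathcal{Z}_j$. Using the expansion
\begin{equation*}
	\sqrt{(1+x)(1+y)} \geq 1 + \tfrac{x+y}{2} - \tfrac{1}{8}\bigl\{(x-y)^2 + x^2 + y^2\bigr\} - R,
\end{equation*}
where $R$ is a cubic-or-higher remainder, I bound $\rho_j(n_j)$ from below by integrating term by term against $\overline\nu \otimes \mu_j^{\otimes n_j}$. The assumptions $\overline p / p_\lambda \leq b$ and $np_j\max(1,\delta_1,\delta_2)\leq b$ are precisely what forces the perturbations into the regime where $|x|,|y|\leq 1$ with high enough $\overline\nu$-mass that $R$ is dominated by the quadratic terms I keep. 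The linear term vanishes because $\int_{\mathcal{Z}_j} p_\lambda = \int_{\mathcal{Z}_j} q_\lambda = p_j$ for every $\lambda$. Expanding the $n_j$-fold product generates a \emph{diagonal} sum of $n_j$ single-observation integrals and \emph{off-diagonal} $U$-statistic-type cross-sums of $n_j(n_j-1)$ paired integrals.

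The key algebraic step is the decomposition
\begin{equation*}
	q_\lambda - p_\lambda = (q_\lambda - \overline q) - (p_\lambda - \overline p) + (\overline q - \overline p),
\end{equation*}
which splits the quadratic expansions into three groups. After $\overline\nu$-averaging, the terms linear in $(p_\lambda - \overline p)$ or $(q_\lambda - p_\lambda - (\overline q - \overline p))$ vanish, so the diagonal contribution reduces to the pure-bias piece $\delta_3$ (with prefactor $n_j p_j$ from $\E N_j$), while the off-diagonal cross-sums reduce, via Cauchy--Schwarz applied to paired copies of the decomposition, to $\delta_1\delta_2$ and $\delta_2^2$ contributions (with prefactor $n_j(n_j-1)p_j^2$ from $\E N_j(N_j-1)\leq n^2 p_j^2$). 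Summing $1 - \rho_j(N_j)$ over $j \in S$ via the elementary bound $1 - \prod_j \rho_j \leq \sum_j (1-\rho_j)$ and taking the multinomial expectation yields exactly
\begin{equation*}
	H^2 \leq C n \sum_{j \in S} p_j \Bigl\{ n\bigl(\max_{j \in S} p_j\bigr)(\delta_1\delta_2 + \delta_2^2) + \delta_3 \Bigr\},
\end{equation*}
for a constant $C=C(b)$ absorbing the Taylor remainder bound.

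The main obstacle is the second step: matching each quadratic remainder to the correct $\delta$ while keeping the cubic remainder $R$ genuinely negligible. In particular, the naive diagonal term $\int (\tilde q_\lambda - \tilde p_\lambda)^2$ does not directly produce $\delta_3$; one needs the three-way decomposition above so that the $n_j$-scale survivors are precisely those involving the averaged difference $\overline q - \overline p$ (the other two pieces have mean zero under $\overline\nu$), whereas the $n_j^2$-scale survivors come only from the residual fluctuations $(p_\lambda - \overline p)$ and $(q_\lambda - p_\lambda)$. The boundedness hypotheses are used twice: once to truncate $|x|,|y| \leq 1$ in the Taylor expansion, and once to ensure that the typical cell count $N_j$ does not fluctuate badly enough to violate the multiplicative $n p_j$ bounds on each remainder.
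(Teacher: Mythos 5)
You should note first that the paper contains no proof of Lemma \ref{lemma:robins} at all: it is imported from \cite{robins2009semiparametric} and \cite{kennedy2022minimax}, so the comparison can only be with the original argument, whose architecture your sketch does reproduce (condition on the multinomial cell counts, factor the affinity across cells, expand to second order within each cell, and collect $\delta_3$ from singleton terms and $\delta_1\delta_2+\delta_2^2$ from paired terms). Two of the load-bearing steps, however, are genuinely missing. The factorization of the conditional affinity into $\prod_j \rho_j(N_j)$ does not follow from conditions 1 and 2 alone: it additionally requires the prior $\overline\nu$ to make the coordinates $\lambda_1,\ldots,\lambda_k$ independent, since otherwise the mixture of product densities does not factor across cells even after conditioning on the counts. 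That independence is not among the stated hypotheses (it does hold for the i.i.d.\ Rademacher prior used later in the paper and in the cited sources), so it needs to be stated and invoked explicitly.

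The second gap is the one you yourself call the main obstacle and then leave unresolved. The hypotheses bound $np_j$, not the realized count $N_j\sim\mathrm{Bin}(n,p_j)$, so the claim that the assumptions force $|x|,|y|\le 1$ in your Taylor expansion is not available: within a cell holding $N_j$ observations the deviations of the mixture densities from any product reference grow with $N_j$, and the reciprocal of the mixture density is only controlled up to a factor of order $b^{N_j}$ (from applying $\overline p/p_\lambda\le b$ once per observation). The standard way through is to avoid a Taylor remainder entirely, e.g.\ via $1-\rho_j\le\frac{1}{2}\int (g_P-g_Q)^2/g_P$, lower bound $g_P$ by $b^{-N_j}$ times the product of the averaged conditional densities, and then absorb the $b^{N_j}$ factors using exponential moments of the binomial ($\E\{N_j b^{N_j}\}$ and $\E\{N_j(N_j-1)b^{N_j}\}$ are of orders $np_j$ and $(np_j)^2$ up to constants depending only on $b$), which is precisely where $np_j\max(1,\delta_1,\delta_2)\le b$ enters. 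Relatedly, your bookkeeping must show that the quadratic contributions carry the prefactor $N_j(N_j-1)$ --- vanishing unless two observations share a cell --- rather than $N_j^2$: the extra diagonal piece would produce a term of order $n\sum_{j\in S}p_j\delta_2^2$, which is not dominated by $n^2(\max_{j\in S}p_j)\sum_{j\in S}p_j(\delta_1\delta_2+\delta_2^2)+n\sum_{j\in S}p_j\delta_3$ when $n\max_{j\in S} p_j$ is small. Until these points are carried out, the sketch does not establish the stated bound.
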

It remains to verify that, given our choices of $h$, $m$, and $k$, it holds that $H^2(p^n_\omega, p^n_{\omega'}) \leq 1$ for any $\omega, \omega' \in \Omega$ such that $\rho(\omega', \omega) = 1$.

Following similar calculations as \cite{kennedy2022minimax}, we will rely on their Lemma 2 (from \cite{robins2009semiparametric} and restated above in Lemma \ref{lemma:robins}) to derive a bound on the Hellinger distance. 

Let us partition the space according to $\mathcal{Z}_{ji} = \mathcal{C}_{h / 2k^{1/d}} (m_{ji}) \times \{0, 1\}^2$, $j \in \{1, \ldots, k\}$ and $i \in \{1, \ldots, 2m\}$ and $\mathcal{Z}_0 = ([0, 1]^d \times \{0, 1\}^2) / \left( \cup_i \cup_j \mathcal{Z}_{ji} \right)$. On $\mathcal{Z}_0$, we have for any $\omega$:
\begin{align*}
	p_\omega(z) = \frac{1}{4} c_{hm} \one(x \in \mathcal{X}_0) \{1 + (2y - 1)(2a - 1) \theta\} \text{ for any } z \in \mathcal{Z}_0,
\end{align*} 
so that $\int_{\mathcal{Z}_0} p_\omega(z) dz = c_{hm}(1 - 2^{d + 1} m h^d)$.

Next, notice that, because $\rho(\omega', \omega) = 1$, the densities $p^n_{\omega'}$ and $p^n_\omega$ differ only on two cubes, which, without loss of generality, we take to be $\mathcal{C}_{2h}(x_{1})$ and $\mathcal{C}_{2h}(x_{m + 1})$. This corresponds to a difference in the first coordinate of $\omega$. To keep things clear, let $\omega_1 = (1, \omega_2, \ldots, \omega_m)$ and $\omega_0 = (0, \omega_2, \ldots, \omega_m)$. This way, we have
{\tiny
	\begin{align*}
		& p_{\omega_1, \lambda}(z)  = \\
		& \quad c_{hm} \one\{x \in \mathcal{S}_{hk}(x_1)\} \left[ \frac{1}{4} + (y - 1/2) \left(\frac{h}{k^{1/d}}\right)^{\beta}\sum_{j = 1}^k \lambda_{1j}B\left( \frac{x - m_{j1}}{h/2k^{1/d}} \right) +  (2a - 1)(2y - 1)\left\{\frac{\theta}{4} + \frac{h^\gamma}{4} B\left( \frac{x - x_1}{h} \right) \right\}\right] \\
		& \quad \hphantom{=} +  c_{hm}\sum_{i= 2}^m \one\{x \in \mathcal{S}_{hk}(x_i)\} \{\omega_i p_{i \lambda}(z) + (1 - \omega_i)q_{i\lambda}(z)\}  + \one\{x \in \mathcal{S}_{hk}(x_{i + m})\} \{ (1 - \omega_i) p_{i \lambda}(z) + \omega_i q_{i\lambda}(z)\} \\ 
		& \quad \hphantom{=} + c_{hm} \one\{x \in \mathcal{S}_{hk}(x_{1+m})\}  \left(\frac{1}{4} + \left[(a - 1/2)\left(\frac{h}{k^{1/d}}\right)^{\alpha} + (y - 1/2)\left\{\left(\frac{h}{k^{1/d}}\right)^{\beta} + \theta  \left(\frac{h}{k^{1/d}}\right)^{\alpha}\right\}\right] \sum_{j = 1}^k \lambda_{1+mj}B\left( \frac{x - m_{j1+m}}{h/2k^{1/d}} \right)\right. \\
		& \quad \hphantom{c_{hm} \one\{x \in \mathcal{S}_{hk}(x_{1+m})\}  } \quad\quad\quad \left. + (2a - 1)(2y - 1) \left\{ \frac{\theta}{4} + \left(\frac{h}{k^{1/d}}\right)^{2s}  \sum_{j = 1}^kB\left( \frac{x - m_{j1+m}}{h/2k^{1/d}} \right)^2 \right\}\right) \\
		&\quad \hphantom{=}  + \frac{1}{4} c_{hm} \one\{x \in \mathcal{X}_0\} \left\{ 1 + (2y - 1)(2a - 1) \theta \right\}
	\end{align*}
}%
and $p_{\omega_0, \lambda}(z)$ similarly defined. We will apply Lemma \ref{lemma:robins} with $P_\lambda(z) = p_{\omega_1, \lambda}(z)$ and $Q_\lambda(z) = p_{\omega_0, \lambda}(z)$. 

First, notice that, for any $(i, j)$ and vector $\lambda$, we have
\begin{align*}
	\int_{\mathcal{Z}_{ji}} p_{\omega_1, \lambda}(z) dz = \int_{\mathcal{Z}_{ji}} p_{\omega_0, \lambda}(z) dz = \frac{c_{hm}h^d}{2^dk} \equiv p_{ji}
\end{align*}
Further, $\lambda_{ij}$ only affects the densities in $\mathcal{Z}_{ji}$, so the second condition in the lemma is satisfied.

Furthermore, because  $p_{\omega_1, \lambda}(z)$ only differs from $p_{\omega_0, \lambda}(z)$ on $2k$ elements of the partition, it holds that
\begin{align*}
	\sum_{(ij) \in S} p_{ji} = 	\sum_{j = 1}^{k} p_{1j} + p_{1+mj} \lesssim h^d
\end{align*}
Therefore, provided we can verify the other assumptions of Lemma \ref{lemma:robins}, the Hellinger distance is upper bounded by
\begin{align*}
	H^2\left(\int p^n_{\omega_1, \lambda} d \overline\nu(\lambda), p^n_{\omega_0, \lambda} d\overline\nu(\lambda)\right) \lesssim n^2 h^d \left(\max_{(ji) \in S}p_{ji} \right)(\delta_1\delta_2 + \delta_2^2) + nh^d\delta_3
\end{align*}
We take $\overline\nu(\lambda)$ to be a uniform prior on $\lambda$ so that $\lambda_j = \{-1, 1\}$ independently and with equal probability. Then,
\begin{align*}
	&	\overline{p}_{i}(z)  \equiv \int p_{i\lambda}(z) d\overline\nu(\lambda) = \frac{1}{4} + (2a - 1)(2y - 1)\left\{ \frac{\theta}{4} + \frac{h^\gamma}{4} B\left( \frac{x - x_i}{h} \right) \right\} \\
	& \overline{q}_{i}(z) \equiv \int q_{i\lambda}(z) d\overline\nu(\lambda)= \frac{1}{4} + (2a - 1)(2y - 1) \left[\frac{\theta}{4} + \left(\frac{h}{k^{1/d}}\right)^{2s} \sum_{j = 1}^k B\left( \frac{x - m_{ji}}{h / 2k^{1/d}} \right)^2\right] \\
	& \overline{p}_{-1}(z) \equiv c_{hm} \sum_{i = 2}^m \one\{x \in \mathcal{S}_{hk}(x_i)\} \{\omega_i \overline{p}_{i}(z) + (1 - \omega_i)\overline{q}_{i}(z)\} \\
	& \quad\quad + \one\{x \in \mathcal{S}_{hk}(x_{i + m})\} \{ (1 - \omega_i) \overline{p}_{}(z) + \omega_i \overline{q}_{i}(z)\} + \frac{1}{4} c_{hm} \one\{x \in \mathcal{X}_0\} \left\{ 1 + (2y - 1)(2a - 1) \theta \right\} \\
	& \overline{p}_{\omega_1}(z) \equiv \int p_{\omega_1, \lambda}(z) d\overline\nu(\lambda) =  \overline{p}_{-1}(z) + c_{hm} \left[ \one\{x \in \mathcal{S}_{hk}(x_{1})\} \overline{p}_{1}(z) + \one\{x \in \mathcal{S}_{hk}(x_{1 + m})\} \overline{q}_{1+m}(z) \right]\\
	& \overline{p}_{\omega_0}(z) \equiv \int p_{\omega_0, \lambda}(z) d\overline\nu(\lambda) =  \overline{p}_{-1}(z) +  c_{hm} \left[\one\{x \in \mathcal{S}_{hk}(x_{1})\} \overline{q}_{1}(z) + \one\{x \in \mathcal{S}_{hk}(x_{1 + m})\} \overline{p}_{1+m}(z)\right]
\end{align*}
Next, we bound
\begin{align*}
	\delta_1 \equiv \max_{(ij) \in S}\sup_{\lambda} \int_{\mathcal{Z}_{ji}} \frac{\{p_{\omega_1, \lambda}(z) - \overline{p}_{\omega_1}(z)\}^2}{p_{\omega_1, \lambda}(z) p_{ji}} dz
\end{align*} 
In the following, we use the bound $(a+b)^2 \leq 2a^2 + 2b^2$ and the fact that $\beta \leq \alpha$ and $k \geq 1$. We have
\begin{align*}
	& \int_{\mathcal{Z}_{ji}} \frac{\{p_{\omega_1, \lambda}(z) - \overline{p}_{\omega_1}(z)\}^2}{p_{\omega_1, \lambda}(z) p_{ij}} dz \lesssim \left(\frac{h}{k^{1/d}}\right)^{2\beta} \frac{c^2_{hm}}{p_{ij}} \int\sum_{a, y} \frac{\one\{x\in \mathcal{C}_{h/2k^{1/d}}(m_{ji})\}  }{p_{\omega_1, \lambda}(z)} dz \\
	& \int_{\mathcal{Z}_{ji+m}} \frac{\{p_{\omega_1, \lambda}(z) - \overline{p}_{\omega_1}(z)\}^2}{p_{\omega_1, \lambda}(z)p_{ij}} dz \lesssim \left(\frac{h}{k^{1/d}}\right)^{2\beta} \frac{c^2_{hm}}{p_{i+mj}} \int\sum_{a, y} \frac{\one\{x\in \mathcal{C}_{h/2k^{1/d}}(m_{ji+m})\}}{p_{\omega_1, \lambda}(z)} dz 
\end{align*}
Let $\epsilon$ and $\overline\epsilon$ be such that 
\begin{align*}
	& \min\left\{ \left(\frac{1-|\theta|}{4} - \frac{1}{2}\left(\frac{h}{k^{1/d}}\right)^{\beta} - \frac{1 + |\theta|}{2}\left(\frac{h}{k^{1/d}}\right)^{\alpha} - \left(\frac{h}{k^{1/d}}\right)^{2s} \right) \right. , \\ & \left. \quad\quad\quad \left(\frac{1-|\theta|}{4} - \frac{1}{2}\left(\frac{h}{k^{1/d}}\right)^{\beta} - \frac{ h^{\gamma}}{4} \right) \right\} \geq \epsilon \\
	& \overline\epsilon = \frac{1 + |\theta|}{4} + \max\left\{ \frac{h^\gamma}{4}, \left(\frac{h}{k^{1/d}}\right)^{2s} \right\}
\end{align*}
Then, $p_{\omega_1, \lambda}(z) \geq c_{hm}\epsilon$ and
\begin{align*}
	\delta_1 \leq \frac{1}{\epsilon}\left(\frac{h}{k^{1/d}}\right)^{2\beta} \quad \text{and} \quad \frac{\overline{p}_{\omega_1}(z) }{p_{\omega_1, \lambda}(z)} \leq 1 \lor \frac{\overline\epsilon}{\epsilon}.
\end{align*}
Next, suppose $h^{\gamma - 2s}= 4k^{-2s / d}$. We have
\begin{align*}
	\delta_2 & \equiv \max_{ij}\sup_{\lambda} \int_{\mathcal{Z}_{ji}} \frac{\{p_{\omega_1, \lambda}(z) - p_{\omega_0, \lambda}(z)\}^2}{p_{1\lambda}(z) p_{ji}} dz \\
	& = \max_j \sup_{\lambda} \int_{\mathcal{Z}_{j1}} \frac{\{p_{\omega_1, \lambda}(z) - p_{\omega_0, \lambda}(z)\}^2}{p_{\omega_1, \lambda}(z) p_{j1}} dz
\end{align*} 
by symmetry and
\begin{align*}
	\int_{\mathcal{Z}_{j1}} \frac{\{p_{\omega_1, \lambda}(z) - p_{\omega_0, \lambda}(z)\}^2}{p_{\omega_1,\lambda}(z) p_{j1}} dz \lesssim \frac{1}{\epsilon}\left(\frac{h}{k^{1/d}}\right)^{2\alpha}
\end{align*}
because for any $i$: 
\begin{align*}
	& B\left( \frac{x - m_{ji}}{h / 2k^{1/d}} \right) = B\left( \frac{x - x_{i}}{h} \right) = 1 \quad \text{ if } x \in \cup_j \mathcal{C}_{h/2k^{1/d}}(m_{ji})
\end{align*}
Finally, again if $h^{\gamma - 2s}= 4k^{-2s / d}$, 
	\begin{align*}
		& \overline{p}_{\omega_0}(z)  - \overline{p}_{\omega_1}(z) = \\
		& = c_{hm} \one\{x \in \mathcal{S}_{hk}(x_1)\} (2a - 1)(2y - 1) \left\{\left(\frac{h}{k^{1/d}}\right)^{2s}  \sum_{j = 1}^k B\left( \frac{x - m_{j1}}{h / 2k^{1/d}} \right)^2 -  \frac{h^\gamma}{4} B\left( \frac{x - x_{1}}{h} \right) \right\} \\
		& - c_{hm} \one\{x \in \mathcal{S}_{hk}(x_{1 + m})\} (2a - 1)(2y - 1) \left\{\left(\frac{h}{k^{1/d}}\right)^{2s} \sum_{j = 1}^k B\left( \frac{x - m_{j{1 + m}}}{h / 2k^{1/d}} \right)^2 -  \frac{h^\gamma}{4} B\left( \frac{x - x_{1+m}}{h} \right) \right\} \\
		& = 0
	\end{align*}
By Lemma \ref{lemma:robins}, we conclude that
\begin{align*}
	H^2(p^n_{\omega_1}, p^n_{\omega_0}) & \leq \frac{C}{\epsilon^2}\frac{n^2 h^{2d}}{k} \left\{\left(\frac{h}{k^{1/d}}\right)^{4s} + \left(\frac{h}{k^{1/d}}\right)^{4\alpha}\right\} \leq \frac{2C}{\epsilon^2} \cdot n^2 h^{2d + 4s} k^{-1 - 4s/d}
\end{align*}
because $\alpha \geq s$. This bound on the Hellinger distance actually holds for any $p^n_{\omega'}$ and $p^n_{\omega}$ such that $\rho(\omega', \omega) = 1$. Finally, recall that $\xi \gamma \leq d$ by assumption. Set 
\begin{align*}
	&m = c_mh^{-d + \xi \gamma}, \quad  k = c_k n^{\frac{d}{2s} \cdot \frac{\gamma - 2s}{\gamma} \cdot \frac{1}{1 + d/(4s) + d/(2\gamma)}}, \quad h = \left(4c_k^{-\frac{2s}{d}}\right)^{\frac{1}{\gamma - 2s}}n^{-\frac{1}{\gamma\{1 + d/(4s) + d/(2\gamma)\}}}
\end{align*}
Notice that this choice enforces $h^{\gamma - 2s}= 4k^{-2s / d}$. Therefore,
\begin{align*}
	n^2 h^{2d + 4s} k^{-1-4s/d} & = \left(4c_k^{-\frac{2s}{d}}\right)^{\frac{2d + 4s}{\gamma - 2s}} \cdot c_k^{-1 - 4s / d} \cdot n^2 \cdot n^{-\frac{2d + 4s}{\gamma\{1 + d/(4s) + d/(2\gamma)\}}} \cdot n^{- \frac{d}{2s} \cdot \frac{\gamma - 2s}{\gamma} \cdot \frac{1 + 4s/d }{1 + d/(4s) + d/(2\gamma)}} \\
	& = \left(4c_k^{-\frac{2s}{d}}\right)^{\frac{2d + 4s}{\gamma - 2s}} \cdot c_k^{-1 - 4s / d}
\end{align*}
We can choose $c_k$ large enough so that $k \geq 1$, $h \leq1$ and the leading constant in the equation above is less than $\epsilon^2 / (2C)$. This way, the bound on the Hellinger distance is less than or equal to 1. Finally, we verify that $c_{hm}$ is finite. We need $2m$ disjoint cubes with sides equal to $2h$, so $m$ needs to satisfy $m \leq (2h)^{-d} / 2$ or, equivalently, $c_mh^{\gamma\xi} \leq 2^{-d-1}$. Because we can choose $h \leq 1$, choosing $c_m = 2^{-d - 2}(2^d - 2^{-d})^{-1}$ satisfies this requirement and yields
\begin{align*}
	& c_{hm} = \{1 - 2(2^d - 2^{-d}) h^d m)\}^{-1} \leq \{1 - 2(2^d - 2^{-d}) c_m)\}^{-1} = \left\{1 - \frac{1}{2^{d + 1}}\right\}^{-1} \\
	& \implies 1 \leq c_{hm} \leq \frac{4}{3}
\end{align*}
\end{document}